\let\cref@old@stepcounter\stepcounter
\def\stepcounter#1{%
  \cref@old@stepcounter{#1}%
  \cref@constructprefix{#1}{\cref@result}%
  \@ifundefined{cref@#1@alias}%
    {\def\@tempa{#1}}%
    {\def\@tempa{\csname cref@#1@alias\endcsname}}%
  \protected@edef\cref@currentlabel{%
    [\@tempa][\arabic{#1}][\cref@result]%
    \csname p@#1\endcsname\csname the#1\endcsname}}
\newtheorem{customtheorem}{Theorem}
\definecolor{darkgreen}{RGB}{47,182,56}
\providecommand{\corollaryname}{Corollary}
\providecommand{\claimname}{Claim}
\providecommand{\definitionname}{Definition}
\providecommand{\lemmaname}{Lemma}
\providecommand{\notationname}{Notation}
\providecommand{\remarkname}{Remark}
\providecommand{\problemname}{Problem}
\providecommand{\propositionname}{Proposition}
\providecommand{\examplename}{Example}
\providecommand{\theoremname}{Theorem}
\providecommand{\conjecturename}{Conjecture}
\providecommand{\experimentname}{Experiment}
\renewcommand{\cite}[1]{{\color{red}USE CITEP}}
\DeclareMathAlphabet{\mathpzc}{OT1}{pzc}{m}{it}
\DeclareMathSymbol{\shortminus}{\mathbin}{AMSa}{"39}
\newcommand{\R}{\mathbb{R}}
\newcommand{\mc}{\mathcal}
\newcommand{\twoPlLeduc}{{\tt biased\_2p\_leduc}}
\newcommand{\Rho}{\mathrm{P}}
\newcommand{\vv}[1]{%
 \boldsymbol{#1}
}
\newcommand{\defeq}{\ensuremath{\stackrel{\rm def}{=}}}
\newcommand{\inftynorm}[1]{\left\lVert#1\right\rVert_\infty}  
\newcommand{\inner}[2]{\langle #1, #2 \rangle}    
\newcommand{\opp}{{\textnormal{-} i}}             
\newcommand{\br}{{\mathpzc{br}}}                  
\newcommand{\defword}[1]{\textbf{\boldmath{#1}}}
\newcommand{\priHist}{\ensuremath{s_\pl}}
\newcommand{\priTree}{\ensuremath{\mc S_\pl}}
\newcommand{\privTree}{\ensuremath{\mc S_\pl}}
\newcommand{\bandit}{\ensuremath{m}}
\newcommand{\task}{\ensuremath{g}}
\newcommand{\tasks}{\ensuremath{G}}
\newcommand{\predictor}{\ensuremath{\pi}}
\newcommand{\prediction}{\ensuremath{\vv{p}}}
\newcommand{\reward}{\ensuremath{\vv{x}}}
\newcommand{\regret}{\ensuremath{\vv{r}}}
\newcommand{\cumregret}{\ensuremath{\vv{R}}}
\newcommand{\predregret}{\ensuremath{\vv{\xi}}}
\newcommand{\actionset}{\mathcal{A}}
\newcommand{\CFR}{{\sc CFR}}
\newcommand{\PCFR}{{\sc PCFR}}
\newcommand{\SPCFR}{{\sc SPCFR}}
\newcommand{\NPCFR}{{\sc NPCFR}}
\DeclareMathOperator{\EX}{\mathbb{E}}
\DeclareMathOperator*{\argmax}{arg\,max}
\definecolor{insignwin}{rgb}{0.5,0.5,0.5}
\definecolor{insignlose}{rgb}{0.5,0.5,0.5}
\definecolor{win}{rgb}{0,0,0}
\definecolor{lose}{rgb}{0,0,0}
\newcommand{\strategy}{\ensuremath{\bm{\sigma}}}
\newcommand{\mstrategy}{\ensuremath{\bm{s}}}
\newcommand{\pstrategy}{\ensuremath{\bm{\rho}}}
\newcommand{\pstrategySet}{\ensuremath{\bm{\Rho}}}
\newcommand{\jointstrategy}{\ensuremath{\bm{\delta}}}
\newcommand{\strategySet}{\ensuremath{\bm{\Sigma}}}
\newcommand{\pl}{i}
\newcommand{\worldstate}{\ensuremath{\mc W}}
\newcommand{\jointaction}{\ensuremath{\mc A}}
\newtheorem{theorem}{\protect\theoremname}
\newtheorem{definition}[theorem]{\protect\definitionname}
\newtheorem{lemma}[theorem]{\protect\lemmaname}
\title{Approximating Nash Equilibria in General-Sum Games via Meta-Learning}
\author{%
  David Sychrovsk\'{y} \\
  Charles University\\
  Prague, Czechia \\
  \texttt{sychrovsky@kam.mff.cuni.cz} \\
  \And
  Christopher Solinas \\
  University of Alberta\\
  Edmonton, Canada \\
  \texttt{solinas@ualberta.ca} \\
  \And
  Revan MacQueen \\
  Alberta Machine Intelligence Institute\\
  Edmonton, Canada \\
  \texttt{revan.macqueen@amii.ca} \\
  \And
  Kevin Wang \\
  Brown University \\
  Rhode Island, United States \\
  \texttt{kevin\_a\_wang@brown.edu} \\
  \AND
  James R.\ Wright \\
  University of Alberta\\
  Edmonton, Canada \\
  \texttt{james.wright@ualberta.ca} \\
  \And
  Nathan R. Sturtevant \\
  University of Alberta\\
  Edmonton, Canada \\
  \texttt{nathanst@ualberta.ca} \\
  \And
  Michael Bowling \\
  University of Alberta\\
  Edmonton, Canada \\
  \texttt{mbowling@ualberta.ca} \\
}
\begin{document}

\maketitle

\begin{abstract}


Nash equilibrium is perhaps the best-known solution concept in game theory. 
Such a solution assigns a strategy to each player which offers no incentive to unilaterally deviate.
While a Nash equilibrium is guaranteed to always exist, the problem of finding one in general-sum games is PPAD-complete, generally considered intractable.
Regret minimization is an efficient framework for approximating Nash equilibria in two-player zero-sum games.
However, in general-sum games, such algorithms are only guaranteed to converge to a coarse-correlated equilibrium (CCE), a solution concept where players can correlate their strategies.
In this work, we use meta-learning to minimize the correlations in strategies produced by a regret minimizer. This encourages the regret minimizer to find strategies that are closer to a Nash equilibrium.
The meta-learned regret minimizer is still guaranteed to converge to a CCE, but we give a bound on the distance to Nash equilibrium in terms of our meta-loss.
We evaluate our approach in general-sum imperfect information games.
Our algorithms provide significantly better approximations of Nash equilibria than state-of-the-art regret minimization techniques.
\end{abstract}

\section{Introduction}
\label{sec: intro}

The Nash equilibrium is one of the most influential solution concepts in game theory. 
A strategy profile is a Nash equilibrium if it has the guarantee that no player can benefit by unilaterally deviating from it.
The robustness of this guarantee means that Nash equilibria have applications in many domains ranging from economics~\citep{vickrey1961counterspeculation,milgrom1982theory} to machine learning~\citep{goodfellow2014generative}.
Finding an efficient algorithm for computing Nash equilibria has attracted much attention~\citep{rosenthal1973class,monderer1996potential,kearns2001graphical,cai2011minimax,munoz2012polynomial}. 
However, it was shown that, in its full generality, finding a Nash equilibrium is PPAD-complete~\citep{papadimitriou1994complexity,daskalakis2009complexity}.
Many related decision problems, such as `Is a given action in the support of a Nash equilibrium?', are NP-complete~\citep{gilboa1989nash}.

Despite these negative results, computing Nash equilibria in special classes of games, in particular two-player zero-sum games, is tractable.
In this setting, regret minimization has become the dominant approach for finding Nash equilibria~\citep{nisan2007algorithmic}.
This framework casts each player as an independent online learner who repeatedly interacts with the game, selecting strategies according to dynamics that lead to sublinear growth of their accumulated \textit{regret}.
Regret minimizers guarantee convergence to Nash equilibria in two-player zero-sum games, and are the basis for many significant results
in imperfect information games~\citep{bowling2015heads,DeepStack,brown2018superhuman,brown2020combining,Pluribus,schmid2023student}.

Outside the two-player zero-sum setting, regret minimization algorithms are no longer guaranteed to converge to a Nash equilibrium.
Instead, a regret minimizer's empirical distribution of play converges to a coarse-correlated equilibrium (CCE)~\citep{hannan1957approximation, hart2000simple}. 
The CCE is a relaxed equilibrium concept, which gives a distribution over the \emph{outcomes} of the game such that it isn't beneficial for any player to deviate from it. 
If this distribution is uncorrelated, meaning it can be expressed as a profile of independent strategies, it is also a Nash equilibrium.
As such, Nash equilibria form a subset of CCEs, for which the outcome distribution can be marginalized into strategies of the individual players. 
The degree to which a CCE is correlated, or how much a player can infer about the actions of other players given their action, can be formalized by total correlation~\citep{watanabe1960information}. 

A recently proposed \emph{learning not to regret} framework allows one to meta-learn a regret minimizer to optimize a specified objective, while keeping regret minimization guarantees~\citep{sychrovsky2024learning}.
Their goal was to accelerate the empirical convergence rate on a distribution of black-box tasks.
In this work, we meta-learn predictions that optimize an alternative meta-objective: minimizing correlation in the players' strategies.
The resulting algorithm is still guaranteed to converge to a CCE, and is meta-learned to empirically converge to a Nash equilibrium on a distribution of interest.
If the support of the distribution doesn't include all general-sum games, the problem of finding Nash may be tractable even if P$\neq$PPAD.
We further show this approach is sound by providing a bound on the distance to a Nash equilibrium in terms of our meta-objective.
We evaluate our approach in general-sum imperfect information games.
Our algorithms provide significantly better approximations of Nash equilibria than state-of-the-art regret minimization techniques.

\subsection{Related Work}

The Nash equilibrium is one of the oldest solution concepts in game theory. 
Thanks to its many appealing properties, developing efficient algorithms for approximating Nash equilibria has seen much attention~\citep{kontogiannis2009polynomial,daskalakis2009note,daskalakis2007progress,bosse2010new,deligkas2023polynomial,li2024survey}.
Furthermore, it was shown that, unless P~$=$~NP, polynomial algorithms for finding all Nash equilibria cannot exist~\citep{gilboa1989nash}.
This negative result suggests that there are games for which finding a Nash equilibrium requires enumerating all possible strategies --- an amount exponential in the number of actions.

The Lemke-Howson algorithm~\citep{lemke1964equilibrium} is one such algorithm, which provably finds a Nash equilibrium of two-player general-sum games in normal-form. 
It works by constructing a path on an abstract polyhedron, which is guaranteed to terminate at the Nash equilibrium.
Similar to the simplex method~\citep{murty1984linear}, the path may be exponentially long in some games.
However, such games are empirically rare~\citep{codenotti2008experimental}.
Several modifications of the Lemke-Howson algorithm were proposed to improve its empirical performance~\citep{codenotti2008experimental,gatti2012combining}.
However, the algorithm cannot work with games in extensive-form.
When converted to normal-form, the size of the game increases exponentially, making these algorithms scale very poorly.

Regret minimization is a powerful framework for online convex optimization \citep{zinkevich2003online, nisan2007algorithmic}, with regret matching as one of the most popular algorithms in game applications \citep{hart2000simple}.
Counterfactual regret minimization enables the use of regret matching in sequential decision-making, by decomposing the full regret to individual states~\citep{zinkevich2008regret}.
In two-player zero-sum games, regret minimization algorithms are guaranteed to converge to a Nash equilibrium. 
Many prior works explored modifications of regret matching to speed up its empirical performance in two-player zero-sum games, such as
CFR$^+$ \citep{tammelin2014solving},
Linear CFR \citep{brown2019deep},
PCFR$^+$ \citep{farina2023regret},
Discounted CFR \citep{brown2019solving},
and their hyperparameter-scheduled counterparts~\citep{zhang2024faster}.

Despite the lack of theoretical guarantees in general-sum games, regret minimization algorithms empirically converge close to Nash equilibria on many standard benchmarks~\citep{risk2010using,gibson2014regret,Pluribus}.
Recently, some theoretical advancements have been made to understand this empirical performance. 
If the game has a special `pair-wise zero-sum' structure, then the regret minimizers are guaranteed to find a Nash equilibrium~\citep{cai2011minimax}. 
Moreover, if a game is `close' to such `pair-wise zero-sum' games, the regret minimzers converge `close' to a Nash equilibrium~\citep{macqueen2023guarantees}.

A recently introduced extension of regret matching, predictive regret matching~\citep{farina2021faster}, forms a continuous class of algorithms with regret minimization guarantees.
Subsequently,~\citep{sychrovsky2024learning} introduced the `learning not to regret' framework---a way to meta-learn the predictions while keeping regret minimization guarantees.
Their aim was to accelerate convergence on a class of oblivious environments.

\subsection{Main Contribution}

In this work, we extend the \emph{learning not to regret} framework to encourage convergence to Nash equilibria in general-sum games.
Our approach penalizes correlations in the average empirical strategy profile found by the regret minimizer.
While our meta-learned algorithms do not guarantee convergence to a Nash equilibrium, we find that our algorithms empirically converge to CCEs with low correlations in the players' strategies, and provide significantly better approximations of Nash equilibria than prior regret minimization algorithms.

We demonstrate the feasibility of our approach by conducting experiments in multiplayer general-sum games.
We start with a distribution of normal-form games, where prior regret minimization algorithms overwhelmingly converge to a strictly correlated CCE.
Next, we shift our attention to Leduc poker, a standard extensive-form imperfect information benchmark.
We show that, after a small modification of the rules (to make the game general-sum), prior regret minimizers no longer reliably converge to a Nash equilibrium. 
When trained on this distribution, our meta-learning framework produces a regret minimizer that reach significantly closer to a Nash equilibrium.
Finally, we demonstrate that our framework can even be used to obtain better approximations of a Nash equilibrium on a single general-sum game rather than just a family of games.
We choose the three-player Leduc poker, obtaining, to our best knowledge, the closest approximation of a Nash equilibrium of this game.

\section{Preliminaries}
\label{sec: preliminiaries}

We briefly introduce the formalism of incomplete information games we will use.
Next, we describe regret minimization, a general online convex optimization framework.
Finally, we discuss how regret minimization can be used to find equilibria of these games.

\subsection{Games} 
We work within a formalism based on factored-observation stochastic games~\citep{FOG} with terminal utilities.

\begin{definition}A game is a tuple
$\langle \mc N, \mc W, w^0, \mc A, \mc T, u, \mc O \rangle$,
where
\begin{itemize}
\itemsep-0.25em 
  \item $\mc N = \{1,\dots, n\}$ is a \defword{player set}. We use symbol $\pl$ for a player and $\opp$ for its opponents.
  \item $\worldstate$ is a set of \defword{world states} and $w^0 \in \worldstate$ is a unique initial world state.
  \item $\jointaction = \jointaction_1 \times \dots \times \jointaction_n$ is a space of \defword{joint actions}. 
  A world state with no legal actions is \defword{terminal}. We denote the set of terminal world states as $\mc Z$.
  \item After taking a (legal) joint action $a$ at $w$, the \defword{transition function} $\mc T$ determines the next world state $w'$, drawn from the probability distribution $\mc T(w,a) \in \Delta (\mc W)$.
  \item $u_\pl(z)$ is the \defword{utility} player $\pl$ receives when a terminal state $z\in\mc Z$ is reached. 
  \item $\mc O = (\mc O_1,\dots, \mc O_n)$ 
  is the \defword{observation function} specifying both the private and public observation that players receive upon the state transition.
\end{itemize}
\end{definition}

The space $\priTree$ of all action-observation sequences can be viewed as the infostate tree of player $\pl$.
A \defword{strategy profile} is a tuple $\vv{\strategy} = (\vv{\strategy}_1,\dots, \vv{\strategy}_n)$, where each player's \defword{strategy} $\vv{\strategy}_\pl : \priHist \in \priTree \mapsto \vv{\strategy}_\pl(\priHist) \in \Delta^{|\jointaction_\pl(\priHist)|}$ specifies the probability distribution from which player $\pl$ draws their next action conditional on having information $\priHist$. 
We denote the space of all strategy profiles as $\strategySet$. A \defword{pure strategy} $\vv{\pstrategy_i}$ is a deterministic strategy: i.e. $\sigma_\pl(\priHist,a_\pl) = 1$ for some $a_i \in \jointaction_\pl(\priHist)$. A selection of pure strategies for all players $\vv{\pstrategy} = (\vv{\pstrategy_1}, \dots \vv{\pstrategy_n})$ is a \defword{pure strategy profile} and the set of all pure strategy profiles is $\vv{\pstrategySet}$.

Let $\Delta(X)$ denote the set of distributions over a domain $X$.  A \defword{joint strategy profile} $\jointstrategy\in \Delta(\pstrategySet)$ is a distribution over pure strategy profiles. As such, every strategy profile is also a joint strategy profile. 
However, the opposite is not true in general: only \emph{some} joint strategy profiles are ``marginalizable'' into an equivalent strategy profile, while those with correlations between players' strategies are not.


The expected \defword{utility} under a joint strategy profile $\jointstrategy$ is $u_\pl(\jointstrategy) = \EX_{z \sim \jointstrategy}{u_\pl(z)}$, where the expectation is over the terminal states $z \in \mc Z$ and their reach probability under $\jointstrategy$. 
The \defword{best-response} to the joint strategy of the other players is $\br(\jointstrategy_{\opp}) \in \argmax_{\strategy_\pl} u_\pl( \strategy_i,\,\jointstrategy_{\opp} )$, where $\jointstrategy_{\opp}(\pstrategy_{\opp}) = \sum_{\pstrategy_\pl\in\jointaction_\pl}\jointstrategy(\pstrategy_\pl, \pstrategy_{\opp})$. 

We may measure the distance of a strategy profile $\strategy$ from a Nash equilibrium by its \defword{NashGap}: the maximum gain any player can obtain by unilaterally deviating from $\strategy$
\begin{equation*}
    {\rm NashGap} (\strategy) = 
    \max_{\pl \in \mc N}
    \left[u_\pl(\br(\strategy_{\opp}), \strategy_\opp) - u_\pl(\strategy)\right].
\end{equation*}
A strategy profile is a Nash equilibrium if its NashGap is zero.\footnote{This is because then the individual strategy profiles are mutual best-responses.} 

The coarse correlated equilibrium (CCE)~\citep{moulin1975strategically, nisan2007algorithmic} is a generalization of Nash equilibrium to joint strategy profiles that allows for correlation between players' strategies. 
A CCE is a joint strategy profile such that any unilateral deviation by any player doesn't increase that player's utility, while other players continue to play according to the joint strategy. 
We define the \defword{CCE Gap} as 
\begin{equation*}
    {\rm CCE\ Gap} (\jointstrategy) = 
    \max_{\pl \in \mc N}
    \left[u_\pl(\br(\jointstrategy_{\opp}), \jointstrategy_\opp) - u_\pl(\jointstrategy)\right].    
\end{equation*}
A joint strategy profile $\jointstrategy$ is a CCE if and only if its ${\rm CCE\ Gap}$ is non-negative.
If a joint strategy profile has zero CCE Gap, and can be written in terms of its marginal strategies for each player $\jointstrategy = (\strategy_1,\dots,\strategy_n)$, then its marginals $\strategy_\pl$ are a Nash equilibrium.
In general, CCEs do not admit this player-wise decomposition of the joint strategy profile---see Section~\ref{ssec: matrix games} for an example.

\subsection{Regret Minimization}
An \defword{online algorithm} $\bandit$ for the regret minimization task repeatedly interacts with an \defword{environment} through available actions $\actionset_i$.
The goal of a regret minimization algorithm is to maximize its hindsight performance (i.e., to minimize regret). 
For reasons discussed in the following section, we will describe the formalism from the point of view of player $\pl$ acting at an infostate $s\in \mc S_\pl$.

Formally, at each step $t\le T$, the algorithm submits a \defword{strategy} $\strategy^t_\pl(s) \in \Delta^{|\actionset_\pl(s)|}$. 
Subsequently, it observes the expected \defword{reward} $\reward^t_\pl \in \R^{|\actionset_\pl(s)|}$ at the state $s$ for each of the actions from the environment, which depends on the strategy in the rest of the game.
The difference in reward obtained under $\strategy_\pl^t(s)$ and any fixed action strategy is called the \defword{instantaneous regret} $\regret_\pl(\strategy^t, s) = \reward^t_\pl(\strategy^t) - \inner{\strategy^t_\pl(s)}{\reward^t_\pl(\strategy^t)} \vv{1}$. 
The \defword{cumulative regret} throughout time $t$ is $\cumregret^t_\pl(s) = 
    \sum_{\tau=1}^t \regret_\pl(\strategy^{\tau}, s).$

The goal of a regret minimization algorithm is to ensure that the regret grows sublinearly for any sequence of rewards.
One way to do that is for $\bandit$ to select $\strategy^{t+1}_\pl(s)$ proportionally to the positive parts of $ \cumregret^t_\pl(s)$, known as regret matching~\citep{blackwell1956analog}.

\subsection{Connection Between Games and Regret Minimization}
In normal-form games, or when $\privTree$ is a singleton, if the \defword{external regret} $R^{\text{ext}, T}_\pl = \max_{a\in\jointaction_\pl}\cumregret^T_\pl(a)$
grows as $\mathcal{O}(\sqrt{T})$ for all players,
then the empirical average joint strategy profile $\overline{\jointstrategy}^T \defeq \frac{1}{T}\sum_{t=1}^T \strategy_1^t \times \dots \times \strategy_n^t$ converges to a CCE as $\mathcal{O}(1/\sqrt{T})$~\citep{nisan2007algorithmic}. 

In extensive-form games, in order to obtain the external regret, we would need to convert the game to normal-form.
However, the size of the normal-form representation is exponential in the size extensive-form representation. 
Thankfully, one can upper-bound the normal-form regret by individual (i.e. per-infostate) \defword{counterfactual regrets}~\citep{zinkevich2008regret}
\begin{equation*}
    \label{eq: cfr bound}
    \sum_{\pl\in\mc N}R^{\text{ext}, T}_\pl \le 
    \sum_{\pl\in\mc N}\sum_{s\in\mc S_\pl} \max\left\{\inftynorm{\cumregret^T_\pl(s)}, 0\right\}.
\end{equation*}
The counterfactual regret is defined with respect to the \defword{counterfactual reward}.
At an infostate $s\in \privTree$, the counterfactual rewards measure the expected utility the player would obtain in the game when playing to reach $s$.
In other words, it is the expected utility of $\pl$ at $s$, multiplied by the opponent's and chance's contribution to the probability of reaching $s$.
We can treat each infostate as a separate environment, and minimize their counterfactual regrets independently.
This approach converges to a CCE~\citep{zinkevich2008regret}.

In two-player zero-sum games, the empirical average strategy $\overline{\strategy}$ is guaranteed to converge to a Nash equilibrium~\citep{zinkevich2008regret}.
In fact, any CCE of a two-player zero-sum game is guaranteed to be marginalizable~\citep{nisan2007algorithmic}.
Intuitively, any correlations will be beneficial for one of the players, which makes it irrational for the opponent to follow it.


\begin{algorithm}
    \caption{Neural Predictive Regret Matching~\citep{sychrovsky2024learning}}
    \label{algo:nprm}
 
    \DontPrintSemicolon
    $\cumregret^0 \gets \vv{0} \in \R^{|A|}
    ,\ \reward^0 \gets \vv{0} \in \R^{|A|}$\;
    $\vv{e}_s \gets {\rm embedding\ of\ state\ } s
    $\;
    \Hline{}
    \Fn{\textsc{NextStrategy()}}{
        $\displaystyle\predregret^t \gets [\cumregret^{t-1} + \prediction^{t}]^+$\;
        \textbf{if} $\|\predregret^t\|_1 > 0$\;  
        \hspace{0.2cm}
            \textbf{return} $\strategy^t \gets \predregret^t \ /\ \|\predregret^t\|_1$\;
        \textbf{return} $\strategy^t \gets $ arbitrary point in $\Delta^{|A|}$\hspace*{-1cm}\;
    }
    \Fn{\textsc{ObserveReward(}$\reward^{t}, \vv{e}_s$)}{
        $\cumregret^t \gets \cumregret^{t-1} + \regret(\strategy^{t}, \reward^t)$\;
        $\prediction^{t+1} \gets \alpha(\regret(\strategy^{t}, \reward^t) + \predictor(\regret(\strategy^{t}, \reward^t), \cumregret^t, \vv{e}_s|\theta))$\;
        \vspace{.5mm}
    }
\end{algorithm}


\section{Meta-Learning Framework}
\label{sec: meta-learning framework}

We aim to find a regret minimization algorithm $\bandit_\theta$ with some parameterization $\theta$ which tends to converge close to a Nash equilibrium on a distribution of games $\tasks$.
In this section, we describe the predictive regret minimization algorithm over which we meta-learn.
Then, we formalize our optimization objective for the meta-learning.

\subsection{Neural Predictive Counterfactual Regret Minimization (\NPCFR)}
\label{ssec: npcfr}

We work in the learning not to regret framework~\citep{sychrovsky2024learning}, which is built on the predictive regret matching (PRM)~\citep{farina2021faster}.
PRM is an extension of regret matching~\citep{hart2000simple} which additionally uses a predictor about future reward.
PRM provably enjoys $\mathcal{O}(\sqrt{T})$ bound on the external regret for arbitrary bounded predictions~\citep{farina2021faster}.

Neural predictive regret matching is an extension of PRM which uses a predictor $\pi$, parameterized by a neural network $\theta$~\citep{sychrovsky2024learning}; see Algorithm~\ref{algo:nprm}. 
At each step $t$ and each infostate $s\in\mc S_\pl, \pl\in\mc N$, the predictor $\predictor(\cdot|\theta)$ makes a prediction about the next observed regret $\regret^{t+1}$.
This prediction is then used when selecting the strategy, as if that regret was in fact observed.
The strategy is then selected as if this predicted regret was observed. 
Network parameters $\theta$ are shared across all infostates $s\in\mc S_\pl$, $\pl\in\mc N$, and
$\alpha\in\mathbb{R}$ is a hyperparameter, see Appendix~\ref{app: training details} for more details. 
The $\vv{e}_s$ denotes some embedding of the infostate $s$; see Section~\ref{sec: experiments}.

Since we make the predictions bounded, the predictor can be meta-learned to minimize a desired objective while maintaining the regret minimization guarantees~\citep{sychrovsky2024learning}, which makes the algorithm converge to a CCE.
We use a novel meta-objective, which is introduced in the following section, to encourage the algorithm to converge to a Nash equilibrium.
Applying the algorithm to counterfactual regrets at each infostate allows us to use it on extensive-form games.
This setup is refer to as neural predictive counterfactual regret minimization (NPCFR).

\subsection{Meta-Loss Function} \label{subsec:meta-loss function}

\begin{figure*}
    \centering
    \includegraphics[width=\textwidth]{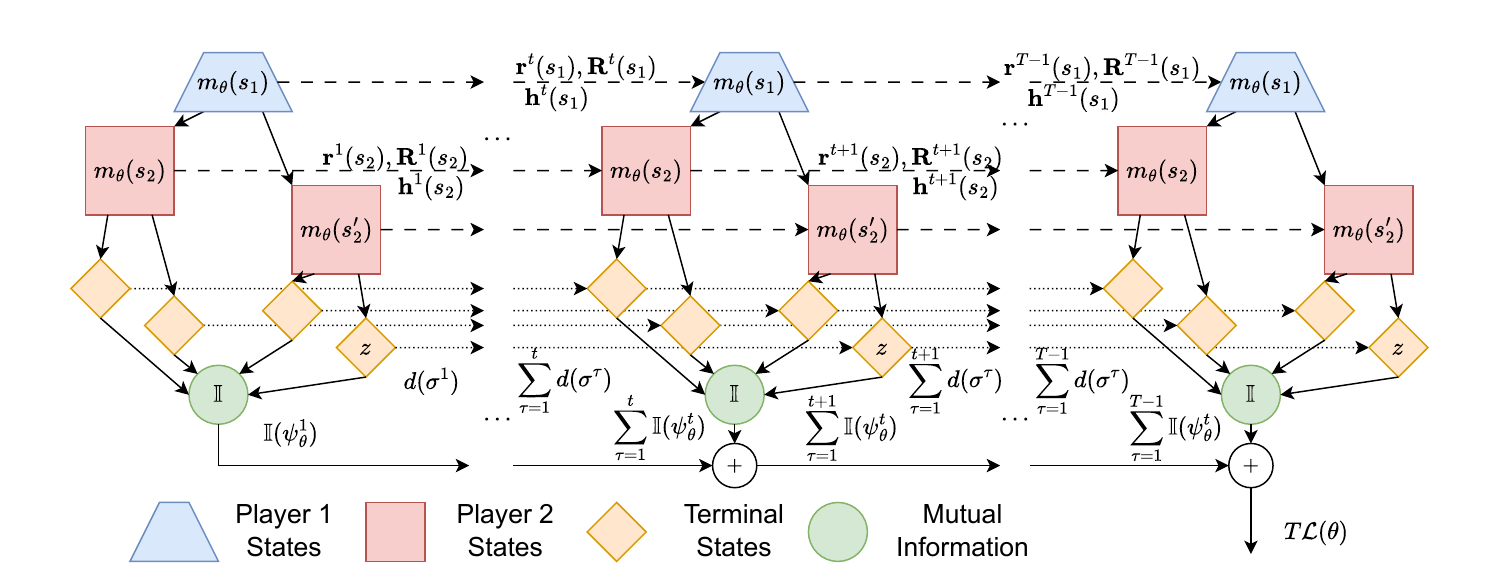}
    \caption{Computational graph of NPCFR$^{(+)}$ for a simple extensive form game. 
    The algorithm $\bandit_\theta$ produces a strategy in each infostate using the regret $\regret^t, \cumregret^t$, and its hidden state $\vv{h}^t$, see Algorithm~\ref{algo:nprm}. 
    Each terminal state $z\in\mc Z$ accumulates its empirical average reach probability $\frac{1}{t}\sum_{\tau=1}^t d(\strategy^\tau)(z)$.
    Marginalizability $\mathbb{I}$ is computed between this accumulated average reach and the reach probability under the empirical average strategy profile in the game tree. 
    The meta-loss is the average mutual information experienced over $T$ steps, according to \eqref{eq: mutual information}. 
    Its gradient is propagated through all edges.}
    \label{fig: comp graph}
\end{figure*}

Any instance of NPCFR is a regret minimizer and is therefore guaranteed to converge to a CCE.
Since any Nash equilibrium is a CCE for which player strategies are uncorrelated, we propose a meta-loss objective that penalizes correlation in the CCE found by NPCFR.  Informally, these correlations measure the mutual dependence of players' strategies. Or in other words, how much a player can infer about the actions of other players given their action. 

One could express this measure of correlation as the \emph{mutual information} of the CCE.\footnote{We describe this measure in more detail in Appendix~\ref{section: proof}.}
However, for extensive-form games, this leads to an exponential blow-up in the size of the game, since there are exponentially more pure strategies than infostates. 
Instead, we exploit the structure of extensive-form games to define an equivalent meta-loss that does not suffer from this blow-up.





Formally, let $\psi^T = (\strategy^t)_{t=1}^T$ be a sequence of strategy profiles selected by a regret minimizer. 
Let $d(\strategy)$ be the distribution of reach probabilities of terminals $z\in\mc Z$ under $\strategy$, where $d(\strategy)(z)$ is the reach probability of $z$. $d(\strategy)$ can be decomposed into a product of player's (and chance's) contribution of reaching $z$: $d(\strategy)(z) =  d_c(z) \prod_{i \in \mc N} d_i(\strategy)(z)$ where $d_c(z)$ is chance's contribution to reaching $z$ and $d_i(\strategy)(z)$ is the product of $\sigma_\pl(s,a)$ for infostates $s\in\privTree$ on the~path~to~$z$. 

The average distribution over terminals across $\psi^T$ is $ d(\psi^T) \defeq \frac{1}{T}\sum_{t=1}^T d(\strategy^t)$. 
We define the \emph{marginal across terminals} $\mu(\psi^T)$ for $\psi^T$ as a distribution across terminals under the empirical average strategy in the game.
Formally, 
\begin{align*}
    \mu(\psi^T)(z) \defeq d_c(z) \prod_{i \in\mc N} \frac{1}{T}\sum_{t=1}^T  d_i(\strategy^t)(z).
\end{align*}
In words, this is the distribution on terminals induced by each player's empirical average strategy in the game tree. The sequence $\psi^T$ is uncorrelated if $d(\psi^T)$ and $\mu(\psi^T)$ have no mutual dependence.  This is formally captured by taking the KL divergence across terminals between $d(\psi^T)$ and $\mu(\psi^T)$. We denote this KL as $\mathbb{I}(\psi)$, since it is equal to mutual information for the two-player case and total correlation for the $n$-player case~\citep{watanabe1960information}. 
\begin{definition}
    \label{def: EFM}
    We say that $\psi^T$ is $\epsilon$-extensive-form marginalizable ($\epsilon$-EFM) if
    \begin{equation}
        \label{eq: mutual information}
        \mathbb{I}(\psi^T) \defeq D_{\rm KL}\left(d(\psi^T) \ ||\  \mu(\psi^T) \right)  \leq \epsilon.
    \end{equation}
\end{definition}
When a sequence of strategies of a regret minimizer is close to extensive-form marginalizable, it provably converges close to a Nash equilibrium. Formally, let $\overline{\strategy}^T$ be the average strategy profile~of~$\psi^T$. 


\begin{customtheorem}
\label{thm: correctnes of meta-learning}
    If $\psi^T$ was produced by an external regret minimizer with regret bounded by $\mathcal{O}(\sqrt{T})$ after $T$ iterations and  $\psi^T$ is $\epsilon$-EFM, then
    \begin{align}
        \label{eq: marginal thm bound}
        {\rm NashGap} (\overline \strategy^T) \leq \mathcal{O}(1/\sqrt{T})+ 2 M \sqrt{2\epsilon},
    \end{align}
    where $M = \max_{\pl \in \mc N} \max_{z \in \mc Z} | u_i(z)| $.
\end{customtheorem}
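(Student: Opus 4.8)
The plan is to bound ${\rm NashGap}(\overline{\strategy}^T)$ player-by-player by comparing the profitability of a unilateral deviation against the \emph{product} profile $\overline{\strategy}^T$ with its profitability against the \emph{correlated} empirical profile $\overline{\jointstrategy}^T$, whose coarse-correlated suboptimality is already controlled by the regret bound. Write $\delta \defeq d(\psi^T)$ for the outcome distribution of $\overline{\jointstrategy}^T$ and $\mu \defeq \mu(\psi^T)$. The first fact I would establish is that the reach-weighted average strategy $\overline{\strategy}^T$ is realization-equivalent to $\mu$ on the terminals, i.e.\ $d_i(\overline{\strategy}^T_i) = \tfrac1T\sum_t d_i(\strategy^t)$ for each player, so that $u_i(\overline{\strategy}^T) = \langle \mu, u_i\rangle$ and, for any deviation $\sigma_i$, $u_i(\sigma_i, \overline{\strategy}^T_{-i}) = u_i(\sigma_i, \mu_{-i})$, where $\mu_{-i}$ denotes the product of the opponents' average reaches. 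Simultaneously, since $\psi^T$ is produced by an $\mathcal{O}(\sqrt T)$ external-regret minimizer, $\overline{\jointstrategy}^T$ is an $\mathcal{O}(1/\sqrt T)$-CCE, i.e.\ ${\rm CCE\ Gap}(\overline{\jointstrategy}^T) \le \mathcal{O}(1/\sqrt T)$ by the result cited in the preliminaries.

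Fix the player $i$ attaining the NashGap and let $\sigma_i^\star \defeq \br(\overline{\strategy}^T_{-i})$. I would split the deviation gain into three pieces:
\begin{equation*}
u_i(\sigma_i^\star, \mu_{-i}) - u_i(\mu) = \underbrace{\big[u_i(\sigma_i^\star,\mu_{-i}) - u_i(\sigma_i^\star,\delta_{-i})\big]}_{(\mathrm{I})} + \underbrace{\big[u_i(\sigma_i^\star,\delta_{-i}) - u_i(\delta)\big]}_{(\mathrm{II})} + \underbrace{\big[u_i(\delta) - u_i(\mu)\big]}_{(\mathrm{III})}.
\end{equation*}
Term $(\mathrm{II})$ is at most the CCE Gap of $\overline{\jointstrategy}^T$, since $\sigma_i^\star$ is merely one admissible deviation against the opponents' correlated marginal $\delta_{-i}$, hence $(\mathrm{II}) \le \mathcal{O}(1/\sqrt T)$. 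Term $(\mathrm{III})$ equals $\langle \delta-\mu, u_i\rangle$, bounded by $M\,\|\delta-\mu\|_1$; applying Pinsker's inequality to the defining KL gives $\|\delta - \mu\|_1 \le \sqrt{2\,\mathbb{I}(\psi^T)} \le \sqrt{2\epsilon}$, so $(\mathrm{III}) \le M\sqrt{2\epsilon}$.

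The remaining term $(\mathrm{I})$ is where the real work lies: it compares the utility of the fixed deviation $\sigma_i^\star$ when the opponents play their \emph{independent} average reaches versus their \emph{correlated} joint reach. Both are genuine outcome distributions over terminals, $Q'$ and $Q$ respectively, obtained by completing a sampled opponent action-sequence with the identical chance- and $\sigma_i^\star$-moves. I would therefore view the terminal as the image of the opponents' realized sequence under one fixed Markov kernel and invoke the data-processing (contraction) property of total variation to obtain $\|Q'-Q\|_1 \le \|\prod_{j\ne i} P_{X_j} - P_{X_{-i}}\|_1$, where $P_{X_{-i}}$ is the opponents' joint average reach and $\prod_{j} P_{X_j}$ the product of their individual average reaches. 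A second Pinsker step bounds the right-hand side by $\sqrt{2\,C_{-i}}$, where $C_{-i}$ is the opponents' total correlation, and the elementary identity $\mathbb{I}(\psi^T) - C_{-i} = I(X_i;X_{-i}) \ge 0$ gives $C_{-i} \le \mathbb{I}(\psi^T) \le \epsilon$. Hence $(\mathrm{I}) \le M\sqrt{2\epsilon}$, and summing the three bounds over the maximizing player yields ${\rm NashGap}(\overline{\strategy}^T) \le \mathcal{O}(1/\sqrt T) + 2M\sqrt{2\epsilon}$.

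The main obstacle is justifying the data-processing step of term $(\mathrm{I})$ rigorously in extensive form, since ``resampling only the opponents' actions'' is not literally meaningful in a game tree. Concretely, I must verify that the map sending an opponent action-sequence $x_{-i}$ to a terminal (via chance and $\sigma_i^\star$) is a bona-fide sub-stochastic kernel shared by $Q$ and $Q'$; equivalently, that $\sum_{z:\,x_{-i}(z)=x_{-i}} d_c(z)\,d_i(\sigma_i^\star)(z) \le 1$ for every opponent sequence $x_{-i}$. I expect to prove this by extending $x_{-i}$ to an arbitrary pure opponent profile and noting that, under that profile, these terminals receive reach exactly $d_c\,d_i(\sigma_i^\star)$ and form a subset of all terminals, so their total reach is at most one. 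This step relies on perfect recall and the tree structure, and is the only place the extensive-form (rather than purely normal-form) argument is needed; in the two-player case it is vacuous because $P_{X_{-i}}$ already \emph{equals} the single opponent's marginal, so there $(\mathrm{I}) = 0$ and a sharper bound with a single $M\sqrt{2\epsilon}$ holds.
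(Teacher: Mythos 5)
Your proof is correct and its skeleton is the same as the paper's: the deviation gain is split into exactly your three terms, with (II) controlled by the $\mathcal{O}(1/\sqrt{T})$-CCE guarantee and (I), (III) each controlled by Pinsker applied to a KL that is at most $\epsilon$. Where you diverge is in packaging and in the justification of term (I). The paper first reduces everything to normal form: Lemma~\ref{lemma: 2} maps terminals injectively to (reduced) pure strategy profiles including chance, shows chance factors out of the KL, and thereby identifies $\mathbb{I}(\psi^T)$ with the total correlation of an equivalent joint distribution over pure profiles; Theorem~\ref{thm: nf} then does the three-term argument in normal form, bounding term (I) via Lemma~\ref{lemma:1}, which uses sub-additivity of entropy to show $\mathbb{I}(\pstrategy_i',\jointstrategy_{\opp})\le\mathbb{I}(\jointstrategy)$ before applying Pinsker to the full deviated joint. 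Your route stays in the tree: you get the same key inequality $C_{\opp}\le\mathbb{I}(\psi^T)$ from the chain rule for total correlation (mathematically the same fact as Lemma~\ref{lemma:1}), and you replace the normal-form reduction with a data-processing contraction through the kernel sending an opponent action-sequence to a terminal via chance and $\sigma_i^\star$. You correctly identify that making that kernel well-defined is the crux, and your fix --- extending opponent sequences to full (reduced) pure profiles so the kernel is (sub-)stochastic --- is essentially the same construction the paper performs once and for all in Lemma~\ref{lemma: 2}; the difficulty is conserved, just relocated. Two small points: the chain-rule step should also account for chance as an extra coordinate (it is independent of the players, so it contributes nothing to the total correlation, but the paper proves this explicitly and you should too), and your observation that term (I) vanishes for two players, yielding the sharper bound $\mathcal{O}(1/\sqrt{T})+M\sqrt{2\epsilon}$, is a nice refinement the paper does not state.
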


    

For a given horizon $T$, we define the meta-loss of NPCFR to be the average mutual information of the average terminal reach of the strategies selected up to $T$ on games $\task\sim \tasks$ 
\begin{equation}
    \label{eq: meta-loss}
    \mathcal{L}(\theta) = 
    \mathop{\mathbb{E}}_{\task\in\tasks}
    \left[
    \frac{1}{T}
    \sum_{t=1}^T \mathbb{I}(\psi^t_\theta)
    \right].
\end{equation}
Note minimizing this loss is different from directly minimizing the extensive form marginalizability after $T$ steps. 
We do this to encourage the iterates to be marginalizable as well. 
This is analogous to minimizing $\sum_{t=1}^T f(x^t)$ rather than $f(x^T)$ as in~\citep{andrychowicz2016learning}, where the authors meta-learned a function optimizer.
The computational graph of NPCFR is shown in Figure~\ref{fig: comp graph}.
The gradient of \eqref{eq: meta-loss} originates in the cumulative mutual information and propagates through the game tree, the regrets $\regret^t, \cumregret^t$ and the hidden states $\vv{h}^t$.
The gradients accumulate in the predictor $\pi(\cdot|\theta)$, which is used by the algorithms $\bandit_\theta$ at every information state $s\in\priTree$ and every step $t$, see Algorithm~\ref{algo:nprm}.

\section{Experiments}
\label{sec: experiments}

We conduct our experiments in general-sum games where regret minimizers are not guaranteed to converge to a Nash equilibrium.
Starting in the normal-form setting, we present a distribution of games for which standard regret minimization algorithms converge to a strictly correlated CCE.
We then apply our meta-learning framework to the extensive-form settings, showing we can obtain much better approximate Nash equilibria than prior algorithms.
Finally, we illustrate that the meta-learned algorithms may lose their empirical performance when used out-of-distribution.

We minimize~\eqref{eq: meta-loss} for $T=32$ iterations over $256$ epochs using the Adam optimizer.
The neural network uses two LSTM layers followed by a fully-connected layer.
We performed a small grid search over relevant hyperparameters, see Appendix~\ref{app: training details}.
The meta-learning can be completed in about ten minutes for the normal-form experiments, and ten hours for the extensive-form games on a single CPU. 
See Table~\ref{app: tab: memory requirements} for the memory requirements of all algorithms used.

We compare the meta-learned algorithms to a selection of current and former state-of-the-art regret minimization algorithms.
Each algorithm is used to minimize counterfactual regret at each infostate of the game tree~\citep{zinkevich2008regret}.
Specifically, we use regret matching (CFR)~\citep{hart2000simple}, predictive regret matching (PCFR)~\citep{farina2021faster}, smooth predictive regret matching (SPCFR)~\citep{farina2023regret}, discounted and linear regret minimization (DCFR, LCFR)~\citep{brown2018solving}, and Hedge~\citep{lattimore2018bandit}.
Whenever applicable, we also investigate the `plus' version of each algorithm~\citep{tammelin2015solving}.

\subsection{Normal-Form Games}
\label{ssec: matrix games}

\setlength{\tabcolsep}{2pt}  
\renewcommand{\arraystretch}{1.3}  
\newcolumntype{P}[1]{>{\hspace{0.5ex}\arraybackslash}p{#1}}

\begin{table*}[t]
\centering
\begin{tabular}{|c||P{5ex}|P{5ex}|P{5ex}|P{5ex}|P{6.5ex}|P{6.5ex}|P{5ex}|P{5ex}|P{5ex}|P{5ex}||P{5ex}|P{5ex}|}
\hline
NashGap & \multicolumn{2}{c|}{CFR$^{(+)}$} & \multicolumn{2}{c|}{PCFR$^{(+)}$} & DCFR & LCFR & \multicolumn{2}{c|}{SPCFR$^{(+)}$} & \multicolumn{2}{c||}{Hedge$^{(+)}$} & \multicolumn{2}{c|}{NPCFR$^{(+)}$} \\ \hline\hline
$10^{-2}$& 0.78 & 0.09 & {\bf 1} & 0.09 & 0.09 & 0.42 & {\bf 1} & 0.09 & {\bf 1} & 0.36 & {\bf 1} & {\bf 1} \\ \hline
$10^{-3}$& 0.09 & 0.02 & 0.91 & 0.02 & 0.02 & 0.02 & {\bf 1} & 0.02 & {\bf 1} & 0.06 & {\bf 1} &  {\bf 1} \\ \hline
$10^{-5}$& 0 & 0 & 0.02  & 0 & 0 & 0 & 0.11 & 0 & 0.25 & 0 & 0.14 & {\bf 1}  \\ \hline
\end{tabular}
\caption{The fraction of games from {\tt biased\_shapley} each algorithm can solve to a given NashGap within $2^{14}=16,384$ steps. 
For the algorithms marked $^{(+)}$, the left column show the standard version, while the right shows the `plus'.
See also Table~\ref{tab: shapley threasholds extended} in Appendix~\ref{app: extre res: shapley}.
}
\label{tab: shapley threasholds}
\end{table*}

The Shapley game
\begin{align}
    u_1(\strategy) = \strategy_1^\top \cdot \begin{pmatrix}
        1 & 0 & 0 \\
        0 & 1 & 0 \\
        0 & 0 & 1 \\
    \end{pmatrix}\cdot \strategy_2,
    \hspace{3ex}
    u_2(\strategy) = \strategy_1^\top \cdot \begin{pmatrix}
        0 & 1 & 0 \\
        0 & 0 & 1 \\
        1 & 0 & 0 \\
    \end{pmatrix}\cdot \strategy_2,
    \label{eq: shapley game}
\end{align}
was used as a simple example where the best-response dynamics doesn't stabilize~\citep{shapley1963some}.
Indeed, it cycles on the elements which are non-zero for one player.
The empirical average joint-strategy converges to a CCE
\begin{equation}
    \label{eq: shapley cce}
    \jointstrategy^* = 
    \frac{1}{6}
    \begin{pmatrix}
        1 & 1 & 0 \\
        0 & 1 & 1 \\
        1 & 0 & 1 \\
    \end{pmatrix}.
\end{equation}
Clearly, $\jointstrategy^*$ is not a Nash equilibrium, as it cannot be written as $\strategy_1 \strategy_2^\top$.
However, thanks to the symmetry of the game, the marginals of $\jointstrategy^*$, or the uniform strategy, turn out to be a Nash equilibrium.

In order to break the symmetry, we perturb the utility of one of the outcomes of the game. 
Specifically, we give payoff $\eta\in\mathbb{R}$ to both players when the first player selects the first, and the second their last action, see Appendix~\ref{app: biased shapley}.
To preserve that $\jointstrategy^*$ is a CCE, the perturbation $\eta$ needs to be bounded.
We show in Appendix~\ref{app: biased shapley} that for $\eta \le 1/2$, $\jointstrategy^*$ is a CCE.
Furthermore, there is a unique Nash equilibrium, which is non-uniform for $\eta \neq 0$. 
We denote the distribution over biased Shapley games for $\eta\sim\mathcal{U}(a,b)$ as {\tt biased\_shapley}$(a,b)$.

To quantify the performance of the regret minimization algorithms, we study the chance that they find a solution with a given NashGap.
We present our results in Table~\ref{tab: shapley threasholds}.
All the prior regret minimization algorithms fail to reliably find the Nash equilibrium.
The `plus' non-meta-learned algorithms exhibit particularly poor performance in this regime, typically converging to a strictly correlated CCE.
However, they don't all converge to $\jointstrategy^*$ either, see Figure~\ref{fig: heatmaps} for an illustration of the joint strategy profiles each algorithm converges to.
In contrast, NPCFR$^{(+)}$ exhibit fast convergence and remarkable generalization.
We show the convergence comparison of the regret minimization algorithms on {\tt biased\_shapley}$(0,1/2)$ in Figure~\ref{fig: app: biased shapley} in Appendix~\ref{app: extre res: shapley}.
Despite being trained only for $T=32$ steps, our meta-learned algorithms are able to minimize NashGap past $10^4$ steps. 

\begin{figure*}
    \centering
    \subfloat[CFR]{{\includegraphics[width=0.15\textwidth]{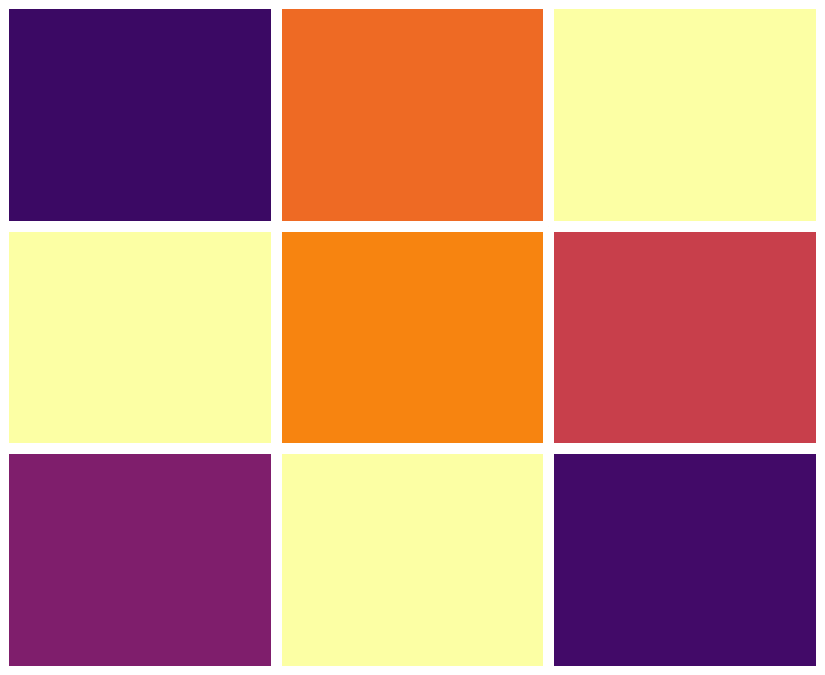}}
                  {\includegraphics[width=0.15\textwidth]{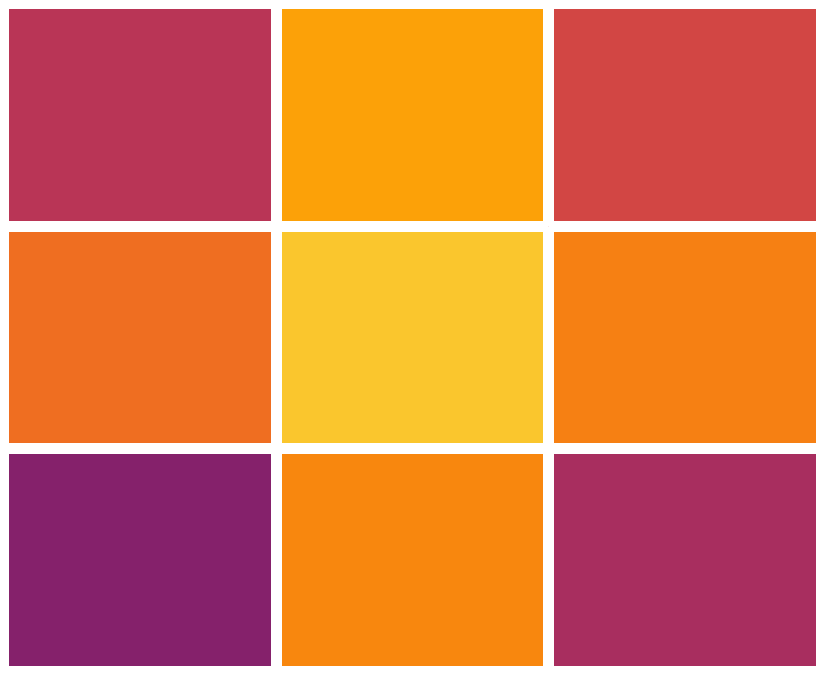}}}\hfil
    \subfloat[CFR+]{{\includegraphics[width=0.15\textwidth]{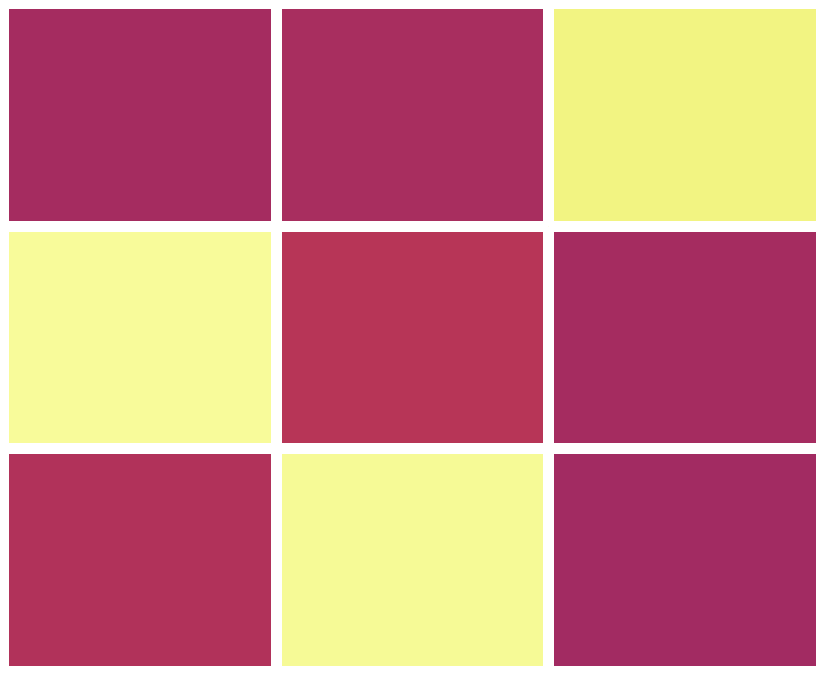}}
                  {\includegraphics[width=0.15\textwidth]{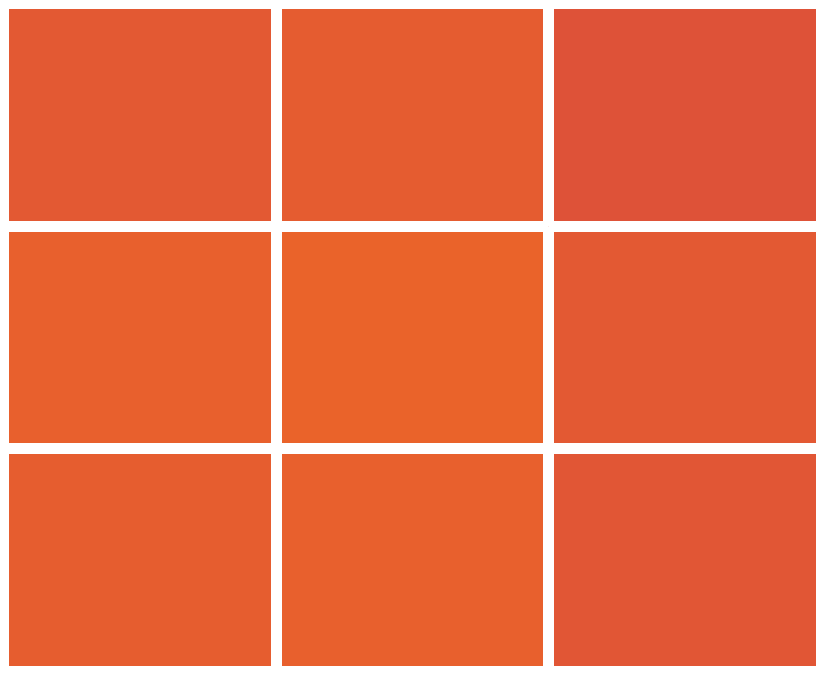}}}\hfil
    \subfloat[PCFR]{{\includegraphics[width=0.15\textwidth]{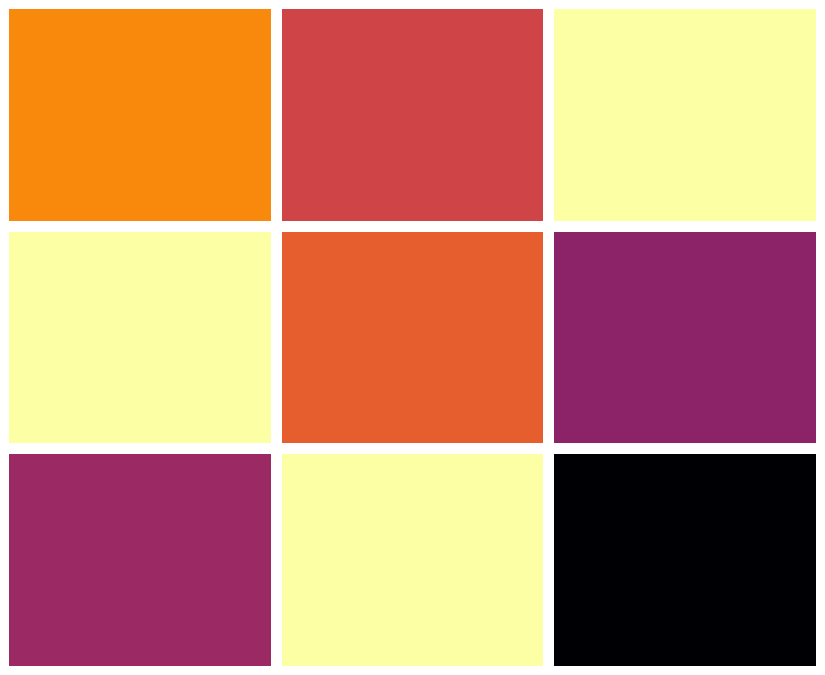}}
                  {\includegraphics[width=0.15\textwidth]{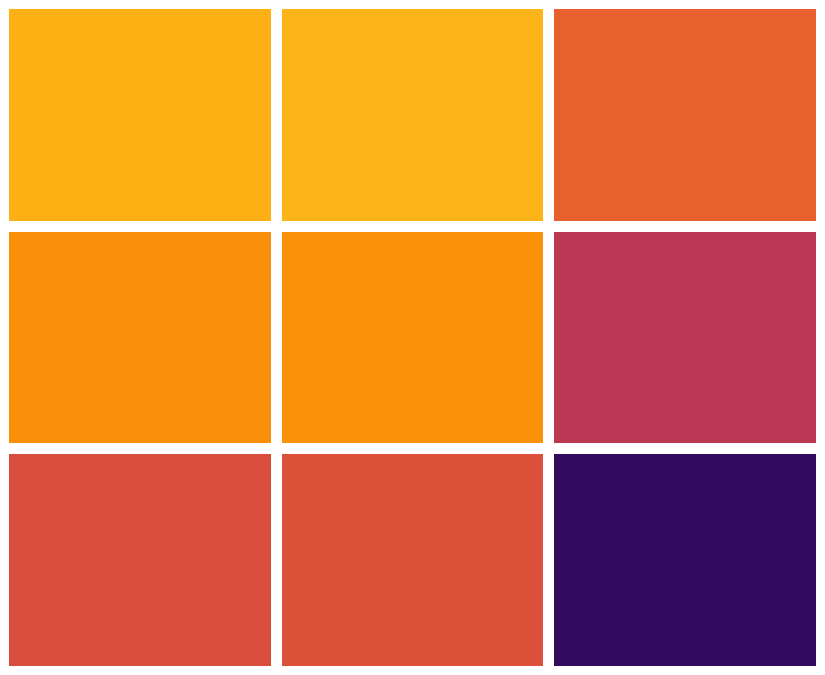}}} \hfil \\

    \subfloat[SPCFR+]{{\includegraphics[width=0.15\textwidth]{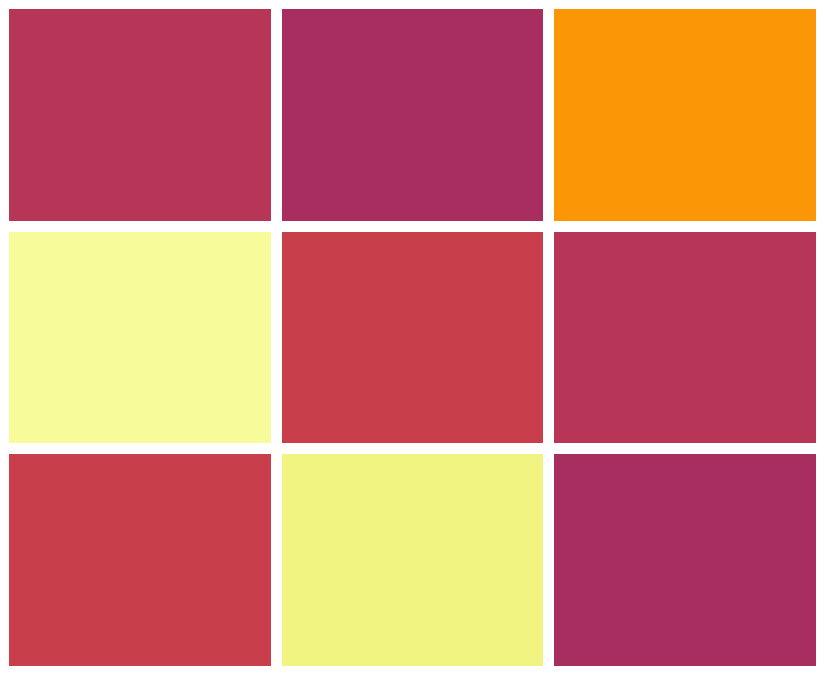}}
                  {\includegraphics[width=0.15\textwidth]{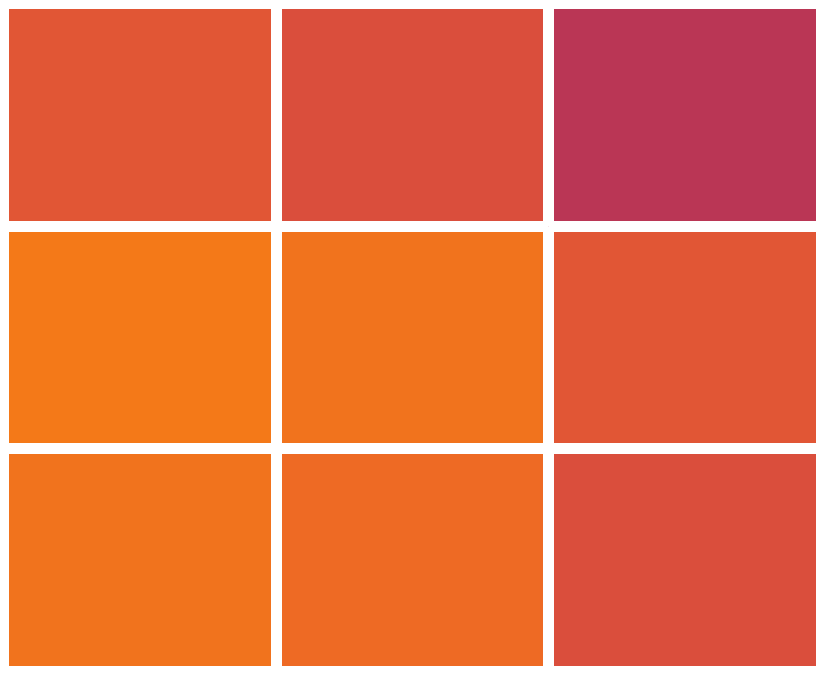}}} \hfil
    \subfloat[NPCFR]{{\includegraphics[width=0.15\textwidth]{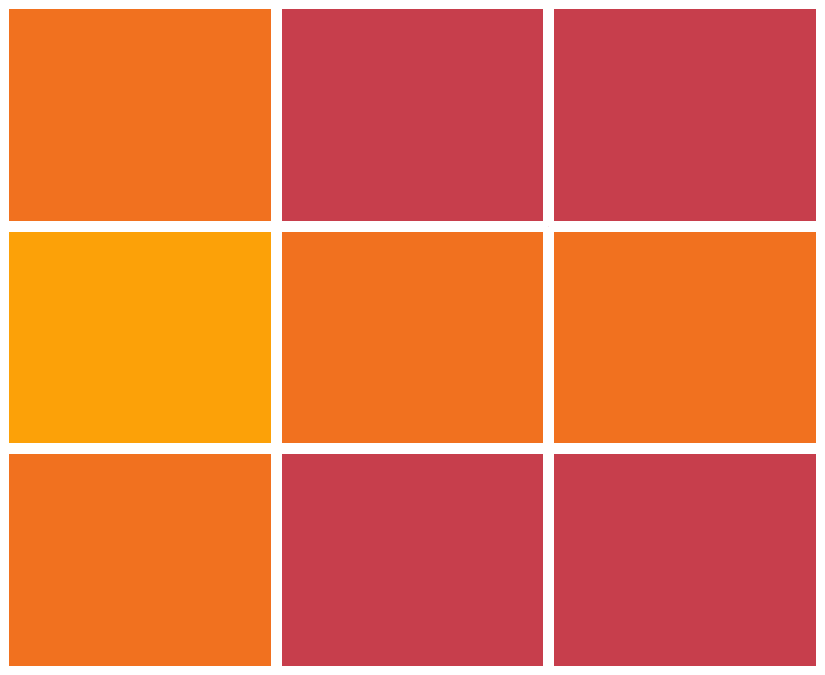}}
                  {\includegraphics[width=0.15\textwidth]{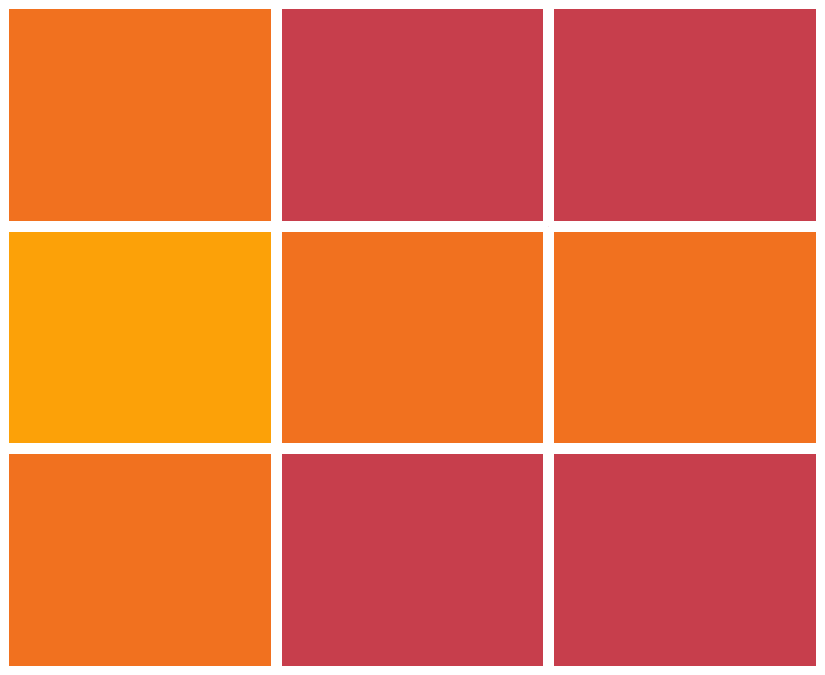}}} \hfil
    \subfloat[NPCFR+]{{\includegraphics[width=0.15\textwidth]{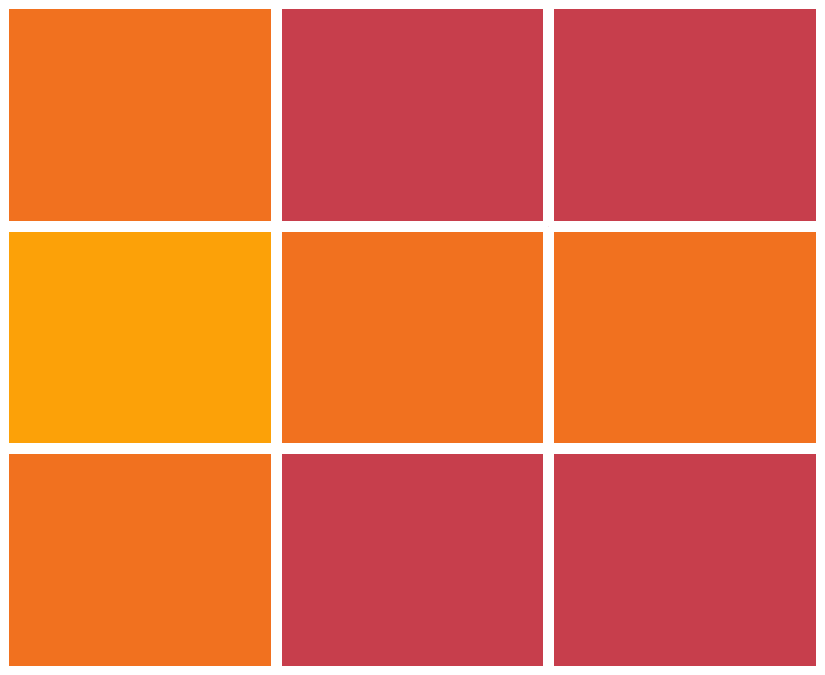}}
                  {\includegraphics[width=0.15\textwidth]{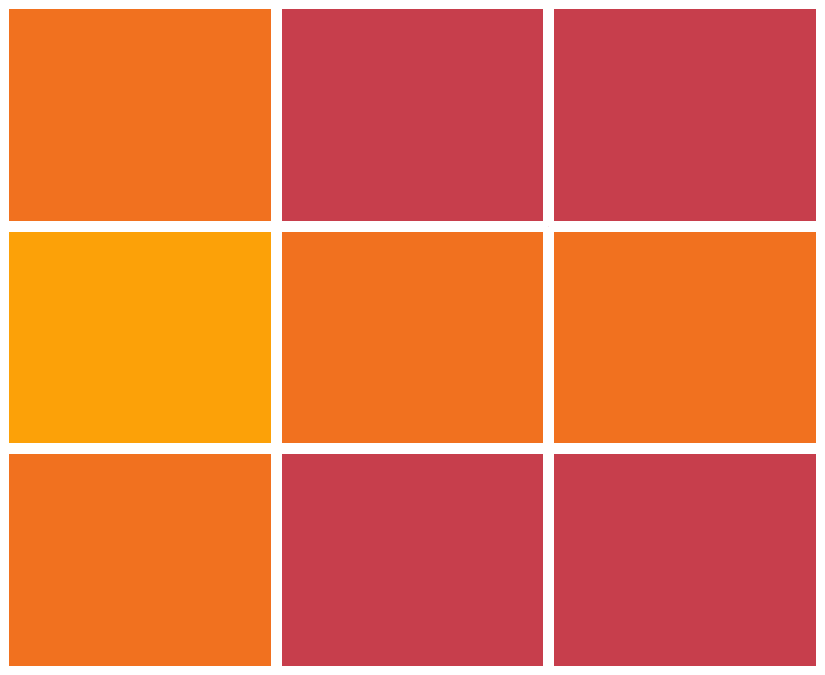}}} \hfil

    \includegraphics[width=0.6\textwidth]{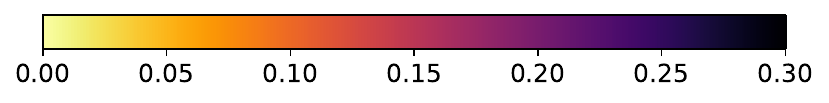}

    \caption{The empirical average joint strategy profiles found by regret minimizers $\overline{\jointstrategy}^T$ (left) and its marginalized version (right) found on a random sample drawn from {\tt biased\_shapley}$(0,1/2)$ after $T=2^{14}$ steps; see Eq.~\eqref{eq: shapley cce}. 
    Darker colors indicate higher probability under $\overline{\jointstrategy}^T$, and minimal differences between left and right figures imply the joint strategy is marginalizable.
    The remaining algorithms are shown in Figure~\ref{fig: app: heatmaps} in Appendix~\ref{app: biased shapley}.}
    \label{fig: heatmaps}
\end{figure*}

\subsection{Extensive-Form Games}
\label{ssec: sequential games}

To evaluate our algorithms in a sequential setting, we use the standard benchmark Leduc poker~\citep{waugh2009strategy}, see Appendix~\ref{app: games: leduc poker} for more details.

\subsubsection{Two-Player Leduc Poker}


Since Leduc poker is a zero-sum game, regret minimizers are guaranteed to converge to a Nash equilibrium in the two-player version.
Under standard rules, players split the pot in the case of a tie, receiving a payoff equal to their total amount bet.
We break the zero-sum property by modifying tie payoffs such that players only receive a $\beta$-fraction of their bets.
This change disincentives betting to increase the size of the pot, but only if the players have the same card ranks, potentially leading to correlations in players' strategies.

We define \twoPlLeduc{} as a distribution over such games, where $\beta\sim\mathcal{U}(0,1/2)$.
To quantify the performance of regret minimization algorithms, we plot the expected NashGap for each algorithm on \twoPlLeduc{} in Figure~\ref{fig: 2 pl leduc}.
While the performance averaged over the domain is similar for all algorithms, the meta-learned algorithms obtain much better approximations of Nash equilibria in each run.
To show this, we investigate the chance that they find a solution with at most a given NashGap.
Table~\ref{tab: 2leduc threasholds} shows the chance for thresholds $10^{-2}, 10^{-3}$, and $10^{-5}$.
With some exceptions, non-meta-learned algorithms generally fail to find a solution with a NashGap of $10^{-2}$.
The `plus' variants perform better empirically but still struggle to obtain solutions close to a Nash equilibrium as reliably as NPCFR$^{(+)}$.
NPCFR$^+$ performs the best overall.


\begin{table*}[t]
\centering
\begin{tabular}{|c||P{5ex}|P{5ex}|P{5ex}|P{5ex}|P{6.5ex}|P{6.5ex}|P{5ex}|P{5ex}|P{5ex}|P{5ex}||P{5ex}|P{5ex}|}
\hline
NashGap & \multicolumn{2}{c|}{CFR$^{(+)}$} & \multicolumn{2}{c|}{PCFR$^{(+)}$} & DCFR & LCFR & \multicolumn{2}{c|}{SPCFR$^{(+)}$} & \multicolumn{2}{c||}{Hedge$^{(+)}$} & \multicolumn{2}{c|}{NPCFR$^{(+)}$} \\ \hline\hline
$10^{-2}$& 0 & {\bf 1} & 0.03 & {\bf 1} &  0.13 & 0 & 0.54 & {\bf 1} & 0 & 0.29 & 0.84 & {\bf 1} \\ \hline
$10^{-3}$& 0 & 0 & 0 & 0.87 &  0 & 0 & 0 & 0.72 & 0 & 0 & 0.73 & {\bf 0.98} \\ \hline
$10^{-5}$& 0 & 0 & 0  & 0.16 & 0 & 0 & 0 & 0.11 & 0 & 0 & 0.73 & {\bf 0.96}  \\ \hline
\end{tabular}
\caption{The fraction of games from \twoPlLeduc{} each algorithm can solve to a given NashGap within $2^{18}=262,144$ steps. 
For the algorithms marked $^{(+)}$, the left column show the standard version, while the right shows the `plus'.
See also Table~\ref{tab: 2leduc threasholds extended} in Appendix~\ref{app: extra res: 2leduc}.
}
\label{tab: 2leduc threasholds}
\end{table*}

\subsubsection{Three-Player Leduc Poker}


Generally, meta-learning is applied over a distribution of problem instances. 
However, in our setting, it is appealing even to apply it to a single instance of a game.
This is because regret minimization algorithms are not guaranteed to converge to a Nash equilibrium in general-sum games.
However, our meta-learning framework allows us to obtain better approximations of Nash equilibrium.

We demonstrate this approach on the three-player version of Leduc poker; see Appendix~\ref{app: games: leduc poker}.
We refer to the game as {\tt three\_player\_leduc}.
There have been conflicting reports in the literature as to the ability of regret minimization algorithms to converge to a Nash equilibrium in this game~\citep{risk2010using,macqueen2023guarantees}.
We found the performance of non-meta-learned algorithms varied significantly, with those using alternating updates giving approx. $4-6$-times better results.
The best approximation of a Nash equilibrium we found among non-meta-learned algorithms using alternating updates\footnote{Among the algorithms we consider, this includes the `plus' algorithms and DCFR. DCFR is similar to CFR$^+$, and was shown to outperform CFR$^+$ on two-player poker~\citep{brown2019solving}.} was ${\rm NashGap} = 0.004$, produced by CFR$^+$.
Without alternating updates, we found ${\rm NashGap} = 0.027$, produced by CFR. 
Our meta-learned algorithms have been able to find a strategy with ${\rm NashGap}=0.012$ for NPCFR, and ${\rm NashGap}=0.001$ for NPCFR$^+$; 
see Table~\ref{tab: 3leduc poker best nashconv} and Figure~\ref{fig: 3 pl leduc} in Appendix~\ref{app: extra res: 3leduc} for details.
To the best of our knowledge, this is the closest approximation of Nash equilibrium of {\tt three\_player\_leduc}.

To the best of our knowledge, the only theoretically sound way to find a Nash equilibrium in this game is to use support-enumeration-based algorithms such as the Lemke-Howson~\citep{lemke1964equilibrium}. 
First, we would need to transform it into a two-player general-sum game.
This can be done by having one of the players always best-respond, and treating them as a part of chance.\footnote{This is the `inverse' of the dummy player argument, which is normally used to show that $n$-player zero-sum games are as hard to solve as $n-1$-player general-sum games.}
However, all of these algorithms work with the game in normal-form.
For {\tt three\_player\_leduc}, the number of pure strategies per player is $\approx 10^{472}$, making these approaches unusable in practice.


\subsection{Out-of-Distribution Convergence}
\label{ssec: out of dist}

To illustrate that the meta-learned algorithms are tailored to a specific domain, we evaluate them out-of-distribution.
Specifically, we run NPCFR$^{(+)}$, which were trained on {\tt biased\_shapley}$(0,1/2)$, on {\tt biased\_shapley}$(-1,0)$.
When evaluated out-of-distribution, the meta-learned algorithms lose the ability to converge to a Nash equilibrium.
See Figure~\ref{fig: out of dictibution} in Appendix~\ref{app: extra res: out of dist} for more details.

\section{Conclusion}
\label{sec: conclusion}

We present a novel framework for approximating Nash equilibria in general-sum games.
We apply regret minimization, which is a family of efficient algorithms,  guaranteed to converge to a coarse-correlated equilibrium (CCE).
This weaker solution concept allows player to correlate their strategies.
We use meta-learning to search a class of predictive regret minimization algorithms, minimizing the correlations in the CCE found by the algorithm.
The resulting algorithm is still guaranteed to converge to a CCE, and is meta-learned to empirically find close approximations of Nash equilibria.
Experiments in general-sum games, including large imperfect-information games, reveal our algorithms can considerably outperform other regret minimization algorithms.

\paragraph{Future Work.}

Our meta-learning framework might be useful for finding CCEs with desired properties.
For example, one can search for welfare maximizing equilibria by setting the meta-loss to the negative total utility of all players.
We also see other domains, such as auctions, as a promising field where our approach can be used.
One limitation of our approach is that it can be quite memory demanding, especially for larger horizons.
Training on abstractions of the games is promising.


\clearpage



\bibliographystyle{ACM-Reference-Format} 
\bibliography{main}


\begin{thebibliography}{52}


\ifx \showCODEN    \undefined \def \showCODEN     #1{\unskip}     \fi
\ifx \showDOI      \undefined \def \showDOI       #1{#1}\fi
\ifx \showISBNx    \undefined \def \showISBNx     #1{\unskip}     \fi
\ifx \showISBNxiii \undefined \def \showISBNxiii  #1{\unskip}     \fi
\ifx \showISSN     \undefined \def \showISSN      #1{\unskip}     \fi
\ifx \showLCCN     \undefined \def \showLCCN      #1{\unskip}     \fi
\ifx \shownote     \undefined \def \shownote      #1{#1}          \fi
\ifx \showarticletitle \undefined \def \showarticletitle #1{#1}   \fi
\ifx \showURL      \undefined \def \showURL       {\relax}        \fi
\providecommand\bibfield[2]{#2}
\providecommand\bibinfo[2]{#2}
\providecommand\natexlab[1]{#1}
\providecommand\showeprint[2][]{arXiv:#2}

\bibitem[\protect\citeauthoryear{Andrychowicz, Denil, Colmenarejo, Hoffman, Pfau, Schaul, Shillingford, and de~Freitas}{Andrychowicz et~al\mbox{.}}{2016}]%
        {andrychowicz2016learning}
\bibfield{author}{\bibinfo{person}{Marcin Andrychowicz}, \bibinfo{person}{Misha Denil}, \bibinfo{person}{Sergio~G\'{o}mez Colmenarejo}, \bibinfo{person}{Matthew~W. Hoffman}, \bibinfo{person}{David Pfau}, \bibinfo{person}{Tom Schaul}, \bibinfo{person}{Brendan Shillingford}, {and} \bibinfo{person}{Nando de Freitas}.} \bibinfo{year}{2016}\natexlab{}.
\newblock \showarticletitle{Learning to learn by gradient descent by gradient descent}. In \bibinfo{booktitle}{\emph{Proceedings of the 30th International Conference on Neural Information Processing Systems}} (Barcelona, Spain) \emph{(\bibinfo{series}{NeurIPS'16})}. \bibinfo{publisher}{Curran Associates Inc.}, \bibinfo{address}{Red Hook, NY, USA}, \bibinfo{pages}{3988–3996}.
\newblock
\showISBNx{9781510838819}


\bibitem[\protect\citeauthoryear{Blackwell et~al\mbox{.}}{Blackwell et~al\mbox{.}}{1956}]%
        {blackwell1956analog}
\bibfield{author}{\bibinfo{person}{David Blackwell} {et~al\mbox{.}}} \bibinfo{year}{1956}\natexlab{}.
\newblock \showarticletitle{An analog of the minimax theorem for vector payoffs.}
\newblock \bibinfo{journal}{\emph{Pacific J. Math.}} \bibinfo{volume}{6}, \bibinfo{number}{1} (\bibinfo{year}{1956}), \bibinfo{pages}{1--8}.
\newblock
\urldef\tempurl%
\url{https://doi.org/10.2140/PJM.1956.6.1}
\showDOI{\tempurl}


\bibitem[\protect\citeauthoryear{Bosse, Byrka, and Markakis}{Bosse et~al\mbox{.}}{2010}]%
        {bosse2010new}
\bibfield{author}{\bibinfo{person}{Hartwig Bosse}, \bibinfo{person}{Jaroslaw Byrka}, {and} \bibinfo{person}{Evangelos Markakis}.} \bibinfo{year}{2010}\natexlab{}.
\newblock \showarticletitle{New algorithms for approximate Nash equilibria in bimatrix games}.
\newblock \bibinfo{journal}{\emph{Theoretical Computer Science}} \bibinfo{volume}{411}, \bibinfo{number}{1} (\bibinfo{year}{2010}), \bibinfo{pages}{164--173}.
\newblock
\showISSN{0304-3975}
\urldef\tempurl%
\url{https://doi.org/10.1016/j.tcs.2009.09.023}
\showDOI{\tempurl}


\bibitem[\protect\citeauthoryear{Bowling, Burch, Johanson, and Tammelin}{Bowling et~al\mbox{.}}{2015}]%
        {bowling2015heads}
\bibfield{author}{\bibinfo{person}{Michael Bowling}, \bibinfo{person}{Neil Burch}, \bibinfo{person}{Michael Johanson}, {and} \bibinfo{person}{Oskari Tammelin}.} \bibinfo{year}{2015}\natexlab{}.
\newblock \showarticletitle{Heads-up limit hold’em poker is solved}.
\newblock \bibinfo{journal}{\emph{Science}} \bibinfo{volume}{347}, \bibinfo{number}{6218} (\bibinfo{year}{2015}), \bibinfo{pages}{145--149}.
\newblock
\urldef\tempurl%
\url{https://doi.org/10.1126/science.1259433}
\showDOI{\tempurl}


\bibitem[\protect\citeauthoryear{Brown, Bakhtin, Lerer, and Gong}{Brown et~al\mbox{.}}{2020}]%
        {brown2020combining}
\bibfield{author}{\bibinfo{person}{Noam Brown}, \bibinfo{person}{Anton Bakhtin}, \bibinfo{person}{Adam Lerer}, {and} \bibinfo{person}{Qucheng Gong}.} \bibinfo{year}{2020}\natexlab{}.
\newblock \showarticletitle{Combining deep reinforcement learning and search for imperfect-information games}.
\newblock \bibinfo{journal}{\emph{Advances in Neural Information Processing Systems}}  \bibinfo{volume}{33} (\bibinfo{year}{2020}), \bibinfo{pages}{17057--17069}.
\newblock


\bibitem[\protect\citeauthoryear{Brown, Lerer, Gross, and Sandholm}{Brown et~al\mbox{.}}{2019}]%
        {brown2019deep}
\bibfield{author}{\bibinfo{person}{Noam Brown}, \bibinfo{person}{Adam Lerer}, \bibinfo{person}{Sam Gross}, {and} \bibinfo{person}{Tuomas Sandholm}.} \bibinfo{year}{2019}\natexlab{}.
\newblock \showarticletitle{Deep counterfactual regret minimization}. In \bibinfo{booktitle}{\emph{International conference on machine learning}}. PMLR, \bibinfo{publisher}{AAAI press}, \bibinfo{address}{1101 Pennsylvania Ave, NW Washington, DC USA}, \bibinfo{pages}{793--802}.
\newblock


\bibitem[\protect\citeauthoryear{Brown and Sandholm}{Brown and Sandholm}{2018}]%
        {brown2018superhuman}
\bibfield{author}{\bibinfo{person}{Noam Brown} {and} \bibinfo{person}{Tuomas Sandholm}.} \bibinfo{year}{2018}\natexlab{}.
\newblock \showarticletitle{Superhuman AI for heads-up no-limit poker: Libratus beats top professionals}.
\newblock \bibinfo{journal}{\emph{Science}} \bibinfo{volume}{359}, \bibinfo{number}{6374} (\bibinfo{year}{2018}), \bibinfo{pages}{418--424}.
\newblock
\urldef\tempurl%
\url{https://doi.org/10.1126/science.aao1733}
\showDOI{\tempurl}


\bibitem[\protect\citeauthoryear{Brown and Sandholm}{Brown and Sandholm}{2019a}]%
        {brown2019solving}
\bibfield{author}{\bibinfo{person}{Noam Brown} {and} \bibinfo{person}{Tuomas Sandholm}.} \bibinfo{year}{2019}\natexlab{a}.
\newblock \showarticletitle{Solving imperfect-information games via discounted regret minimization}. In \bibinfo{booktitle}{\emph{Proceedings of the Thirty-Third AAAI Conference on Artificial Intelligence}} (Honolulu, Hawaii, USA) \emph{(\bibinfo{series}{AAAI'19})}. \bibinfo{publisher}{AAAI Press}, \bibinfo{address}{1101 Pennsylvania Ave, NW Washington, DC USA}, Article \bibinfo{articleno}{225}, \bibinfo{numpages}{8}~pages.
\newblock
\showISBNx{978-1-57735-809-1}
\urldef\tempurl%
\url{https://doi.org/10.1609/aaai.v33i01.33011829}
\showDOI{\tempurl}


\bibitem[\protect\citeauthoryear{Brown and Sandholm}{Brown and Sandholm}{2019b}]%
        {brown2018solving}
\bibfield{author}{\bibinfo{person}{Noam Brown} {and} \bibinfo{person}{Tuomas Sandholm}.} \bibinfo{year}{2019}\natexlab{b}.
\newblock \showarticletitle{Solving Imperfect-Information Games via Discounted Regret Minimization}.
\newblock \bibinfo{journal}{\emph{Proceedings of the AAAI Conference on Artificial Intelligence}} \bibinfo{volume}{33}, \bibinfo{number}{01} (\bibinfo{date}{Jul.} \bibinfo{year}{2019}), \bibinfo{pages}{1829--1836}.
\newblock
\urldef\tempurl%
\url{https://doi.org/10.1609/aaai.v33i01.33011829}
\showDOI{\tempurl}


\bibitem[\protect\citeauthoryear{Brown and Sandholm}{Brown and Sandholm}{2019c}]%
        {Pluribus}
\bibfield{author}{\bibinfo{person}{Noam Brown} {and} \bibinfo{person}{Tuomas Sandholm}.} \bibinfo{year}{2019}\natexlab{c}.
\newblock \showarticletitle{Superhuman AI for multiplayer poker}.
\newblock \bibinfo{journal}{\emph{Science}} \bibinfo{volume}{365}, \bibinfo{number}{6456} (\bibinfo{year}{2019}), \bibinfo{pages}{885--890}.
\newblock
\urldef\tempurl%
\url{https://doi.org/10.1126/science.aay2400}
\showDOI{\tempurl}


\bibitem[\protect\citeauthoryear{Cai and Daskalakis}{Cai and Daskalakis}{2011}]%
        {cai2011minimax}
\bibfield{author}{\bibinfo{person}{Yang Cai} {and} \bibinfo{person}{Constantinos Daskalakis}.} \bibinfo{year}{2011}\natexlab{}.
\newblock \showarticletitle{On Minmax Theorems for Multiplayer Games}, In \bibinfo{booktitle}{Proceedings of the Twenty-Second Annual {ACM-SIAM} Symposium on Discrete Algorithms, {SODA}}.
\newblock \bibinfo{journal}{\emph{Proceedings of the Annual ACM-SIAM Symposium on Discrete Algorithms}}, \bibinfo{pages}{217--234}.
\newblock
\urldef\tempurl%
\url{https://doi.org/10.1137/1.9781611973082.20}
\showDOI{\tempurl}


\bibitem[\protect\citeauthoryear{Codenotti, Rossi, and Pagan}{Codenotti et~al\mbox{.}}{2008}]%
        {codenotti2008experimental}
\bibfield{author}{\bibinfo{person}{Bruno Codenotti}, \bibinfo{person}{Stefano~De Rossi}, {and} \bibinfo{person}{Marino Pagan}.} \bibinfo{year}{2008}\natexlab{}.
\newblock \bibinfo{title}{An experimental analysis of Lemke-Howson algorithm}.
\newblock
\newblock
\showeprint[arxiv]{0811.3247}~[cs.DS]


\bibitem[\protect\citeauthoryear{Daskalakis, Goldberg, and Papadimitriou}{Daskalakis et~al\mbox{.}}{2009a}]%
        {daskalakis2009complexity}
\bibfield{author}{\bibinfo{person}{Constantinos Daskalakis}, \bibinfo{person}{Paul~W. Goldberg}, {and} \bibinfo{person}{Christos~H. Papadimitriou}.} \bibinfo{year}{2009}\natexlab{a}.
\newblock \showarticletitle{The Complexity of Computing a Nash Equilibrium}.
\newblock \bibinfo{journal}{\emph{SIAM J. Comput.}} \bibinfo{volume}{39}, \bibinfo{number}{1} (\bibinfo{year}{2009}), \bibinfo{pages}{195--259}.
\newblock
\urldef\tempurl%
\url{https://doi.org/10.1137/070699652}
\showDOI{\tempurl}
\showeprint{https://doi.org/10.1137/070699652}


\bibitem[\protect\citeauthoryear{Daskalakis, Mehta, and Papadimitriou}{Daskalakis et~al\mbox{.}}{2007}]%
        {daskalakis2007progress}
\bibfield{author}{\bibinfo{person}{Constantinos Daskalakis}, \bibinfo{person}{Aranyak Mehta}, {and} \bibinfo{person}{Christos Papadimitriou}.} \bibinfo{year}{2007}\natexlab{}.
\newblock \showarticletitle{Progress in approximate nash equilibria}. In \bibinfo{booktitle}{\emph{Proceedings of the 8th ACM Conference on Electronic Commerce}} (San Diego, California, USA) \emph{(\bibinfo{series}{EC '07})}. \bibinfo{publisher}{Association for Computing Machinery}, \bibinfo{address}{New York, NY, USA}, \bibinfo{pages}{355–358}.
\newblock
\showISBNx{9781595936530}
\urldef\tempurl%
\url{https://doi.org/10.1145/1250910.1250962}
\showDOI{\tempurl}


\bibitem[\protect\citeauthoryear{Daskalakis, Mehta, and Papadimitriou}{Daskalakis et~al\mbox{.}}{2009b}]%
        {daskalakis2009note}
\bibfield{author}{\bibinfo{person}{Constantinos Daskalakis}, \bibinfo{person}{Aranyak Mehta}, {and} \bibinfo{person}{Christos Papadimitriou}.} \bibinfo{year}{2009}\natexlab{b}.
\newblock \showarticletitle{A note on approximate Nash equilibria}.
\newblock \bibinfo{journal}{\emph{Theoretical Computer Science}} \bibinfo{volume}{410}, \bibinfo{number}{17} (\bibinfo{year}{2009}), \bibinfo{pages}{1581--1588}.
\newblock
\showISSN{0304-3975}
\urldef\tempurl%
\url{https://doi.org/10.1016/j.tcs.2008.12.031}
\showDOI{\tempurl}
\newblock
\shownote{Internet and Network Economics.}


\bibitem[\protect\citeauthoryear{Deligkas, Fasoulakis, and Markakis}{Deligkas et~al\mbox{.}}{2023}]%
        {deligkas2023polynomial}
\bibfield{author}{\bibinfo{person}{Argyrios Deligkas}, \bibinfo{person}{Michail Fasoulakis}, {and} \bibinfo{person}{Evangelos Markakis}.} \bibinfo{year}{2023}\natexlab{}.
\newblock \showarticletitle{A Polynomial-Time Algorithm for 1/3-Approximate Nash Equilibria in Bimatrix Games}.
\newblock \bibinfo{journal}{\emph{ACM Trans. Algorithms}} \bibinfo{volume}{19}, \bibinfo{number}{4}, Article \bibinfo{articleno}{31} (\bibinfo{date}{aug} \bibinfo{year}{2023}), \bibinfo{numpages}{17}~pages.
\newblock
\showISSN{1549-6325}
\urldef\tempurl%
\url{https://doi.org/10.1145/3606697}
\showDOI{\tempurl}


\bibitem[\protect\citeauthoryear{Farina, Grand-Cl\'{e}ment, Kroer, Lee, and Luo}{Farina et~al\mbox{.}}{2023}]%
        {farina2023regret}
\bibfield{author}{\bibinfo{person}{Gabriele Farina}, \bibinfo{person}{Julien Grand-Cl\'{e}ment}, \bibinfo{person}{Christian Kroer}, \bibinfo{person}{Chung-Wei Lee}, {and} \bibinfo{person}{Haipeng Luo}.} \bibinfo{year}{2023}\natexlab{}.
\newblock \showarticletitle{Regret Matching+: (In)Stability and Fast Convergence in Games}. In \bibinfo{booktitle}{\emph{Advances in Neural Information Processing Systems}}, \bibfield{editor}{\bibinfo{person}{A.~Oh}, \bibinfo{person}{T.~Naumann}, \bibinfo{person}{A.~Globerson}, \bibinfo{person}{K.~Saenko}, \bibinfo{person}{M.~Hardt}, {and} \bibinfo{person}{S.~Levine}} (Eds.), Vol.~\bibinfo{volume}{36}. \bibinfo{publisher}{Curran Associates, Inc.}, \bibinfo{address}{57 Morehouse Lane Red Hook, NY 12571 USA}, \bibinfo{pages}{61546--61572}.
\newblock
\urldef\tempurl%
\url{https://proceedings.neurips.cc/paper_files/paper/2023/file/c209cd57e13f3344a4cad4ce84d0ee1b-Paper-Conference.pdf}
\showURL{%
\tempurl}


\bibitem[\protect\citeauthoryear{Farina, Kroer, and Sandholm}{Farina et~al\mbox{.}}{2021}]%
        {farina2021faster}
\bibfield{author}{\bibinfo{person}{Gabriele Farina}, \bibinfo{person}{Christian Kroer}, {and} \bibinfo{person}{Tuomas Sandholm}.} \bibinfo{year}{2021}\natexlab{}.
\newblock \showarticletitle{Faster Game Solving via Predictive Blackwell Approachability: Connecting Regret Matching and Mirror Descent}.
\newblock \bibinfo{journal}{\emph{Proceedings of the AAAI Conference on Artificial Intelligence}} \bibinfo{volume}{35}, \bibinfo{number}{6} (\bibinfo{date}{May} \bibinfo{year}{2021}), \bibinfo{pages}{5363--5371}.
\newblock
\urldef\tempurl%
\url{https://doi.org/10.1609/aaai.v35i6.16676}
\showDOI{\tempurl}


\bibitem[\protect\citeauthoryear{Gatti, Patrini, Rocco, and Sandholm}{Gatti et~al\mbox{.}}{2012}]%
        {gatti2012combining}
\bibfield{author}{\bibinfo{person}{Nicola Gatti}, \bibinfo{person}{Giorgio Patrini}, \bibinfo{person}{Marco Rocco}, {and} \bibinfo{person}{Tuomas Sandholm}.} \bibinfo{year}{2012}\natexlab{}.
\newblock \bibinfo{title}{Combining local search techniques and path following for bimatrix games}.
\newblock
\newblock
\urldef\tempurl%
\url{http://arxiv.org/abs/1210.4858}
\showURL{%
\tempurl}


\bibitem[\protect\citeauthoryear{Gibson}{Gibson}{2014}]%
        {gibson2014regret}
\bibfield{author}{\bibinfo{person}{Richard Gibson}.} \bibinfo{year}{2014}\natexlab{}.
\newblock \emph{\bibinfo{title}{Regret minimization in games and the development of champion multiplayer computer poker-playing agents}}.
\newblock \bibinfo{thesistype}{Ph.D. Dissertation}. \bibinfo{school}{University of Alberta}.
\newblock


\bibitem[\protect\citeauthoryear{Gilboa and Zemel}{Gilboa and Zemel}{1989}]%
        {gilboa1989nash}
\bibfield{author}{\bibinfo{person}{Itzhak Gilboa} {and} \bibinfo{person}{Eitan Zemel}.} \bibinfo{year}{1989}\natexlab{}.
\newblock \showarticletitle{Nash and correlated equilibria: Some complexity considerations}.
\newblock \bibinfo{journal}{\emph{Games and Economic Behavior}} \bibinfo{volume}{1}, \bibinfo{number}{1} (\bibinfo{year}{1989}), \bibinfo{pages}{80--93}.
\newblock
\showISSN{0899-8256}
\urldef\tempurl%
\url{https://doi.org/10.1016/0899-8256(89)90006-7}
\showDOI{\tempurl}


\bibitem[\protect\citeauthoryear{Goodfellow, Pouget-Abadie, Mirza, Xu, Warde-Farley, Ozair, Courville, and Bengio}{Goodfellow et~al\mbox{.}}{2014}]%
        {goodfellow2014generative}
\bibfield{author}{\bibinfo{person}{Ian Goodfellow}, \bibinfo{person}{Jean Pouget-Abadie}, \bibinfo{person}{Mehdi Mirza}, \bibinfo{person}{Bing Xu}, \bibinfo{person}{David Warde-Farley}, \bibinfo{person}{Sherjil Ozair}, \bibinfo{person}{Aaron Courville}, {and} \bibinfo{person}{Yoshua Bengio}.} \bibinfo{year}{2014}\natexlab{}.
\newblock \showarticletitle{Generative Adversarial Nets}. In \bibinfo{booktitle}{\emph{Advances in Neural Information Processing Systems}}, \bibfield{editor}{\bibinfo{person}{Z.~Ghahramani}, \bibinfo{person}{M.~Welling}, \bibinfo{person}{C.~Cortes}, \bibinfo{person}{N.~Lawrence}, {and} \bibinfo{person}{K.Q. Weinberger}} (Eds.), Vol.~\bibinfo{volume}{27}. \bibinfo{publisher}{Curran Associates, Inc.}, \bibinfo{address}{57 Morehouse Lane Red Hook, NY 12571 USA}.
\newblock
\urldef\tempurl%
\url{https://proceedings.neurips.cc/paper_files/paper/2014/file/5ca3e9b122f61f8f06494c97b1afccf3-Paper.pdf}
\showURL{%
\tempurl}


\bibitem[\protect\citeauthoryear{Hannan}{Hannan}{1957}]%
        {hannan1957approximation}
\bibfield{author}{\bibinfo{person}{James Hannan}.} \bibinfo{year}{1957}\natexlab{}.
\newblock \showarticletitle{Approximation to Bayes risk in repeated play}.
\newblock \bibinfo{journal}{\emph{Contributions to the Theory of Games}}  \bibinfo{volume}{3} (\bibinfo{year}{1957}), \bibinfo{pages}{97--139}.
\newblock


\bibitem[\protect\citeauthoryear{Hart and Mas-Colell}{Hart and Mas-Colell}{2000}]%
        {hart2000simple}
\bibfield{author}{\bibinfo{person}{Sergiu Hart} {and} \bibinfo{person}{Andreu Mas-Colell}.} \bibinfo{year}{2000}\natexlab{}.
\newblock \showarticletitle{A simple adaptive procedure leading to correlated equilibrium}.
\newblock \bibinfo{journal}{\emph{Econometrica}} \bibinfo{volume}{68}, \bibinfo{number}{5} (\bibinfo{year}{2000}), \bibinfo{pages}{1127--1150}.
\newblock
\urldef\tempurl%
\url{https://doi.org/10.1111/1468-0262.00153}
\showDOI{\tempurl}


\bibitem[\protect\citeauthoryear{Kearns, Littman, and Singh}{Kearns et~al\mbox{.}}{2001}]%
        {kearns2001graphical}
\bibfield{author}{\bibinfo{person}{Michael Kearns}, \bibinfo{person}{Michael~L. Littman}, {and} \bibinfo{person}{Satinder Singh}.} \bibinfo{year}{2001}\natexlab{}.
\newblock \showarticletitle{Graphical models for game theory}. In \bibinfo{booktitle}{\emph{Proceedings of the Seventeenth Conference on Uncertainty in Artificial Intelligence}} (Seattle, Washington) \emph{(\bibinfo{series}{UAI'01})}. \bibinfo{publisher}{Morgan Kaufmann Publishers Inc.}, \bibinfo{address}{San Francisco, CA, USA}, \bibinfo{pages}{253–260}.
\newblock
\showISBNx{1558608001}


\bibitem[\protect\citeauthoryear{Kontogiannis, Panagopoulou, and Spirakis}{Kontogiannis et~al\mbox{.}}{2009}]%
        {kontogiannis2009polynomial}
\bibfield{author}{\bibinfo{person}{Spyros~C. Kontogiannis}, \bibinfo{person}{Panagiota~N. Panagopoulou}, {and} \bibinfo{person}{Paul~G. Spirakis}.} \bibinfo{year}{2009}\natexlab{}.
\newblock \showarticletitle{Polynomial algorithms for approximating Nash equilibria of bimatrix games}.
\newblock \bibinfo{journal}{\emph{Theoretical Computer Science}} \bibinfo{volume}{410}, \bibinfo{number}{17} (\bibinfo{year}{2009}), \bibinfo{pages}{1599--1606}.
\newblock
\showISSN{0304-3975}
\urldef\tempurl%
\url{https://doi.org/10.1016/j.tcs.2008.12.033}
\showDOI{\tempurl}
\newblock
\shownote{Internet and Network Economics.}


\bibitem[\protect\citeauthoryear{Kova{\v{r}}{\'\i}k, Schmid, Burch, Bowling, and Lis{\'y}}{Kova{\v{r}}{\'\i}k et~al\mbox{.}}{2022}]%
        {FOG}
\bibfield{author}{\bibinfo{person}{Vojt{\v{e}}ch Kova{\v{r}}{\'\i}k}, \bibinfo{person}{Martin Schmid}, \bibinfo{person}{Neil Burch}, \bibinfo{person}{Michael Bowling}, {and} \bibinfo{person}{Viliam Lis{\'y}}.} \bibinfo{year}{2022}\natexlab{}.
\newblock \showarticletitle{Rethinking formal models of partially observable multiagent decision making}.
\newblock \bibinfo{journal}{\emph{Artificial Intelligence}}  \bibinfo{volume}{303} (\bibinfo{year}{2022}), \bibinfo{pages}{103645}.
\newblock


\bibitem[\protect\citeauthoryear{Lattimore and Szepesv{\'a}ri}{Lattimore and Szepesv{\'a}ri}{2020}]%
        {lattimore2018bandit}
\bibfield{author}{\bibinfo{person}{Tor Lattimore} {and} \bibinfo{person}{Csaba Szepesv{\'a}ri}.} \bibinfo{year}{2020}\natexlab{}.
\newblock \bibinfo{booktitle}{\emph{Bandit algorithms}}.
\newblock \bibinfo{publisher}{Cambridge University Press}, \bibinfo{address}{1 Liberty Plaza, New York, NY 10006, USA}.
\newblock


\bibitem[\protect\citeauthoryear{Lemke and Howson}{Lemke and Howson}{1964}]%
        {lemke1964equilibrium}
\bibfield{author}{\bibinfo{person}{C.~E. Lemke} {and} \bibinfo{person}{J.~T. Howson, Jr.}} \bibinfo{year}{1964}\natexlab{}.
\newblock \showarticletitle{Equilibrium Points of Bimatrix Games}.
\newblock \bibinfo{journal}{\emph{J. Soc. Indust. Appl. Math.}} \bibinfo{volume}{12}, \bibinfo{number}{2} (\bibinfo{year}{1964}), \bibinfo{pages}{413--423}.
\newblock
\urldef\tempurl%
\url{https://doi.org/10.1137/0112033}
\showDOI{\tempurl}
\showeprint{https://doi.org/10.1137/0112033}


\bibitem[\protect\citeauthoryear{Li, Huang, Duan, Mguni, Shao, Wang, and Deng}{Li et~al\mbox{.}}{2024}]%
        {li2024survey}
\bibfield{author}{\bibinfo{person}{Hanyu Li}, \bibinfo{person}{Wenhan Huang}, \bibinfo{person}{Zhijian Duan}, \bibinfo{person}{David~Henry Mguni}, \bibinfo{person}{Kun Shao}, \bibinfo{person}{Jun Wang}, {and} \bibinfo{person}{Xiaotie Deng}.} \bibinfo{year}{2024}\natexlab{}.
\newblock \showarticletitle{A survey on algorithms for Nash equilibria in finite normal-form games}.
\newblock \bibinfo{journal}{\emph{Computer Science Review}}  \bibinfo{volume}{51} (\bibinfo{year}{2024}), \bibinfo{pages}{100613}.
\newblock
\showISSN{1574-0137}
\urldef\tempurl%
\url{https://doi.org/10.1016/j.cosrev.2023.100613}
\showDOI{\tempurl}


\bibitem[\protect\citeauthoryear{Littman and Stone}{Littman and Stone}{2005}]%
        {munoz2012polynomial}
\bibfield{author}{\bibinfo{person}{Michael~L. Littman} {and} \bibinfo{person}{Peter Stone}.} \bibinfo{year}{2005}\natexlab{}.
\newblock \showarticletitle{A polynomial-time Nash equilibrium algorithm for repeated games}.
\newblock \bibinfo{journal}{\emph{Decision Support Systems}} \bibinfo{volume}{39}, \bibinfo{number}{1} (\bibinfo{year}{2005}), \bibinfo{pages}{55--66}.
\newblock
\showISSN{0167-9236}
\urldef\tempurl%
\url{https://doi.org/10.1016/j.dss.2004.08.007}
\showDOI{\tempurl}
\newblock
\shownote{The Fourth ACM Conference on Electronic Commerce.}


\bibitem[\protect\citeauthoryear{MacQueen and Wright}{MacQueen and Wright}{2024}]%
        {macqueen2023guarantees}
\bibfield{author}{\bibinfo{person}{Revan MacQueen} {and} \bibinfo{person}{James Wright}.} \bibinfo{year}{2024}\natexlab{}.
\newblock \showarticletitle{Guarantees for Self-Play in Multiplayer Games via Polymatrix Decomposability}.
\newblock \bibinfo{journal}{\emph{Advances in Neural Information Processing Systems}}  \bibinfo{volume}{36} (\bibinfo{year}{2024}).
\newblock


\bibitem[\protect\citeauthoryear{Milgrom and Weber}{Milgrom and Weber}{1982}]%
        {milgrom1982theory}
\bibfield{author}{\bibinfo{person}{Paul~R. Milgrom} {and} \bibinfo{person}{Robert~J. Weber}.} \bibinfo{year}{1982}\natexlab{}.
\newblock \showarticletitle{A Theory of Auctions and Competitive Bidding}.
\newblock \bibinfo{journal}{\emph{Econometrica}} \bibinfo{volume}{50}, \bibinfo{number}{5} (\bibinfo{year}{1982}), \bibinfo{pages}{1089--1122}.
\newblock
\showISSN{00129682, 14680262}
\urldef\tempurl%
\url{http://www.jstor.org/stable/1911865}
\showURL{%
\tempurl}


\bibitem[\protect\citeauthoryear{Monderer and Shapley}{Monderer and Shapley}{1996}]%
        {monderer1996potential}
\bibfield{author}{\bibinfo{person}{Dov Monderer} {and} \bibinfo{person}{Lloyd~S Shapley}.} \bibinfo{year}{1996}\natexlab{}.
\newblock \showarticletitle{Potential games}.
\newblock \bibinfo{journal}{\emph{Games and economic behavior}} \bibinfo{volume}{14}, \bibinfo{number}{1} (\bibinfo{year}{1996}), \bibinfo{pages}{124--143}.
\newblock


\bibitem[\protect\citeauthoryear{Morav\v{c}ik, Schmid, Burch, Lis{\'y}, Morrill, Bard, Davis, Waugh, Johanson, and Bowling}{Morav\v{c}ik et~al\mbox{.}}{2017}]%
        {DeepStack}
\bibfield{author}{\bibinfo{person}{Matej Morav\v{c}ik}, \bibinfo{person}{Martin Schmid}, \bibinfo{person}{Neil Burch}, \bibinfo{person}{Viliam Lis{\'y}}, \bibinfo{person}{Dustin Morrill}, \bibinfo{person}{Nolan Bard}, \bibinfo{person}{Trevor Davis}, \bibinfo{person}{Kevin Waugh}, \bibinfo{person}{Michael Johanson}, {and} \bibinfo{person}{Michael Bowling}.} \bibinfo{year}{2017}\natexlab{}.
\newblock \showarticletitle{Deepstack: Expert-level artificial intelligence in heads-up no-limit poker}.
\newblock \bibinfo{journal}{\emph{Science}} \bibinfo{volume}{356}, \bibinfo{number}{6337} (\bibinfo{year}{2017}), \bibinfo{pages}{508--513}.
\newblock
\urldef\tempurl%
\url{https://doi.org/10.1126/science.aam6960}
\showDOI{\tempurl}


\bibitem[\protect\citeauthoryear{Moulin and Vial}{Moulin and Vial}{1975}]%
        {moulin1975strategically}
\bibfield{author}{\bibinfo{person}{H. Moulin} {and} \bibinfo{person}{J.P. Vial}.} \bibinfo{year}{1975}\natexlab{}.
\newblock \showarticletitle{Strategically zero-sum games: The class of games whose completely mixed equilibria cannot be improved upon}.
\newblock \bibinfo{journal}{\emph{International Journal of Game Theory}}  \bibinfo{volume}{7} (\bibinfo{date}{9} \bibinfo{year}{1975}), \bibinfo{pages}{201–221}.
\newblock
\urldef\tempurl%
\url{https://doi.org/10.1007/BF01769190}
\showDOI{\tempurl}


\bibitem[\protect\citeauthoryear{Murty}{Murty}{1984}]%
        {murty1984linear}
\bibfield{author}{\bibinfo{person}{K.G. Murty}.} \bibinfo{year}{1984}\natexlab{}.
\newblock \bibinfo{booktitle}{\emph{Linear Programming}}.
\newblock \bibinfo{publisher}{Wiley}, \bibinfo{address}{Hoboken, 111 River St, United States}.
\newblock
\showISBNx{9780471892496}
\urldef\tempurl%
\url{https://books.google.ca/books?id=ibQJvAEACAAJ}
\showURL{%
\tempurl}


\bibitem[\protect\citeauthoryear{Nisan, Roughgarden, Tardos, and Vazirani}{Nisan et~al\mbox{.}}{2007}]%
        {nisan2007algorithmic}
\bibfield{author}{\bibinfo{person}{Noam Nisan}, \bibinfo{person}{Tim Roughgarden}, \bibinfo{person}{Eva Tardos}, {and} \bibinfo{person}{Vijay~V Vazirani}.} \bibinfo{year}{2007}\natexlab{}.
\newblock \bibinfo{booktitle}{\emph{Algorithmic Game Theory}}.
\newblock \bibinfo{publisher}{Cambridge University Press}, \bibinfo{address}{New York, NY, USA}.
\newblock


\bibitem[\protect\citeauthoryear{Papadimitriou}{Papadimitriou}{1994}]%
        {papadimitriou1994complexity}
\bibfield{author}{\bibinfo{person}{Christos~H. Papadimitriou}.} \bibinfo{year}{1994}\natexlab{}.
\newblock \showarticletitle{On the complexity of the parity argument and other inefficient proofs of existence}.
\newblock \bibinfo{journal}{\emph{J. Comput. System Sci.}} \bibinfo{volume}{48}, \bibinfo{number}{3} (\bibinfo{year}{1994}), \bibinfo{pages}{498--532}.
\newblock
\showISSN{0022-0000}
\urldef\tempurl%
\url{https://doi.org/10.1016/S0022-0000(05)80063-7}
\showDOI{\tempurl}


\bibitem[\protect\citeauthoryear{Risk and Szafron}{Risk and Szafron}{2010}]%
        {risk2010using}
\bibfield{author}{\bibinfo{person}{Nick~Abou Risk} {and} \bibinfo{person}{Duane Szafron}.} \bibinfo{year}{2010}\natexlab{}.
\newblock \showarticletitle{Using counterfactual regret minimization to create competitive multiplayer poker agents}. In \bibinfo{booktitle}{\emph{Proceedings of the 9th International Conference on Autonomous Agents and Multiagent Systems: Volume 1 - Volume 1}} (Toronto, Canada) \emph{(\bibinfo{series}{AAMAS '10})}. \bibinfo{publisher}{International Foundation for Autonomous Agents and Multiagent Systems}, \bibinfo{address}{Richland, SC}, \bibinfo{pages}{159–166}.
\newblock
\showISBNx{9780982657119}


\bibitem[\protect\citeauthoryear{Rosenthal}{Rosenthal}{1973}]%
        {rosenthal1973class}
\bibfield{author}{\bibinfo{person}{Robert~W. Rosenthal}.} \bibinfo{year}{1973}\natexlab{}.
\newblock \showarticletitle{A class of games possessing pure-strategy Nash equilibria}.
\newblock \bibinfo{journal}{\emph{Int. J. Game Theory}} \bibinfo{volume}{2}, \bibinfo{number}{1} (\bibinfo{date}{dec} \bibinfo{year}{1973}), \bibinfo{pages}{65–67}.
\newblock
\showISSN{0020-7276}
\urldef\tempurl%
\url{https://doi.org/10.1007/BF01737559}
\showDOI{\tempurl}


\bibitem[\protect\citeauthoryear{Schmid, Moravčík, Burch, Kadlec, Davidson, Waugh, Bard, Timbers, Lanctot, Holland, Davoodi, Christianson, and Bowling}{Schmid et~al\mbox{.}}{2023}]%
        {schmid2023student}
\bibfield{author}{\bibinfo{person}{Martin Schmid}, \bibinfo{person}{Matej Moravčík}, \bibinfo{person}{Neil Burch}, \bibinfo{person}{Rudolf Kadlec}, \bibinfo{person}{Josh Davidson}, \bibinfo{person}{Kevin Waugh}, \bibinfo{person}{Nolan Bard}, \bibinfo{person}{Finbarr Timbers}, \bibinfo{person}{Marc Lanctot}, \bibinfo{person}{G.~Zacharias Holland}, \bibinfo{person}{Elnaz Davoodi}, \bibinfo{person}{Alden Christianson}, {and} \bibinfo{person}{Michael Bowling}.} \bibinfo{year}{2023}\natexlab{}.
\newblock \showarticletitle{Student of Games: A unified learning algorithm for both perfect and imperfect information games}.
\newblock \bibinfo{journal}{\emph{Science Advances}} \bibinfo{volume}{9}, \bibinfo{number}{46} (\bibinfo{year}{2023}), \bibinfo{pages}{eadg3256}.
\newblock
\urldef\tempurl%
\url{https://doi.org/10.1126/sciadv.adg3256}
\showDOI{\tempurl}
\showeprint{https://www.science.org/doi/pdf/10.1126/sciadv.adg3256}


\bibitem[\protect\citeauthoryear{Shapley}{Shapley}{1964}]%
        {shapley1963some}
\bibfield{author}{\bibinfo{person}{L.~S. Shapley}.} \bibinfo{year}{1964}\natexlab{}.
\newblock \bibinfo{booktitle}{\emph{1. Some Topics in Two-Person Games}}.
\newblock \bibinfo{publisher}{Princeton University Press}, \bibinfo{address}{Princeton}, \bibinfo{pages}{1--28}.
\newblock
\showISBNx{9781400882014}
\urldef\tempurl%
\url{https://doi.org/10.1515/9781400882014-002}
\showURL{%
\tempurl}


\bibitem[\protect\citeauthoryear{Sychrovský, Šustr, Davoodi, Bowling, Lanctot, and Schmid}{Sychrovský et~al\mbox{.}}{2024}]%
        {sychrovsky2024learning}
\bibfield{author}{\bibinfo{person}{David Sychrovský}, \bibinfo{person}{Michal Šustr}, \bibinfo{person}{Elnaz Davoodi}, \bibinfo{person}{Michael Bowling}, \bibinfo{person}{Marc Lanctot}, {and} \bibinfo{person}{Martin Schmid}.} \bibinfo{year}{2024}\natexlab{}.
\newblock \showarticletitle{Learning Not to Regret}.
\newblock \bibinfo{journal}{\emph{Proceedings of the AAAI Conference on Artificial Intelligence}} \bibinfo{volume}{38}, \bibinfo{number}{14} (\bibinfo{date}{Mar.} \bibinfo{year}{2024}), \bibinfo{pages}{15202--15210}.
\newblock
\urldef\tempurl%
\url{https://doi.org/10.1609/aaai.v38i14.29443}
\showDOI{\tempurl}


\bibitem[\protect\citeauthoryear{Tammelin}{Tammelin}{2014}]%
        {tammelin2014solving}
\bibfield{author}{\bibinfo{person}{Oskari Tammelin}.} \bibinfo{year}{2014}\natexlab{}.
\newblock \bibinfo{title}{Solving large imperfect information games using $\rm{CFR}^+$}.
\newblock
\newblock
\urldef\tempurl%
\url{https://arxiv.org/abs/1407.5042}
\showURL{%
\tempurl}


\bibitem[\protect\citeauthoryear{Tammelin, Burch, Johanson, and Bowling}{Tammelin et~al\mbox{.}}{2015}]%
        {tammelin2015solving}
\bibfield{author}{\bibinfo{person}{Oskari Tammelin}, \bibinfo{person}{Neil Burch}, \bibinfo{person}{Michael Johanson}, {and} \bibinfo{person}{Michael Bowling}.} \bibinfo{year}{2015}\natexlab{}.
\newblock \showarticletitle{Solving heads-up limit Texas Hold'em}. In \bibinfo{booktitle}{\emph{Proceedings of the 24th International Conference on Artificial Intelligence}} (Buenos Aires, Argentina) \emph{(\bibinfo{series}{IJCAI'15})}. \bibinfo{publisher}{AAAI Press}, \bibinfo{address}{1101 Pennsylvania Ave, NW, Suite 300 Washington, DC 20004}, \bibinfo{pages}{645–652}.
\newblock
\showISBNx{9781577357384}


\bibitem[\protect\citeauthoryear{Vickrey}{Vickrey}{1961}]%
        {vickrey1961counterspeculation}
\bibfield{author}{\bibinfo{person}{William Vickrey}.} \bibinfo{year}{1961}\natexlab{}.
\newblock \showarticletitle{Counterspeculation, Auctions, and Competitive Sealed Tenders}.
\newblock \bibinfo{journal}{\emph{The Journal of Finance}} \bibinfo{volume}{16}, \bibinfo{number}{1} (\bibinfo{year}{1961}), \bibinfo{pages}{8--37}.
\newblock
\showISSN{00221082, 15406261}
\urldef\tempurl%
\url{http://www.jstor.org/stable/2977633}
\showURL{%
\tempurl}


\bibitem[\protect\citeauthoryear{Watanabe}{Watanabe}{1960}]%
        {watanabe1960information}
\bibfield{author}{\bibinfo{person}{Satosi Watanabe}.} \bibinfo{year}{1960}\natexlab{}.
\newblock \showarticletitle{Information Theoretical Analysis of Multivariate Correlation}.
\newblock \bibinfo{journal}{\emph{IBM Journal of Research and Development}} \bibinfo{volume}{4}, \bibinfo{number}{1} (\bibinfo{year}{1960}), \bibinfo{pages}{66--82}.
\newblock
\urldef\tempurl%
\url{https://doi.org/10.1147/rd.41.0066}
\showDOI{\tempurl}


\bibitem[\protect\citeauthoryear{Waugh, Bard, and Bowling}{Waugh et~al\mbox{.}}{2009}]%
        {waugh2009strategy}
\bibfield{author}{\bibinfo{person}{Kevin Waugh}, \bibinfo{person}{Nolan Bard}, {and} \bibinfo{person}{Michael Bowling}.} \bibinfo{year}{2009}\natexlab{}.
\newblock \showarticletitle{Strategy Grafting in Extensive Games}. In \bibinfo{booktitle}{\emph{Advances in Neural Information Processing Systems}}, \bibfield{editor}{\bibinfo{person}{Y.~Bengio}, \bibinfo{person}{D.~Schuurmans}, \bibinfo{person}{J.~Lafferty}, \bibinfo{person}{C.~Williams}, {and} \bibinfo{person}{A.~Culotta}} (Eds.), Vol.~\bibinfo{volume}{22}. \bibinfo{publisher}{Curran Associates, Inc.}, \bibinfo{address}{57 Morehouse Lane Red Hook, NY 12571 USA}.
\newblock
\urldef\tempurl%
\url{https://proceedings.neurips.cc/paper_files/paper/2009/file/e0ec453e28e061cc58ac43f91dc2f3f0-Paper.pdf}
\showURL{%
\tempurl}


\bibitem[\protect\citeauthoryear{Zhang, McAleer, and Sandholm}{Zhang et~al\mbox{.}}{2024}]%
        {zhang2024faster}
\bibfield{author}{\bibinfo{person}{Naifeng Zhang}, \bibinfo{person}{Stephen McAleer}, {and} \bibinfo{person}{Tuomas Sandholm}.} \bibinfo{year}{2024}\natexlab{}.
\newblock \bibinfo{title}{Faster Game Solving via Hyperparameter Schedules}.
\newblock
\newblock
\showeprint[arxiv]{2404.09097}~[cs.GT]


\bibitem[\protect\citeauthoryear{Zinkevich}{Zinkevich}{2003}]%
        {zinkevich2003online}
\bibfield{author}{\bibinfo{person}{Martin Zinkevich}.} \bibinfo{year}{2003}\natexlab{}.
\newblock \showarticletitle{Online convex programming and generalized infinitesimal gradient ascent}. In \bibinfo{booktitle}{\emph{Proceedings of the Twentieth International Conference on International Conference on Machine Learning}} (Washington, DC, USA) \emph{(\bibinfo{series}{ICML'03})}. \bibinfo{publisher}{AAAI Press}, \bibinfo{address}{1101 Pennsylvania Ave, NW Washington, DC USA}, \bibinfo{pages}{928–935}.
\newblock
\showISBNx{1577351894}


\bibitem[\protect\citeauthoryear{Zinkevich, Johanson, Bowling, and Piccione}{Zinkevich et~al\mbox{.}}{2007}]%
        {zinkevich2008regret}
\bibfield{author}{\bibinfo{person}{Martin Zinkevich}, \bibinfo{person}{Michael Johanson}, \bibinfo{person}{Michael Bowling}, {and} \bibinfo{person}{Carmelo Piccione}.} \bibinfo{year}{2007}\natexlab{}.
\newblock \showarticletitle{Regret minimization in games with incomplete information}. In \bibinfo{booktitle}{\emph{Advances in neural information processing systems}}. \bibinfo{publisher}{Advances in Neural Information Processing Systems 20}, \bibinfo{address}{340 Pine Street, Sixth Floor. San Francisco. CA}, \bibinfo{pages}{1729--1736}.
\newblock


\end{thebibliography}


\clearpage
\appendix

\section{Proof of Theorem~\ref{thm: correctnes of meta-learning}} \label{section: proof}
\label{app: proof}

First, we will consider normal-form games and show that if a distribution over strategies is marginalizable (in the sense of Definition~\ref{def: NFM}) and is a CCE, then its marginals form a Nash. This is shown in Theorem~\ref{eq: thm 1 cce}. 

Next, we will show that an alternative form of marginalizability can be used for extensive-form games, given in Definition~\ref{def: EFM}. We call this \emph{extensive-form marginalizability (EFM)}. If this definition is satisfied, it implies that normal-form marginalizability is also satisfied, which is shown in Theorem~\ref{thm: equivalence}.

\subsection{Marginalization in Normal-Form Games}

A \defword{normal-form game} is a tuple $(\mc N, \pstrategySet, u)$ where $\mc N$ is a set of players, $\pstrategySet$ is a set of pure strategy profiles, such that if  $\pstrategySet_i$ is the set of pure strategies for player $i$, we have $\pstrategySet = \bigtimes_{i \in \mc N} \pstrategySet_i$. $u = (u_1, u_2, .., u_n)$ is a tuple of utility functions, where $u_i : \pstrategySet \to \mathbb{R}$.  Players may also randomize their strategies. A \defword{mixed strategy} $s_i$ is a distribution over pure strategies: $s_i \in \Delta(\pstrategySet_i)$ where $\Delta(X)$ is the set of distributions over a domain $X$. 

A joint strategy $\jointstrategy$ is a distribution over \emph{pure strategy profiles}, i.e. $\jointstrategy \in \Delta(\pstrategySet)$.   Let $u_i(\jointstrategy) \defeq \mathbb{E}_{\pstrategy \sim d}u_i(\pstrategy)$ be the expected value under $\jointstrategy$ and $ u_i(\pstrategy_i', \jointstrategy_{\opp}) \defeq \mathbb{E}_{\pstrategy \sim d}u_i(\pstrategy_i', \pstrategy_{\opp})$.

\begin{definition}[CCE]
    A joint strategy $\jointstrategy$ is a CCE if $\forall i, \in \mc N$ and $\pstrategy_i' \in \pstrategySet_i$, we have 
    \begin{align}
     u_i(\pstrategy_i', \jointstrategy_{\opp}) - u_i(\jointstrategy)  \leq 0 .
    \end{align}
\end{definition}

Given some joint strategy $\jointstrategy$, let $\mu_i(\jointstrategy)$ denote the \defword{marginal strategy} of player $i$, where $\mu_i(\jointstrategy)(\pstrategy_{i}) \defeq \sum_{\pstrategy_{\opp} \in \pstrategySet_{\opp}} \jointstrategy(\pstrategy_i, \pstrategy_{\opp})$. Let $\mu(\jointstrategy) = (\mu_1(\jointstrategy), \dots \mu_n(\jointstrategy))$ denote the \defword{ marginal strategy profile}. We will overload $\mu(\jointstrategy)$ to also denote the resulting product distribution over $\pstrategySet$ induced by players playing their marginal strategies. Additionally, let $\jointstrategy_{\opp}$ denote the distribution over $\pstrategySet_{\opp}$ where we marginalize out $i$:
\begin{align}
    \jointstrategy_{\opp} (\pstrategy_{\opp}) \defeq \sum_{\pstrategy_{i} \in \pstrategySet_{i}} \jointstrategy(\pstrategy_i, \pstrategy_{\opp}).
\end{align}
We write $\mu_{\opp}(\jointstrategy_{\opp})$ to denote the marginals of $-i$ from $\jointstrategy_{\opp}$.

\begin{definition}[Normal-Form Marginalizable ] \label{def: NFM}
    We say $\jointstrategy$ is $\epsilon$-normal-form marginalizable ($\epsilon$-NFM) if the total correlation of $\jointstrategy$, denoted $\mathbb{I}(\jointstrategy)$,  is less than $\epsilon$:
    \begin{align*}
        \mathbb{I}(\jointstrategy) \defeq D_{\rm KL}\left( \jointstrategy  \ ||\  \mu(\jointstrategy) \right) \leq \epsilon.
    \end{align*}
\end{definition}

We will show that if a CCE's marginalizability  is small, its marginals are close to a Nash equilibrium. First, we prove a lemma which we will use in the proof of this results.

Let $H(d)$ be the Shannon entropy of a distribution $d$. If $d$ is over a discrete set of outcomes $o$, we have $H(d) \defeq -\sum_{o \in O} d(o) \log(d(o))$.

\begin{lemma}\label{lemma:1}
    Let $\jointstrategy$ be a joint strategy and $\jointstrategy' = (\pstrategy_i', \jointstrategy_{\opp})$ be a  joint strategy where $\pl$ plays pure strategy $\pstrategy_i'$ and $\opp$ play according to  $\jointstrategy_{\opp}$. Then 
    \begin{align}
       \mathbb{I}(\jointstrategy) \geq  \mathbb{I}(\jointstrategy').
    \end{align}
\end{lemma}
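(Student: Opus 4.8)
The plan is to rewrite the total correlation purely in terms of Shannon entropies and then split off player $\pl$'s contribution so that the difference $\mathbb{I}(\jointstrategy) - \mathbb{I}(\jointstrategy')$ collapses to a single mutual-information term, which is non-negative by definition. The starting point is the elementary identity obtained by expanding the KL divergence against the product of marginals: for any joint strategy $\jointstrategy$,
\begin{align*}
    \mathbb{I}(\jointstrategy) = D_{\rm KL}\!\left(\jointstrategy \,\|\, \mu(\jointstrategy)\right) = \sum_{j\in\mc N} H(\mu_j(\jointstrategy)) - H(\jointstrategy),
\end{align*}
since each factor $\mu_j(\jointstrategy)$ of the product distribution contributes $-\sum_{\pstrategy}\jointstrategy(\pstrategy)\log\mu_j(\jointstrategy)(\pstrategy_j) = H(\mu_j(\jointstrategy))$, and the $\log\jointstrategy$ term contributes $-H(\jointstrategy)$.

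Next I would isolate player $\pl$ by adding and subtracting the entropy $H(\jointstrategy_{\opp})$ of the opponents' joint marginal. Writing $C_{\opp}(\jointstrategy) \defeq \sum_{j\neq \pl} H(\mu_j(\jointstrategy)) - H(\jointstrategy_{\opp})$ for the total correlation among the opponents alone, this regrouping yields
\begin{align*}
    \mathbb{I}(\jointstrategy) = \Big[H(\mu_\pl(\jointstrategy)) + H(\jointstrategy_{\opp}) - H(\jointstrategy)\Big] + C_{\opp}(\jointstrategy),
\end{align*}
where the bracketed quantity is exactly the mutual information between $\pl$'s pure strategy and the opponents' joint profile under $\jointstrategy$, hence non-negative.

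Finally I would evaluate both terms at $\jointstrategy' = (\pstrategy_\pl', \jointstrategy_{\opp})$. Because the opponents' joint marginal of $\jointstrategy'$ is literally $\jointstrategy_{\opp}$, their individual marginals satisfy $\mu_j(\jointstrategy') = \mu_j(\jointstrategy)$ for $j\neq\pl$ and $H(\jointstrategy'_{\opp}) = H(\jointstrategy_{\opp})$, so $C_{\opp}(\jointstrategy') = C_{\opp}(\jointstrategy)$ is unchanged. Because $\pl$ plays the fixed pure strategy $\pstrategy_\pl'$, it is deterministic and independent of $\opp$, giving $H(\mu_\pl(\jointstrategy')) = 0$ and $H(\jointstrategy') = H(\jointstrategy_{\opp})$, so the bracketed mutual-information term vanishes. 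Subtracting the two expansions then leaves
\begin{align*}
    \mathbb{I}(\jointstrategy) - \mathbb{I}(\jointstrategy') = H(\mu_\pl(\jointstrategy)) + H(\jointstrategy_{\opp}) - H(\jointstrategy) \geq 0,
\end{align*}
which is the claim. There is no deep obstacle here once the entropy decomposition is in place; the only care-points are justifying $\mu_j(\jointstrategy') = \mu_j(\jointstrategy)$ for the opponents so that $C_{\opp}$ is genuinely preserved, and treating zero-probability profiles in the KL/entropy manipulations via the usual convention $0\log 0 = 0$. The heart of the argument is simply that passing from $\jointstrategy$ to $\jointstrategy'$ severs the statistical dependence between $\pl$ and $\opp$ while leaving the opponents' internal correlation intact, and removing a dependence can only decrease total correlation.
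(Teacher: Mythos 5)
Your proposal is correct and follows essentially the same route as the paper's proof: both rewrite the total correlation as $\sum_j H(\mu_j(\jointstrategy)) - H(\jointstrategy)$, invoke sub-additivity of entropy (equivalently, non-negativity of the mutual information between $\pl$'s component and $\jointstrategy_{\opp}$), and use $H(\pstrategy_\pl')=0$ together with the product structure of $\jointstrategy'$ to identify the remaining term with $\mathbb{I}(\jointstrategy')$. Your version is, if anything, slightly more careful in explicitly verifying that the opponents' marginals $\mu_j(\jointstrategy')=\mu_j(\jointstrategy)$ are preserved, a step the paper leaves implicit.
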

\begin{proof}
    By sub-additivity of entropy, we have
    \begin{align*}
        H(\jointstrategy) \leq H(\jointstrategy_{\opp}) + H(\mu_i(\jointstrategy)).
    \end{align*}
    Subtracting $\sum_{i \in N } H(\mu_i(\jointstrategy)) $ from both sides
    \begin{align*}
        & H(\jointstrategy) - \sum_{i \in N } H(\mu_i(\jointstrategy))  \leq H(\jointstrategy_{\opp}) + H(\mu_i(\jointstrategy)) - \sum_{i \in N } H(\mu_i(\jointstrategy)) 
        \\
        \implies &  H(\jointstrategy) - \sum_{i \in N } H(\mu_i(\jointstrategy))  \leq H(\jointstrategy_{\opp})  - \sum_{j \in -i } H(\mu_i(\jointstrategy))
        \\
         \implies &   \sum_{i \in N } H(\mu_i(\jointstrategy)) - H(\jointstrategy)  \geq \sum_{j \in -i } H(\mu_i(\jointstrategy))  -  H(\jointstrategy_{\opp}) .
    \end{align*}
    Next, note that $H(\pstrategy_i') = 0$ since it is a pure strategy (i.e. deterministic distribution). This gives us
    \begin{align*}
         \sum_{i \in N } H(\mu_i(\jointstrategy)) - H(\jointstrategy)  &\geq \sum_{j \in -i } H(\mu_j(\jointstrategy))  -  H(\jointstrategy_{\opp}) +  H(\pstrategy_i') - H(\pstrategy_i')
         \\
         &= \sum_{j \in -i } H(\mu_i(\jointstrategy)) H(\pstrategy_i')  -  H(\jointstrategy'). 
    \end{align*}
    Where the last line uses the fact that $\jointstrategy'$ is a product distribution of $\jointstrategy_{\opp}$ and $\pstrategy_i'$. Using the definition of total correlation, we conclude,
    \begin{align*}
         \mathbb{I}(\jointstrategy) \geq  \mathbb{I}(\jointstrategy').
    \end{align*}
\end{proof}

\begin{theorem}\label{thm: nf}
     If ${\rm CCE\ Gap} (\jointstrategy) = \gamma$ and $\jointstrategy$ is $\epsilon$-normal-form-marginalizable, then ${\rm NashGap} (\mu(\jointstrategy)) \leq \gamma + 2M\sqrt{2 \epsilon} $ where $M = \max_{i \in \mc N} \max_{\pstrategy \in \pstrategySet} | u_i(\pstrategy)|$.
\end{theorem}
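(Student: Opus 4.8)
The plan is to bound the NashGap of the marginal product profile $\mu(\jointstrategy)$ by inserting the correlated joint strategy $\jointstrategy$ as an intermediate object, and then to control the two resulting distributional-shift errors through Pinsker's inequality. First I would fix an arbitrary player $\pl$ and let $\pstrategy_i'$ be a best response against the opponents' marginal product $\mu_{\opp}(\jointstrategy_{\opp})$, so the per-player NashGap term for $\mu(\jointstrategy)$ equals $u_i(\pstrategy_i', \mu_{\opp}(\jointstrategy_{\opp})) - u_i(\mu(\jointstrategy))$. I would then telescope this into three differences,
\begin{align*}
u_i(\pstrategy_i', \mu_{\opp}(\jointstrategy_{\opp})) - u_i(\mu(\jointstrategy))
&= [u_i(\pstrategy_i', \mu_{\opp}(\jointstrategy_{\opp})) - u_i(\pstrategy_i', \jointstrategy_{\opp})] \\
&\quad + [u_i(\pstrategy_i', \jointstrategy_{\opp}) - u_i(\jointstrategy)] + [u_i(\jointstrategy) - u_i(\mu(\jointstrategy))],
\end{align*}
and bound each term separately. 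The middle term is at most $\gamma$ immediately: it is the gain of a single deviation $\pstrategy_i'$ against the correlated opponents $\jointstrategy_{\opp}$, and $\gamma = {\rm CCE\ Gap}(\jointstrategy)$ is by definition the worst such gain over all deviations.

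For the two outer terms I would use two standard ingredients. The first is that bounded utilities give $|u_i(P) - u_i(Q)| \le 2M \cdot \mathrm{TV}(P,Q)$ for any two distributions $P,Q$ over pure profiles, since the utility difference is an expectation of a function bounded by $M$ against the signed measure $P - Q$. The second is Pinsker's inequality, $\mathrm{TV}(P,Q) \le \sqrt{D_{\rm KL}(P \,\|\, Q)/2}$. For the third telescoping term, $\epsilon$-NFM gives $D_{\rm KL}(\jointstrategy \,\|\, \mu(\jointstrategy)) = \mathbb{I}(\jointstrategy) \le \epsilon$ directly, so combining the two ingredients bounds it by $2M\sqrt{\epsilon/2} = M\sqrt{2\epsilon}$. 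For the first term, I would invoke Lemma~\ref{lemma:1}: replacing player $\pl$'s component by the pure strategy $\pstrategy_i'$ does not increase total correlation, so $\mathbb{I}(\pstrategy_i', \jointstrategy_{\opp}) \le \mathbb{I}(\jointstrategy) \le \epsilon$. Because the $\pl$-component is deterministic and identical in both the joint and the product distribution, this total correlation collapses exactly to $D_{\rm KL}(\jointstrategy_{\opp} \,\|\, \mu_{\opp}(\jointstrategy_{\opp}))$, so the same Pinsker-plus-boundedness argument bounds the first term by $M\sqrt{2\epsilon}$ as well.

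Summing the three bounds yields $\gamma + 2M\sqrt{2\epsilon}$ for the fixed player $\pl$, and since this bound is uniform in $\pl$, taking the maximum over $\pl \in \mc N$ gives ${\rm NashGap}(\mu(\jointstrategy)) \le \gamma + 2M\sqrt{2\epsilon}$, as claimed. The main obstacle I anticipate is the first term: I must argue carefully that the total correlation of $(\pstrategy_i', \jointstrategy_{\opp})$ reduces precisely to the residual correlation among the opponents, $D_{\rm KL}(\jointstrategy_{\opp} \,\|\, \mu_{\opp}(\jointstrategy_{\opp}))$ — this relies on the deterministic $i$-component contributing zero entropy and factoring out of both distributions — and then that Lemma~\ref{lemma:1} is exactly what certifies this residual correlation is still at most $\epsilon$ after the deviation. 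The remaining steps are routine applications of Pinsker's inequality and the total-variation utility bound.
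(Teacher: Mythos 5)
Your proposal is correct and follows essentially the same route as the paper's proof: the three-term telescope you write down is algebraically identical to the paper's step of upper-bounding $u_i(\jointstrategy)$ by $u_i(\mu(\jointstrategy)) + M\sqrt{2\epsilon}$ and lower-bounding $u_i(\pstrategy_i', \jointstrategy_{\opp})$ by $u_i(\pstrategy_i', \mu_{\opp}(\jointstrategy_{\opp})) - M\sqrt{2\epsilon}$, with the same ingredients (bounded utilities plus Pinsker for the outer terms, Lemma~\ref{lemma:1} to certify $\mathbb{I}(\pstrategy_i',\jointstrategy_{\opp}) \le \epsilon$, and the CCE Gap for the middle term). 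Your explicit remark that the deterministic $i$-component factors out so the total correlation of $(\pstrategy_i',\jointstrategy_{\opp})$ collapses to $D_{\rm KL}(\jointstrategy_{\opp}\,\|\,\mu_{\opp}(\jointstrategy_{\opp}))$ is a point the paper glosses over with ``follow identical steps as above,'' but it is the same argument.
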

\begin{proof}
    First we will show $ \left | u_i(\jointstrategy) - u_i(\mu(\jointstrategy)) \right| \leq  M \sqrt{2 \epsilon}$ for an arbitrary player $i \in \mc N$.
    \begin{align}
       \left | u_i(\jointstrategy) - u_i(\mu(\jointstrategy)) \right| &= \left | \sum_{\pstrategy \in \pstrategySet} u_i(\pstrategy) \jointstrategy(\pstrategy) -  \sum_{\pstrategy \in \pstrategySet} u_i(\pstrategy) \mu(\jointstrategy)(\pstrategy) \right|
       \\
       &=  \left | \sum_{\pstrategy \in \pstrategySet} u_i(\pstrategy) \jointstrategy(\pstrategy) - \mu(\jointstrategy)(\pstrategy) \right| . \label{eq: nf 1}
    \end{align}
    By the triangle inequality, 
    \begin{align}
        \eqref{eq: nf 1} \leq  \sum_{\pstrategy \in \pstrategySet} \left|  u_i(\pstrategy) \jointstrategy(\pstrategy) - \mu(\jointstrategy)(\pstrategy) \right|. \label{eq: nf 2}
    \end{align}
    And using the absolute value, 
    \begin{align}
         \eqref{eq: nf 2} & \leq  \sum_{\pstrategy \in \pstrategySet} | u_i(\pstrategy)|  \left |\jointstrategy(\pstrategy) - \mu(\jointstrategy)(\pstrategy) \right| 
        \\
        & \leq  \sum_{\pstrategy \in \pstrategySet} M  \left |\jointstrategy(\pstrategy) - \mu(\jointstrategy)(\pstrategy)  \right|. \label{eq: nf 3}
    \end{align}

    Since $\mathbb{I}(\jointstrategy) \leq \epsilon$, we may apply Pinsker's inequality, which gives 
    \begin{align*}
         \eqref{eq: nf 2} & \leq   M  \sum_{\pstrategy \in \pstrategySet} \left |\jointstrategy(\pstrategy) - \mu(\jointstrategy)(\pstrategy) \right| \leq 2 M  \sqrt{\epsilon/2} =  M \sqrt{2\epsilon}.
    \end{align*}
    Next, let $\jointstrategy' =  (\pstrategy_i', \mu(\jointstrategy_{\opp}))$  where $\pstrategy_i'$ is an arbitrary pure strategy of $i$. Applying Lemma~\ref{lemma:1} we have
    \begin{align*}
        \mathbb{I}(\jointstrategy') \leq \mathbb{I}(\jointstrategy).
    \end{align*}
    We may then follow identical steps as above with $\jointstrategy'$ instead of $\jointstrategy$. The result is that for any $\pstrategy_i'$,  $\left | u_i(\pstrategy_i', \jointstrategy_{\opp}) - u_i(\pstrategy_i', \mu(\jointstrategy_{\opp})) \right| \leq  M \sqrt{2\epsilon}$. 

    Putting everything together, we will now show that $\mu(\jointstrategy)$ is an approximate Nash equilibrium. First, recall that by assumption
    \begin{align}
        {\rm CCE\ Gap} (\jointstrategy) = 
        \max_{\pl \in \mc N}
        \left[u_\pl(\br(\jointstrategy_{\opp}), \jointstrategy_\opp) - u_\pl(\jointstrategy)\right] = \gamma .
    \end{align}
    Since there is always a pure strategy best-response, we have 
    \begin{align}
        {\rm CCE\ Gap} (\jointstrategy) = 
        \max_{\pl \in \mc N} \max_{\pstrategy_i' \in \pstrategySet_i}
        \left[u_\pl(\pstrategy_i', \jointstrategy_{\opp}) - u_\pl(\jointstrategy)\right] = \gamma.   \label{eq: thm 1 cce} 
    \end{align}
    Using our above results, we may upper bound $ u_i(\jointstrategy)$ with $u_i(\mu(\jointstrategy)) + M\sqrt{2\epsilon}$ and lower bound $ u_i(\pstrategy_i', \jointstrategy_{\opp})$ with $u_i(\pstrategy_i', \mu(\jointstrategy)) - M\sqrt{2\epsilon}$. Thus, 
     \begin{align*}
        \gamma &= \max_{\pl \in \mc N} \max_{\pstrategy_i' \in \pstrategySet_i}
        \left[u_\pl(\pstrategy_i', \jointstrategy_{\opp}) - u_\pl(\jointstrategy)\right]  
        \\
        &\geq  \max_{\pl \in \mc N} \max_{\pstrategy_i' \in \pstrategySet_i}
        \left[(u_i(\pstrategy_i', \mu(\jointstrategy)) -  M\sqrt{2\epsilon}) - (u_i(\mu(\jointstrategy)) + M\sqrt{2\epsilon}) \right] .
    \end{align*}
    Which implies 
    \begin{align*}
    \gamma + 2M\sqrt{2\epsilon} &\geq \max_{\pl \in \mc N} \max_{\pstrategy_i' \in \pstrategySet_i}
    \left[u_i(\pstrategy_i', \mu(\jointstrategy))- u_i(\mu(\jointstrategy)) \right] 
    \\
    &= {\rm NashGap} (\mu(\jointstrategy)),
    \end{align*}
    where we again use the fact there there is a pure strategy best-response. 
\end{proof}

\subsection{Computing Marginalizability In EFGs}

We next show analogous results for extensive-form games, but make use of a definition of marginalizability that exploits the structure of extensive-form games. 

Let $\psi^T = (\strategy^t)_{t=1}^T$ be a sequence of behavior strategy profiles. Recall the definition of $d(\psi^T)$, $\mu(\psi^T)$ and $\mathbb{I}(\psi^T)$ from Section~\ref{subsec:meta-loss function}. We will show that extensive-form marginalizability is equivalent to normal-form marginalizability in guaranteeing the sequence converges to Nash.


Given some sequence of mixed strategy profiles  $\zeta^T = (\mstrategy^t)_{t=1}^T$, let $\jointstrategy(\zeta^T)$ denote the empirical distribution of play of $\zeta^T$. Let $\mathbb{I}(\zeta^T)$ be a shorthand for $\mathbb{I}(\jointstrategy(\zeta^T))$. We say a sequence of mixed strategy profiles $\zeta^T$ and behavior strategy profiles $\psi^T$ are \defword{equivalent} if $\forall t\in \{1,\dots T\}, \pl \in \mc N$, we have mixed strategy $\mstrategy_i^t$ produces the contribution over terminals as $\strategy_\pl^t$. This is for any  $\strategy_\pl^t$, the existence of $\mstrategy_i^t$ is guaranteed by Kuhn's theorem in games of perfect recall,  and vice-versa. 

\begin{lemma} \label{lemma: 2}
    For any $\psi^T = (\strategy^t)_{t=1}^T$, there exists a sequence of equivalent mixed strategy profiles $\zeta^T = (\mstrategy^t)_{t=1}^T$ where $\mathbb{I}(\psi^T) = \mathbb{I}(\zeta^T)$.
\end{lemma}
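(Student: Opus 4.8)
The plan is to produce the sequence $\zeta^T$ explicitly, check that it is equivalent to $\psi^T$ by construction, and then reduce the claimed identity $\mathbb{I}(\psi^T)=\mathbb{I}(\zeta^T)$ to a statement about how a total correlation over terminals relates to one over pure-strategy profiles under the natural ``play-out'' map. First I would invoke Kuhn's theorem: in a game of perfect recall, for each step $t$ and player $\pl$ the behavior strategy $\strategy_\pl^t$ admits a realization-equivalent mixed strategy $\mstrategy_\pl^t$, i.e.\ one inducing the same reach contribution $d_\pl(\cdot)(z)$ over every terminal $z\in\mc Z$. Setting $\zeta^T=(\mstrategy^t)_{t=1}^T$ makes $\zeta^T$ equivalent to $\psi^T$ immediately, so only $\mathbb{I}(\psi^T)=\mathbb{I}(\zeta^T)$ remains.

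The key observation I would exploit is that both quantities are KL divergences of the \emph{same} pair of distributions, read at two different resolutions. Write $\jointstrategy\defeq\jointstrategy(\zeta^T)=\tfrac1T\sum_t\bigotimes_\pl\mstrategy_\pl^t$ for the empirical distribution of play and $\mu(\jointstrategy)=\bigotimes_\pl\big(\tfrac1T\sum_t\mstrategy_\pl^t\big)$ for its product-of-marginals surrogate, so that $\mathbb{I}(\zeta^T)=D_{\rm KL}\big(\jointstrategy\,\|\,\mu(\jointstrategy)\big)$ over profiles $\pstrategy\in\pstrategySet$. I would then verify that pushing both distributions through the map $\Phi$ that plays a pure-strategy profile out (against chance) to a terminal recovers exactly the terminal-level objects: $\Phi_\ast\jointstrategy=d(\psi^T)$ and $\Phi_\ast\mu(\jointstrategy)=\mu(\psi^T)$. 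This is a direct calculation from equivalence, the factorization $d(\strategy)(z)=d_c(z)\prod_\pl d_\pl(\strategy)(z)$, and the fact that ``$\pstrategy$ plays every action on the path to $z$'' (written $\pstrategy\to z$) factors across players, giving $\sum_{\pstrategy\to z}\prod_\pl\mstrategy_\pl^t(\pstrategy_\pl)=\prod_\pl d_\pl(\mstrategy^t)(z)$. Hence $\mathbb{I}(\psi^T)=D_{\rm KL}\big(\Phi_\ast\jointstrategy\,\|\,\Phi_\ast\mu(\jointstrategy)\big)$.

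With this in hand the data-processing inequality yields one direction for free, $\mathbb{I}(\psi^T)\le\mathbb{I}(\zeta^T)$, and the whole content of the lemma is the reverse inequality. This is where the specific construction of $\zeta^T$ must be used rather than an arbitrary equivalent sequence: I would aim to show that $\Phi$ can be taken to be a sufficient statistic for discriminating $\jointstrategy$ from $\mu(\jointstrategy)$, i.e.\ that the likelihood ratio $\jointstrategy(\pstrategy)/\prod_\pl\mu_\pl(\jointstrategy)(\pstrategy_\pl)$ depends on $\pstrategy$ only through the terminal it induces. Concretely, I would try to select each $\mstrategy_\pl^t$ supported on pure strategies that are pinned down by the player's realized terminal-sequence, so that a sampled profile carries no correlation beyond what is already visible at the terminals; the equality case of data processing would then close the gap.

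I expect this last step to be the main obstacle, and it is the one I would scrutinize most carefully. The subtlety is off-path behavior: a pure strategy prescribes actions at information sets that are never reached, and if those off-path choices covary with the iterate index $t$ across the sequence, they inject correlation into $\mathbb{I}(\zeta^T)$ that is entirely invisible to the terminal-level $\mathbb{I}(\psi^T)$. The crux is therefore to argue that the equivalent mixed strategies can always be chosen so that this off-path correlation is absent---equivalently, that the play-out map is information-preserving for the chosen $\zeta^T$---and to pin down precisely where perfect recall is what licenses such a choice.
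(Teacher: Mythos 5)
You set up the right frame---Kuhn's theorem for equivalence, the observation that $d(\psi^T)$ and $\mu(\psi^T)$ are pushforwards of $\jointstrategy(\zeta^T)$ and $\mu(\jointstrategy(\zeta^T))$ under the play-out kernel, and the data-processing inequality giving $\mathbb{I}(\psi^T)\le\mathbb{I}(\zeta^T)$ for \emph{every} equivalent sequence. But the lemma is an existence statement whose entire content is the reverse direction, and that is precisely the step you leave as an aim rather than an argument. Saying you would ``try to select each $\mstrategy_\pl^t$ supported on pure strategies that are pinned down by the player's realized terminal-sequence'' names the right target but does not construct such strategies, verify that they remain realization-equivalent, or check the equality case of data processing for them. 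As written, the proposal establishes only $\mathbb{I}(\psi^T)\le\mathbb{I}(\zeta^T)$, which is the direction the lemma does not need.

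The paper fills this hole with an explicit construction: fix an injection $m:\mc Z\to\pstrategySet$ (treating chance as a player) sending each terminal to one pure profile that plays to it, let $\pstrategySet^m=m(\mc Z)$ with per-player projections $\pstrategySet_i^m$, and take each $\mstrategy_i^t$ to be an equivalent mixed strategy supported on $\pstrategySet_i^m$. Off-path actions are thereby frozen once and for all by the choice of $m$, the sum defining $\mathbb{I}(\zeta^T)$ collapses to a re-indexing of the sum over terminals, and chance's fixed contribution cancels from the log-ratio, yielding equality directly rather than via an abstract sufficiency argument. Your instinct that off-path covariation across $t$ is the crux is exactly right, and the concern you raise is not vacuous: the claim that realization-equivalent mixed strategies can always be confined to the per-player supports $\pstrategySet_i^m$ (equivalently, that the empirical joint can be made to concentrate on $m(\mc Z)$) is itself nontrivial and is asserted rather than proved in the paper, so it deserves exactly the scrutiny you promise. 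But a complete proof must commit to such a construction and verify it; identifying the obstacle is not the same as clearing it.
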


\begin{proof}
    Note that we may associate a single pure strategy profile with each terminal, if we include chance as a player in this pure strategy. Let $\mc N_c$ be the set of players including chance. For this proof, we will denote the set of pure strategy profiles including chance as $\pstrategySet$ and the set of profiles without chance as  $\pstrategySet_{-c}$. Let $m : \mc Z \to \pstrategySet$ be a mapping from terminals to pure strategies where $m(z)$ is a pure strategy that plays to $z$. Let $\pstrategySet^m = \bigcup_{z \in \mc Z} m(z)$. Let $\pstrategySet_i^m = \{ \pstrategy_i \in \pstrategySet_i \mid \exists \pstrategy_{\opp} : (\pstrategy_i, \pstrategy_{\opp}) \in \pstrategySet^m \}$ . We may find a joint strategy $\jointstrategy$ that is equivalent to $d(\psi^T)$ in its resulting distribution over terminals that puts non-zero weight only on $\pstrategySet^m$. Define $\jointstrategy$ as follows:  $\forall z \in \mc Z$, let  $\jointstrategy(m(z)) = d(\psi^T)(z)$. 

    Next, for any $\strategy^t$, let $\mstrategy^t$ be an equivalent mixed strategy profile to $\strategy^t$. Since we must have $\jointstrategy(\pstrategy) = \frac{1}{T}\sum_{t = 1}^T \mstrategy^t(\pstrategy)$  and $\jointstrategy$ only puts positive weight on elements of $\pstrategySet^m$, we have that each  $\mstrategy^t$ only puts positive weight on elements of $\pstrategySet^m$, and for each $i \in \mc N$, $\mstrategy_i^t$ only puts positive weight on $\pstrategySet_i^m$. Thus, if $\equiv$ denotes equivalent in the sense of Kuhn's theorem, we have:
    \begin{align*}
        \strategy_i^t \equiv \mstrategy_i^t \quad \forall i \in \mc N, t \in {1, 2, ..., T}.
    \end{align*}

    Call this sequence of $T$ mixed strategy profiles $\zeta^T$. Let $d_i(\strategy_i^t)$ be the contribution of reaching each terminal by player $i$. Since $\strategy_i^t$ and  $\mstrategy_i^t$ are equivalent for any $i \in N$ and $t \in {1, 2, ..., T}$, we have
    \begin{align*}
        \frac{1}{T} \sum_{t = 1}^T d_i(\strategy_i^t)(z) =  \frac{1}{T} \sum_{t = 1}^T d_i(\mstrategy_i^t)(m(z)) \quad \forall z \in \mc Z.
    \end{align*}
    Which implies 
    \begin{align*}
        \prod_{i \in \mc N_c} \frac{1}{T} \sum_{t = 1}^T d_i(\strategy_i^t)(z) &= \prod_{i \in \mc N_c}  \frac{1}{T} \sum_{t = 1}^T d_i(\mstrategy_i^t)(m(z))  \quad \forall z \in \mc Z
        \\
        \implies  \mu(\psi^T)(z) &=  \mu(\zeta^T)(m(z))  \quad \forall z \in \mc Z.
    \end{align*}    

    Next, returning to $\mathbb{I}(\psi^T) $, we have:
    \begin{align*}
        \mathbb{I}(\psi^T) \defeq \sum_{z \in \mc Z} d(\psi^T)(z) \log \left( \frac{ d(\psi^T)(z)}{ \mu(\psi^T)(z)}  \right). 
    \end{align*}

    Replacing the summation over $z \in \mc Z$ to one over $\pstrategy \in \pstrategySet^m$, we have:
    \begin{align}
        \mathbb{I}(\psi^T) = \sum_{\pstrategy \in \mc \pstrategySet^m} d(\psi^T)(m^{-1}(\pstrategy)) \log \left( \frac{ d(\psi^T)(m^{-1}(\pstrategy))}{ \mu(\psi^T)(m^{-1}(\pstrategy))}  \right).  \label{eq: replace}
    \end{align}
    Where $m^{-1}$ is the inverse mapping of $m$. Note that $m^{-1}$ is injective from  $\pstrategySet^m$ to $\mc Z$, since each $z$ is produced by a unique sequence of actions including chance. 

    By construction of $\zeta^T$, $\jointstrategy(\zeta^T)(m(z)) = d(\psi^T)(z) \ \forall z \in \mc Z$, which implies $ \jointstrategy(\zeta^T)(\pstrategy) = d(\psi^T)(m^{-1}(\pstrategy))$. Likewise, since $ \mu(\psi^T)(z) =  \mu(\zeta^T)(m(z)) \ \forall z \in \mc Z $, we have $ \mu(\psi^T)(m^{-1}(\pstrategy)) =  \mu(\zeta^T)(\pstrategy)$. Using these substitutions into \eqref{eq: replace}, we have 
     \begin{align*}
        \mathbb{I}(\psi^T) = \sum_{\pstrategy \in \mc \pstrategySet^m} \jointstrategy(\zeta^T)(\pstrategy) \log \left( \frac{ \jointstrategy(\zeta^T)(\pstrategy)}{ \mu(\zeta^T)(\pstrategy)}  \right) \defeq  \mathbb{I}_c(\zeta^T). 
    \end{align*}
    Where $ \mathbb{I}_c(\zeta^T)$ denotes the normal-form marginalizability where we include chance as a player. Let  $\mathbb{I}(\zeta^T)$ denote the normal-form marginalizability where  where we \emph{do not} include chance as a player. Then, we have
    \begin{align*}
        \mathbb{I}_c(\zeta^T)  \defeq \sum_{\pstrategy \in \mc \pstrategySet^m} \jointstrategy(\zeta^T)(\pstrategy) \log \left( \frac{ \jointstrategy(\zeta^T)(\pstrategy)}{ \mu(\zeta^T)(\pstrategy)}  \right) .
    \end{align*}

    However, note that since chance always plays a fixed strategy, which we denote $\mstrategy_c$, we have
    \begin{align*}
        \jointstrategy(\zeta^T)(\pstrategy)  &= \frac{1}{T} \sum_{t=1}^T \left( \prod_{i \in N_c} \mstrategy^t_i(\pstrategy_i) \right)
        \\
        & = \mstrategy_c(\pstrategy_c) \frac{1}{T}\sum_{t=1}^T   \left(  \prod_{i \in N} \mstrategy^t_i(\pstrategy_i) \right)
        \\
        &=  \mstrategy_c(\pstrategy_c) \frac{1}{T}\sum_{t=1}^T   \left(  \prod_{i \in N} \mstrategy^t_i(\pstrategy_i) \right) 
        \\
        &= \mstrategy_c(\pstrategy_c) \jointstrategy_{-c}(\zeta^T)(\pstrategy_{-c}),
    \end{align*}
    where $\jointstrategy_{-c}(\zeta^T)$ is a joint strategy for all players except $c$. 
    
    Similarly, we have
    \begin{align*}
        \mu(\zeta^t)(\pstrategy) &= \prod_{i \in N_c} \frac{1}{T}\sum_{t=1}^T     \mstrategy^t_i(\pstrategy_i)
        \\
        &=  \left(\frac{1}{T}\sum_{t=1}^T     \mstrategy^t_c(\pstrategy_c) \right)   \prod_{i \in N} \frac{1}{T}\sum_{t=1}^T     \mstrategy^t_i(\pstrategy_i)
        \\
        &= \mstrategy_c(\pstrategy_c)    \prod_{i \in N} \frac{1}{T}\sum_{t=1}^T     \mstrategy^t_i(\pstrategy_i)
        \\
        &=  \mstrategy_c(\pstrategy_c)  \mu_{-c}(\zeta^T)(\pstrategy_{-c}),
    \end{align*}
    where $ \mu_{-c}(\zeta^T)$ are the marginals of  $\jointstrategy_{-c}(\zeta^T)$ for all players except chance.  

    Then, we have
    \begin{align*}
        \mathbb{I}_c(\zeta^T)  &\defeq \sum_{\pstrategy \in \mc \pstrategySet^m} \jointstrategy(\zeta^T)(\pstrategy) \log \left( \frac{ \jointstrategy(\zeta^T)(\pstrategy)}{ \mu(\zeta^T)(\pstrategy)}  \right) 
        \\
        & =\sum_{\pstrategy \in \mc \pstrategySet^m_{-c}} \sum_{\rho_c \in \pstrategySet^m_{c}}  
         \mstrategy_c(\pstrategy_c) \jointstrategy_{-c}(\zeta^T)(\pstrategy_{-c})
         \\ &\hspace{14ex} \log \left( \frac{  \mstrategy_c(\pstrategy_c) \jointstrategy_{-c}(\zeta^T)(\pstrategy_{-c})}{ \mstrategy_c(\pstrategy_c)  \mu_{-c}(\zeta^T)(\pstrategy_{-c})}  \right)
        \\
        & =\sum_{\pstrategy \in \mc \pstrategySet^m_{-c}} \sum_{\rho_c \in \pstrategySet^m_{c}}  
         \mstrategy_c(\pstrategy_c) \jointstrategy_{-c}(\zeta^T)(\pstrategy_{-c})
         \\ & \hspace{14ex} \log \left( \frac{  \jointstrategy_{-c}(\zeta^T)(\pstrategy_{-c})}{ \mu_{-c}(\zeta^T)(\pstrategy_{-c})}  \right)
         \\
           & = \sum_{\pstrategy \in \mc \pstrategySet^m_{-c}}  \jointstrategy_{-c}(\zeta^T)(\pstrategy_{-c})
         \log \left( \frac{  \jointstrategy_{-c}(\zeta^T)(\pstrategy_{-c})}{ \mu_{-c}(\zeta^T)(\pstrategy_{-c})}  \right) 
         \\ & \hspace{14ex} \sum_{\rho_c \in \pstrategySet^m_{c}}  
         \mstrategy_c(\pstrategy_c)
         \\
          & = \sum_{\pstrategy \in \mc \pstrategySet^m_{-c}}  \jointstrategy_{-c}(\zeta^T)(\pstrategy_{-c})
         \log \left( \frac{  \jointstrategy_{-c}(\zeta^T)(\pstrategy_{-c})}{ \mu_{-c}(\zeta^T)(\pstrategy_{-c})}  \right) 
         \\
        &= \mathbb{I}(\zeta^T),
    \end{align*}
    where the second last line uses the fact that $\sum_{\rho_c \in \pstrategySet^m_{c}}  \mstrategy_c(\pstrategy_c) = 1$.  Thus we have shown that $\mathbb{I}(\psi^T) = \mathbb{I}(\zeta^T)$.
\end{proof}

Let $\overline \strategy$ be the average behavior strategy profile of $\psi$ (for a definition see equation 4 of \citep{zinkevich2008regret}). 

The previous lemma establishes a relationship between extensive-form marginalizability and normal-form marginalizability. This allows us to show an analogous result to Theorem~\ref{thm: nf} for extensive-form games. It shows that if $\psi^T$ is produced by a regret-minimizer after $T$ iterations and $\psi^T$ is approximately extensive-form marginalizable, then  $\overline \strategy^T$ is boundedly far from Nash.

\begin{customtheorem}
\label{thm: equivalence}
    If $\psi^T$ was produced by an external regret minimizer with regret bounded by $\mathcal{O}(\sqrt{T})$ after $T$ iterations and  $\psi^T$ is $\epsilon$-EFM, then
    \begin{align}
        {\rm NashGap} (\overline \strategy^T) \leq \mathcal{O}(1/\sqrt{T})+ 2 M \sqrt{2\epsilon},
    \end{align}
    where $M = \max_{\pl \in \mc N} \max_{z \in \mc Z} | u_i(z)| $.
\end{customtheorem}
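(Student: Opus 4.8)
The plan is to reduce the extensive-form statement to the already-established normal-form Theorem~\ref{thm: nf} by passing through the equivalent mixed-strategy sequence supplied by Lemma~\ref{lemma: 2}, and then to translate the resulting bound on the normal-form marginals back into a bound on the average behavior profile $\overline{\strategy}^T$.

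First I would invoke Lemma~\ref{lemma: 2} to obtain a sequence $\zeta^T = (\mstrategy^t)_{t=1}^T$ of mixed strategy profiles equivalent to $\psi^T$ with $\mathbb{I}(\zeta^T) = \mathbb{I}(\psi^T) \le \epsilon$. Writing $\jointstrategy \defeq \jointstrategy(\zeta^T)$ for the empirical distribution of play, this says precisely that $\jointstrategy$ is $\epsilon$-NFM in the sense of Definition~\ref{def: NFM}. Because equivalence preserves each player's per-step contribution to every terminal, it preserves all expected utilities and best-response values; hence the external-regret sequence of $\zeta^T$ coincides with that of $\psi^T$ and is $\mathcal{O}(\sqrt{T})$. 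The standard regret-to-CCE guarantee then yields ${\rm CCE\ Gap}(\jointstrategy) \le \mathcal{O}(1/\sqrt{T})$, i.e. $\gamma = \mathcal{O}(1/\sqrt{T})$ in the notation of Theorem~\ref{thm: nf}. With both hypotheses of that theorem verified, applying it to $\jointstrategy$ gives
\begin{align*}
    {\rm NashGap}(\mu(\jointstrategy)) \le \mathcal{O}(1/\sqrt{T}) + 2M\sqrt{2\epsilon},
\end{align*}
where $\mu(\jointstrategy)$ is the normal-form marginal profile.

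The remaining, and most delicate, step is to identify ${\rm NashGap}(\mu(\jointstrategy))$ with ${\rm NashGap}(\overline{\strategy}^T)$. The key fact is that the reach-weighted average strategy of Zinkevich et al.\ is defined exactly so that its realization plan (its contribution to each terminal) equals the time-average of the per-step realization plans: $d_\pl(\overline{\strategy}^T)(z) = \frac{1}{T}\sum_t d_\pl(\strategy^t)(z)$ for every player $\pl$ and terminal $z$. Converting each normal-form marginal $\mu_\pl(\jointstrategy)$ back to behavior form via Kuhn's theorem yields a behavior strategy whose realization plan is $\sum_{\pstrategy_\pl} \mu_\pl(\jointstrategy)(\pstrategy_\pl)\, d_\pl(\pstrategy_\pl) = \frac{1}{T}\sum_t d_\pl(\mstrategy^t) = \frac{1}{T}\sum_t d_\pl(\strategy^t) = d_\pl(\overline{\strategy}^T)$, where the middle equality uses $\mu_\pl(\jointstrategy)(\pstrategy_\pl) = \frac{1}{T}\sum_t \mstrategy^t_\pl(\pstrategy_\pl)$ and the equivalence $\mstrategy^t_\pl \equiv \strategy^t_\pl$. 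Thus $\mu(\jointstrategy)$ and $\overline{\strategy}^T$ induce identical per-player contributions to every terminal. Since both the profile utility and the best-response value of any player depend only on these contributions, the two profiles share the same NashGap, which delivers the claimed bound.

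I expect the main obstacle to be this last step: making precise that NashGap is a function only of the realization plans (so that the normal-form and behavior-form profiles are interchangeable for its computation), and verifying the realization-plan identity $d_\pl(\overline{\strategy}^T) = \frac{1}{T}\sum_t d_\pl(\strategy^t)$ that is built into the definition of the average strategy. Everything else is bookkeeping that chains Lemma~\ref{lemma: 2}, the regret-to-CCE guarantee, and Theorem~\ref{thm: nf}.
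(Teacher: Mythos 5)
Your proposal is correct and follows essentially the same route as the paper's proof: pass to the equivalent mixed-strategy sequence via Lemma~\ref{lemma: 2}, use the regret-to-CCE guarantee to bound the CCE gap of the empirical joint distribution, apply Theorem~\ref{thm: nf}, and transfer the bound to $\overline{\strategy}^T$ because its realization plans coincide with those of the normal-form marginals. Your treatment of that last transfer step is in fact more explicit than the paper's, which simply asserts the equivalence of the distributions over terminals.
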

\begin{proof}
    Let $\zeta^T$ be an equivalent sequence of mixed strategies to $\psi^T$ as defined by Lemma~\ref{lemma: 2}. 
    If $\psi^T$ has been produced by a regret minimizer, the empirical distribution of play, $\jointstrategy(\zeta^T)$, is a $\mathcal{O}(1/\sqrt{T})$-CCE. 
    Applying Theorem~\ref{thm: nf} shows that the NashGap of marginals $\mu(\jointstrategy(\zeta^T))$ is  $\mathcal{O}(1/\sqrt{T})+ 2 M' \sqrt{2\epsilon}$ where $M'= \max_{\pl \in \mc N} \max_{\pstrategy \in \pstrategySet} | u_i(\pstrategy)|$. 
    Since the marginals $\mu(\jointstrategy(\zeta^T))$ are equivalent in their distribution over terminals to  $\overline \strategy^T$,  the NashGap of $\overline \strategy^T$ must also be a $\mathcal{O}(1/\sqrt{T})+ 2M' \sqrt{2\epsilon}$. Lastly, we may replace $M'$ with $M = \max_{\pl \in \mc N} \max_{z \in \mc Z} | u_i(z)|$ since $M \geq M'$. 
\end{proof}

\section{Training Details}
\label{app: training details}
\begin{figure*}[t!]
    \centering
    \includegraphics[width=0.8\textwidth]{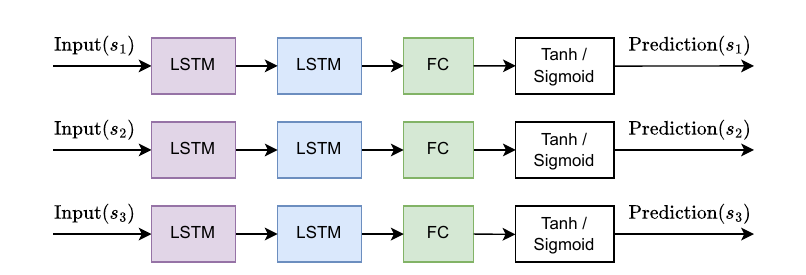}
    \caption{Neural network architecture used for all meta-learned algorithms. 
    The input is processed by two LSTM layers and a fully connected (FC) layer, treating each infostate $\{s_k\}_{k=1}^3$ as a separate batch element.
    We bound the output of the network by using the Tanh, or Sigmoid activation function.
    This choice is treated as a hyperparameter.
    Same colours indicate shared parameters.}
    \label{fig: network architecture}
\end{figure*}

In order to select values for the size of the network, batch size, $l_2$ weight decay, and the prediction scaling $\alpha$, we performed a small grid search.
In order to choose between Tanh and Sigmoid activation at the last layer of the network (see Figure~\ref{fig: network architecture}), we found that looking at the performance of PCFR$^{(+)}$ is a good indicator.
If the performance of PCFR$^{(+)}$ on the domain of interest is good, using Tanh should be the default option. 
This is because it is easy for the network to imitate PCFR$^{(+)}$ by making the prediction zero (assuming $\alpha=1$).
On the other hand, making a negative prediction can result in the action receiving zero probability, causing the meta-gradient to be zero for that prediction. 
This can lead to instabilities, especially on shorter training instances.

We found that, in {\tt biased\_shapley}, the value of the prediction scaling has major impart on the performance of NPCFR.
For small values of $\alpha$, the algorithm would accumulate large negative regret, forcing it to follow the best-response dynamics.
We found that for $\alpha \ge 2$ these oscillations would disappear, as the algorithm `looks sufficiently far into the future'.
Making negative predictions seems to hinder the algorithm, and we thus opt to use Sigmoid in the network for {\tt biased\_shapley}.

On Leduc poker, we observed good empirical performance of PCFR$^{(+)}$.
Following the above arguments, we used Tanh as the activation of the prediction network $\pi(\cdot|\theta)$.
We also found that large values of $\alpha$ may lead to instability.
For NPCFR$^+$ on {\tt three\_player\_leduc}, we modified the form of the prediction, see Appendix~\ref{app: extra res: 3leduc} for more details.

\section{Games}
\label{app: games}
\subsection{Biased Shapley}
\label{app: biased shapley}

The biased Shapley game is given by
\begin{equation*}
    u_1(\strategy) = \strategy_1^\top \cdot \begin{pmatrix}
        1 & 0 & \eta \\
        0 & 1 & 0 \\
        0 & 0 & 1 \\
    \end{pmatrix}\cdot \strategy_2,
    \hspace{1ex}
    u_2(\strategy) = \strategy_1^\top \cdot \begin{pmatrix}
        0 & 1 & \eta \\
        0 & 0 & 1 \\
        1 & 0 & 0 \\
    \end{pmatrix}\cdot \strategy_2.
\end{equation*}

\paragraph{Course-Correlated Equilibrium}
We want to determine the values of $\eta$ for which the joint strategy
\begin{equation*}
    \jointstrategy^* = 
    \frac{1}{6}
    \begin{pmatrix}
        1 & 1 & 0 \\
        0 & 1 & 1 \\
        1 & 0 & 1 \\
    \end{pmatrix},
\end{equation*}
is a CCE.
We focus on player one, and note that the discussion involving player two leads to similar conclusion after permuting the order of the actions.

The expected payoff of player one under $\jointstrategy^*$ is $1/2$.
Deviating to either the second or last action doesn't increase the payoff.
Deviating to the first action leads to expected payoff
\begin{equation*}
    u_1' = 1/3 + \eta / 3.
\end{equation*}
In order for $\jointstrategy^*$ to be a CCE we thus need
\begin{equation*}
    1/2 \ge 1/3 + \eta / 3 
    \hspace{3ex}\Leftrightarrow\hspace{3ex}
    \eta \le 1/2.
\end{equation*}

\paragraph{Nash equilibrium}
We will use support enumeration to find all Nash equilibria of the biased Shapley game.
Support enumeration requires the game to be non-degenerate, or for each strategy to have at most as many best-responses as the support size of that strategy.
Since the best-response in the biased Shapley game is unique, this requirement is satisfied.

For support size one, since the best-response dynamics cycle, there is no pair of strategies which are mutual best-responses.
Similarly, any strategy of support size two will have an action for at least one player in the best-response cycle, which is not in the support of those strategies.
It thus remains that the support size must be three, and the Nash equilibrium is unique.

To get the exact Nash equilibrium, each player needs to select a strategy which offers the same payoff to the other player.
The strategy of player one would thus need to satisfy
\begin{equation*}
    \sigma_{1,1} =
    \sigma_{1,2} + \eta \sigma_{1,1} =
    \sigma_{1,3} = 
    1 - \sigma_{1,1} - \sigma_{1,2},
\end{equation*}
which gives
\begin{align*}
    \sigma_{1,2} + (\eta-1) \sigma_{1,1} &=
    0,\\
    (3-\eta) \sigma_{1,1} &=
    1 .
\end{align*}
or
\begin{equation*}
    \strategy_1^\top =
    \left(
    \frac{1}{3-\eta}, \frac{1-\eta}{3-\eta}, \frac{1}{3-\eta}
    \right).
\end{equation*}
Similarly, we could obtain the Nash equilibrium strategy for player two
\begin{equation*}
    \strategy_2^\top =
    \left(
    \frac{1-\eta}{3-\eta}, \frac{1}{3-\eta}, \frac{1}{3-\eta}
    \right).
\end{equation*}

\subsection{Leduc Poker}
\label{app: games: leduc poker}

Leduc poker is a simplified version of Texas Hold'em poker.
For the $n$-player version, the deck consists of $2(n + 1)$ cards of two suits and $n + 1$ ranks.
At the beginning of the game, each player is privately dealt one card, and puts $\$1$ to the pot.
The game proceeds in two betting rounds.
In each round, players take turns in a given order, having the option to check, bet, call a previous bet, or fold.
In each round, there can be at most two consecutive bets.
The bet size is $\$2$ and $\$4$ in the first and second round respectively.
A public card is revealed after the first round.

Among the players who didn't fold, the player who's private card has the same rank as the public card wins.
If there is no such player, the highest card wins.
Finally, if there are two players with the highest rank card and no pair it is a tie, they split the pot equally.


\section{Additional Experimental Results}
\label{app: extra experiments}
\subsection{Biased Shapley}
\label{app: extre res: shapley}

\begin{figure*}[t]
    \centering
    \includegraphics[width=0.48\textwidth]{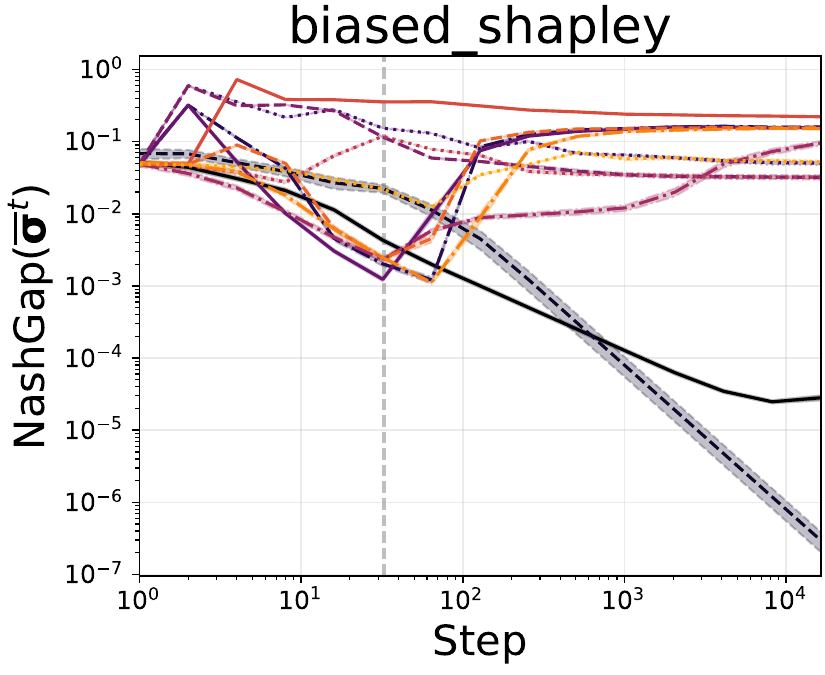}
    \includegraphics[width=0.48\textwidth]{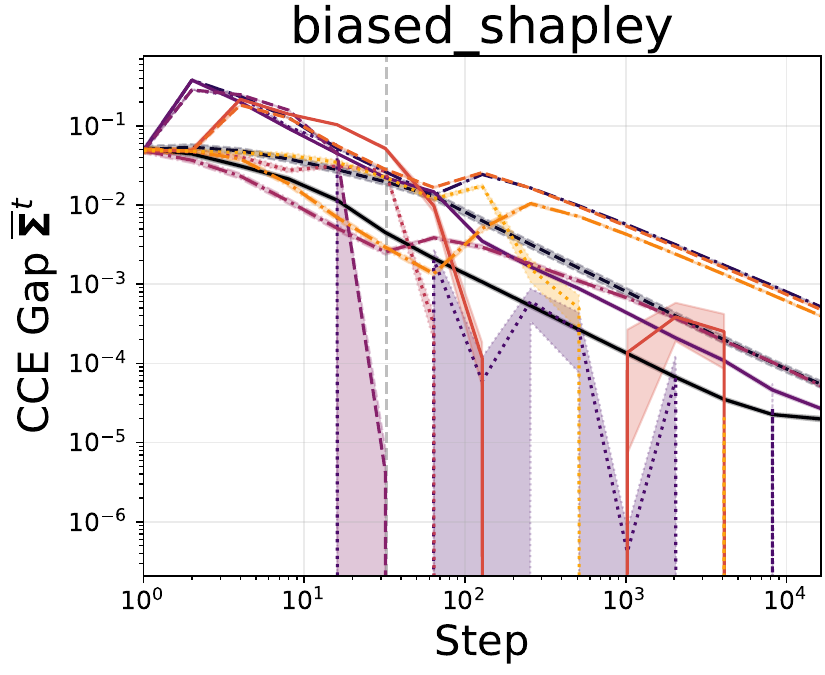}
    
    \includegraphics[width=0.8\textwidth]{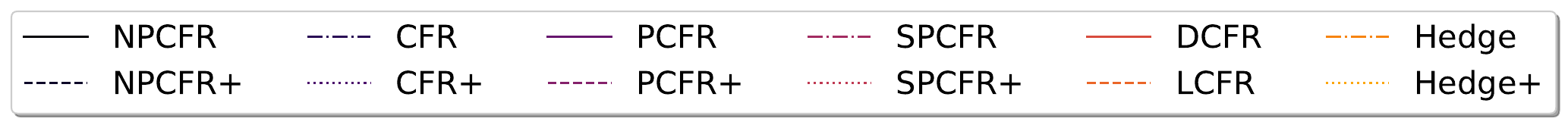}
    \caption{
    Comparison of non-meta-learned algorithms (\CFR$^{(+)}$, \PCFR$^{(+)}$, \SPCFR$^{(+)}$, Hedge$^{(+)}$, DCFR, and LCFR) with meta-learned algorithms NPCFR$^{(+)}$. NPCFR$^{(+)}$ was trained on {\tt biased\_shapley}$(0,1/2)$.
    The figures show NashGap of the average strategy profile $\overline{\strategy}^t$. 
    Vertical dashed lines separate the training (up to $T=32$ steps) and the generalization (from $T$ to $2^{14}=16,384$ steps) regimes. 
    }
    \label{fig: app: biased shapley}
\end{figure*}

\begin{figure*}
    \centering
    \subfloat[CFR]{{\includegraphics[width=0.15\textwidth]{figures/heatmaps/CFR.pdf}}
                  {\includegraphics[width=0.15\textwidth]{figures/heatmaps/CFR-decorrelated.pdf}}}\hfil
    \subfloat[CFR+]{{\includegraphics[width=0.15\textwidth]{figures/heatmaps/CFR+.pdf}}
                  {\includegraphics[width=0.15\textwidth]{figures/heatmaps/CFR+-decorrelated.pdf}}}\hfil
    \subfloat[DCFR]{{\includegraphics[width=0.15\textwidth]{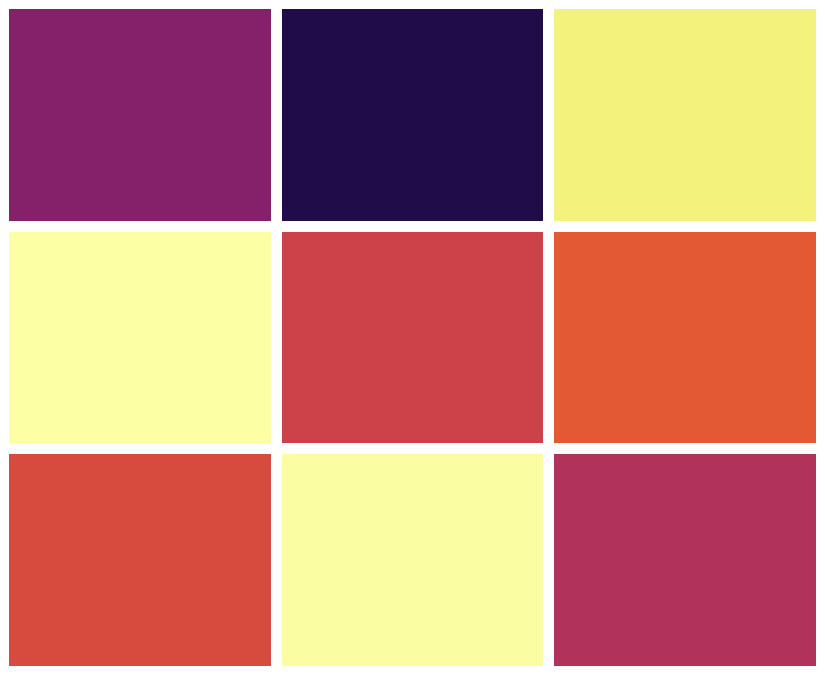}}
                  {\includegraphics[width=0.15\textwidth]{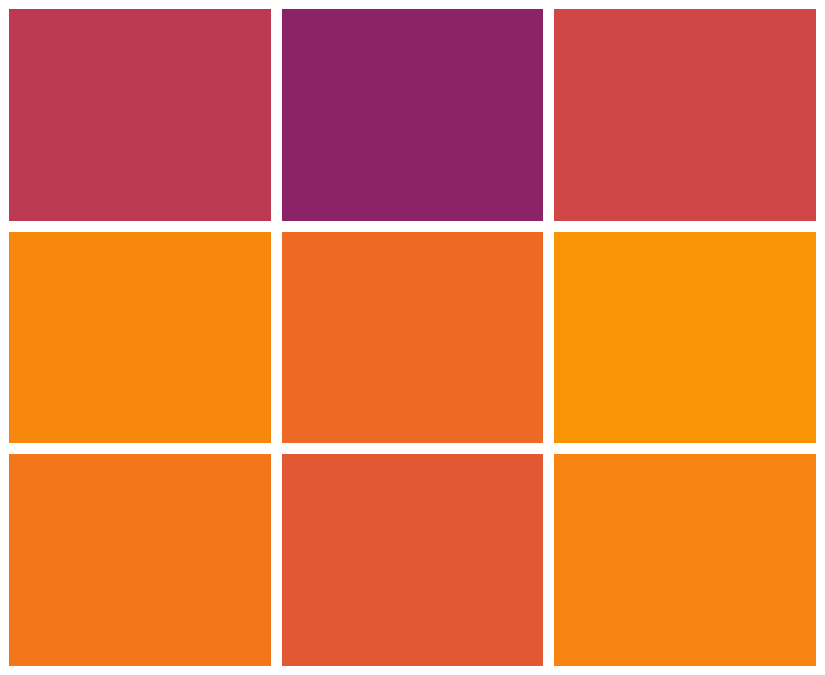}}}\hfil \\
                  
    \subfloat[Hedge]{{\includegraphics[width=0.15\textwidth]{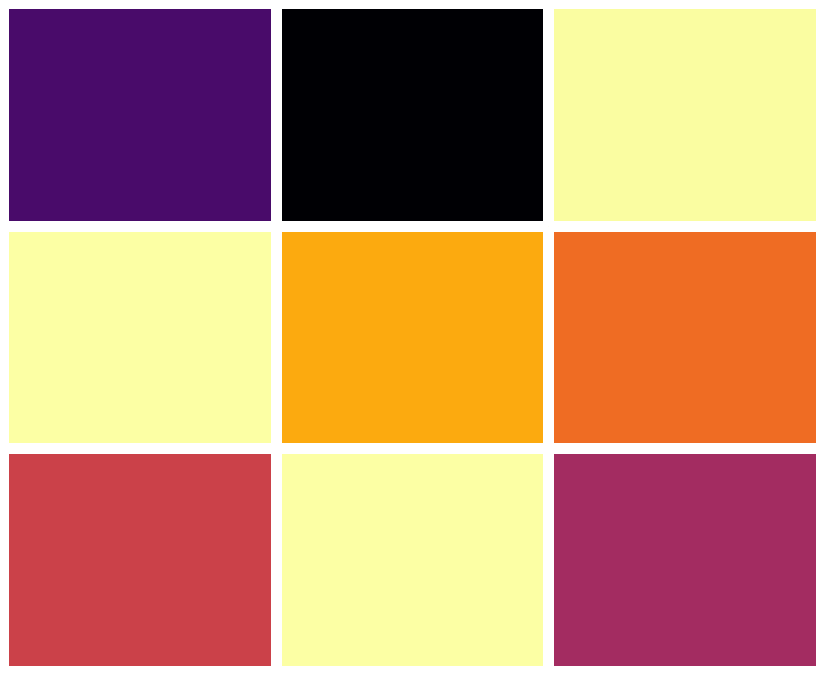}}
                  {\includegraphics[width=0.15\textwidth]{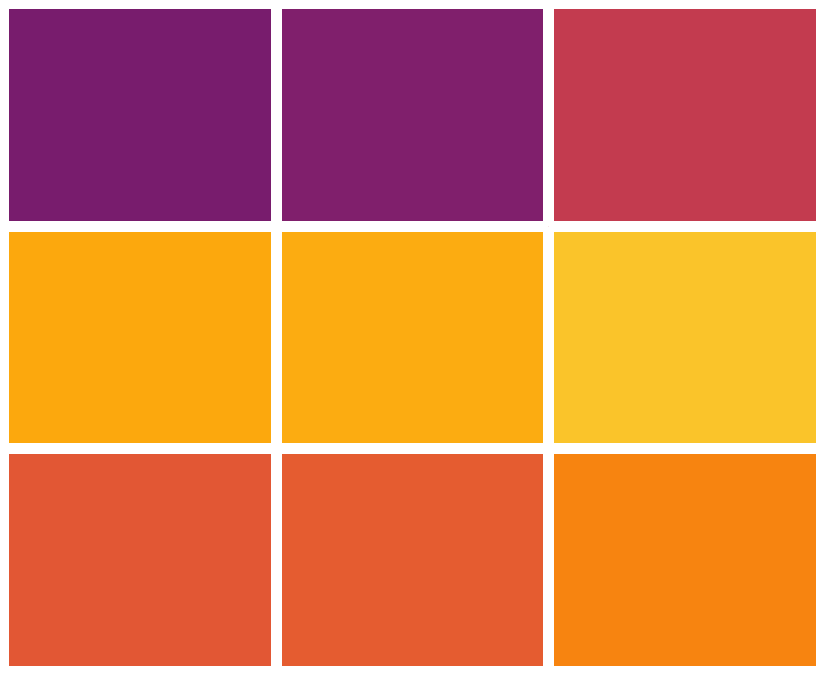}}} \hfil
    \subfloat[Hedge+]{{\includegraphics[width=0.15\textwidth]{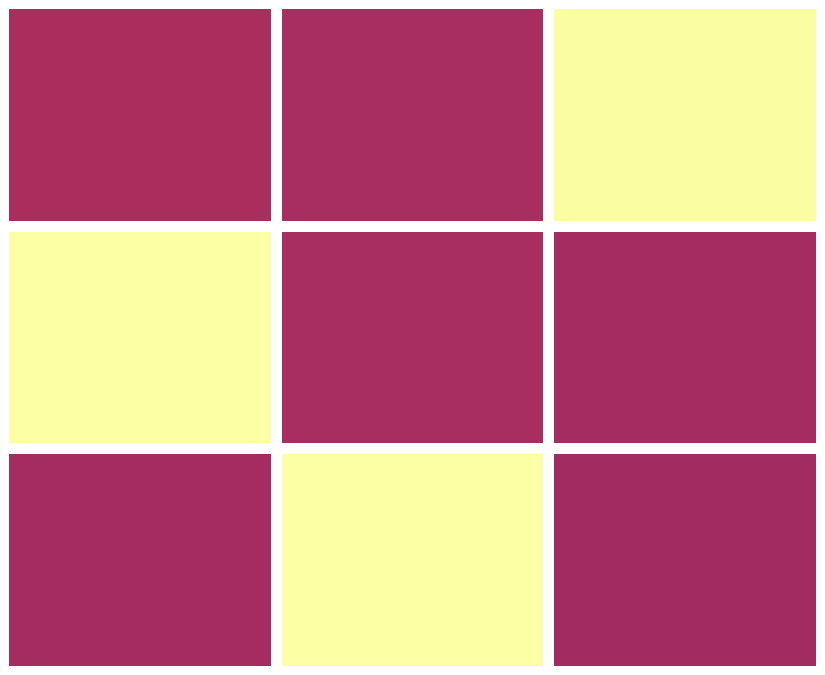}}
                  {\includegraphics[width=0.15\textwidth]{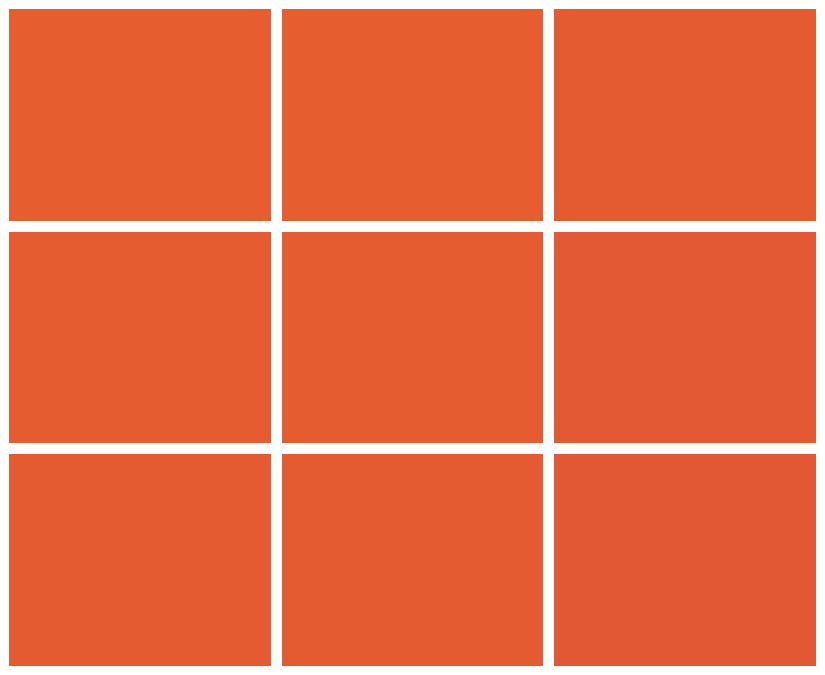}}} \hfil
    \subfloat[LCFR]{{\includegraphics[width=0.15\textwidth]{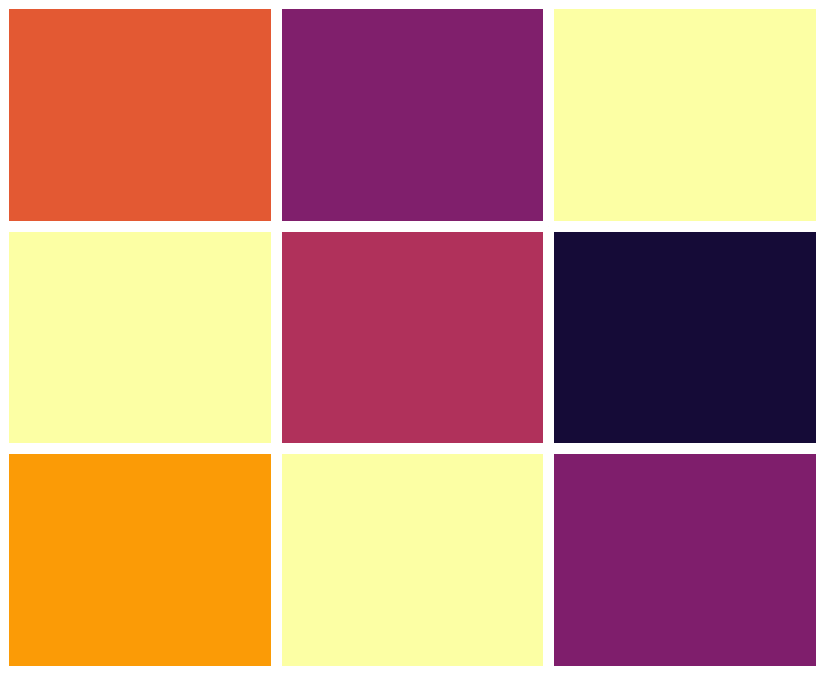}}
                  {\includegraphics[width=0.15\textwidth]{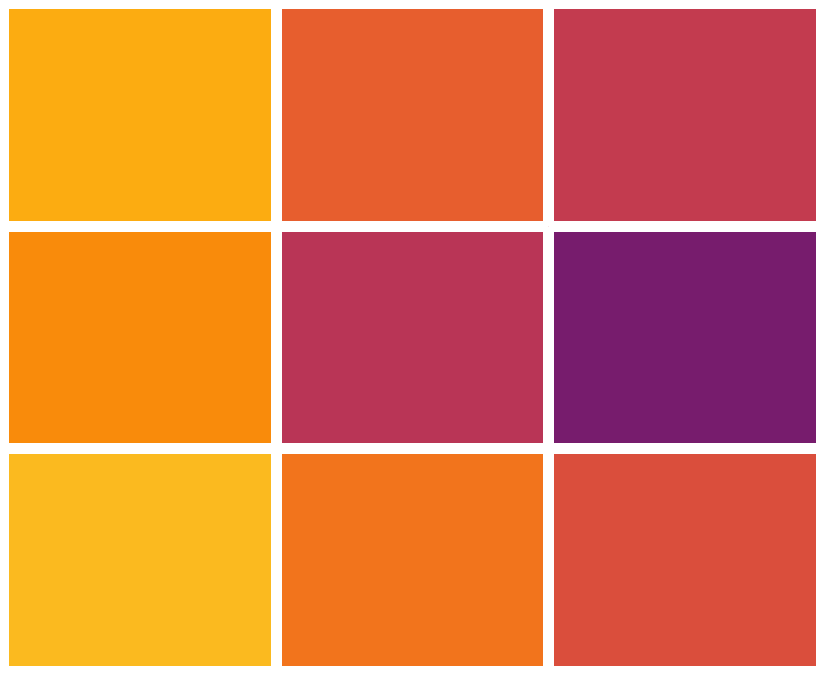}}} \hfil \\

    \subfloat[PCFR]{{\includegraphics[width=0.15\textwidth]{figures/heatmaps/PCFR.pdf}}
                  {\includegraphics[width=0.15\textwidth]{figures/heatmaps/PCFR-decorrelated.pdf}}} \hfil
    \subfloat[PCFR+]{{\includegraphics[width=0.15\textwidth]{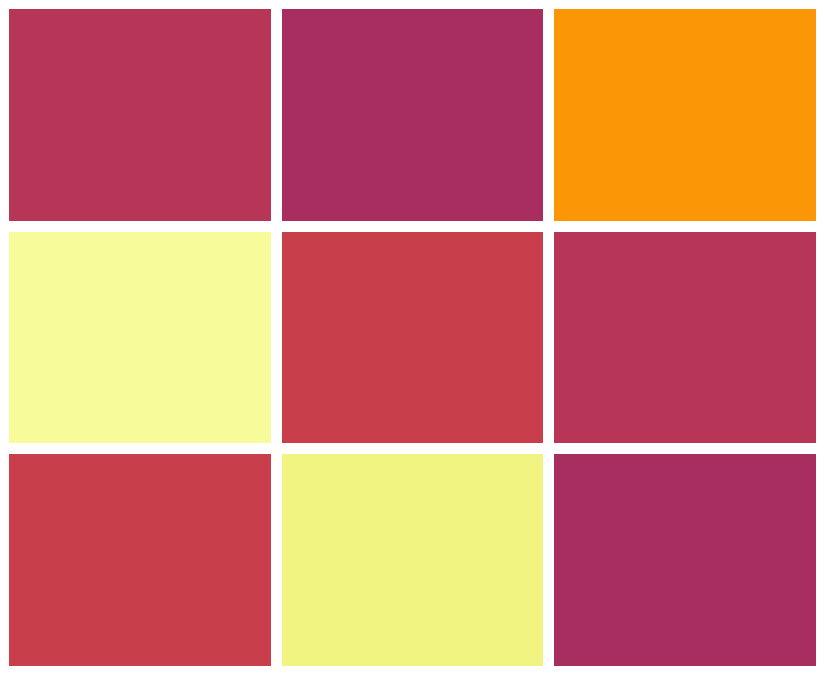}}
                  {\includegraphics[width=0.15\textwidth]{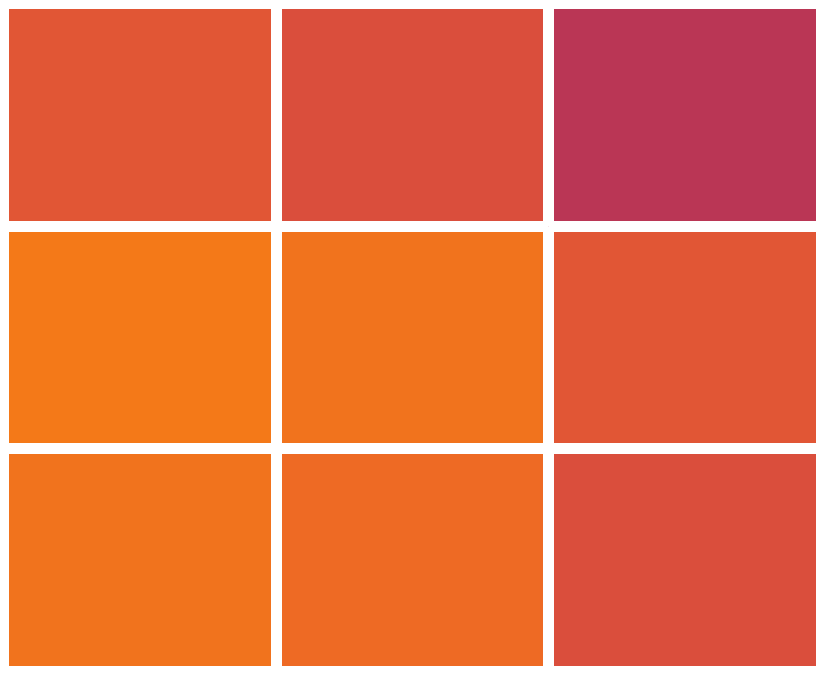}}} \hfil
    \subfloat[SPCFR]{{\includegraphics[width=0.15\textwidth]{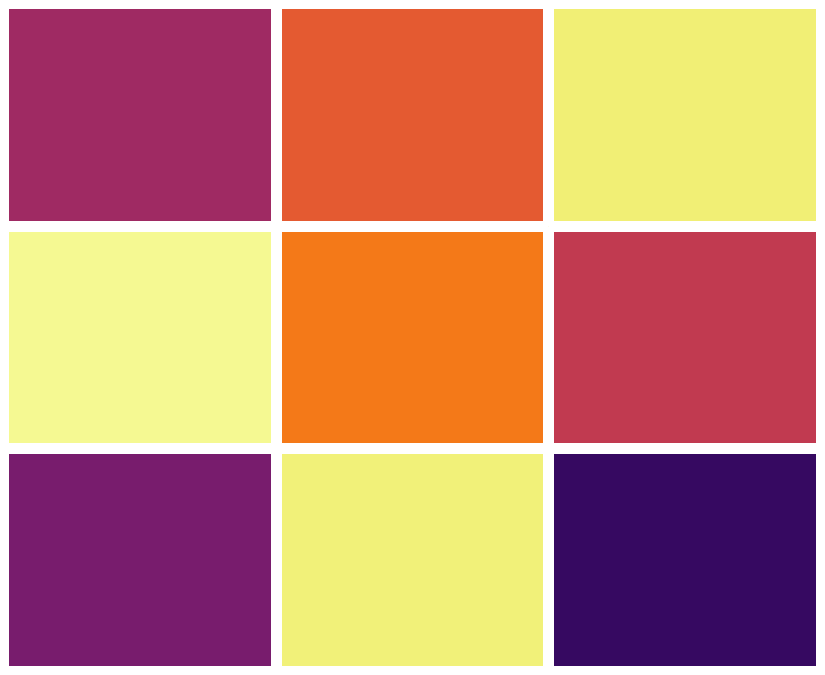}}
                  {\includegraphics[width=0.15\textwidth]{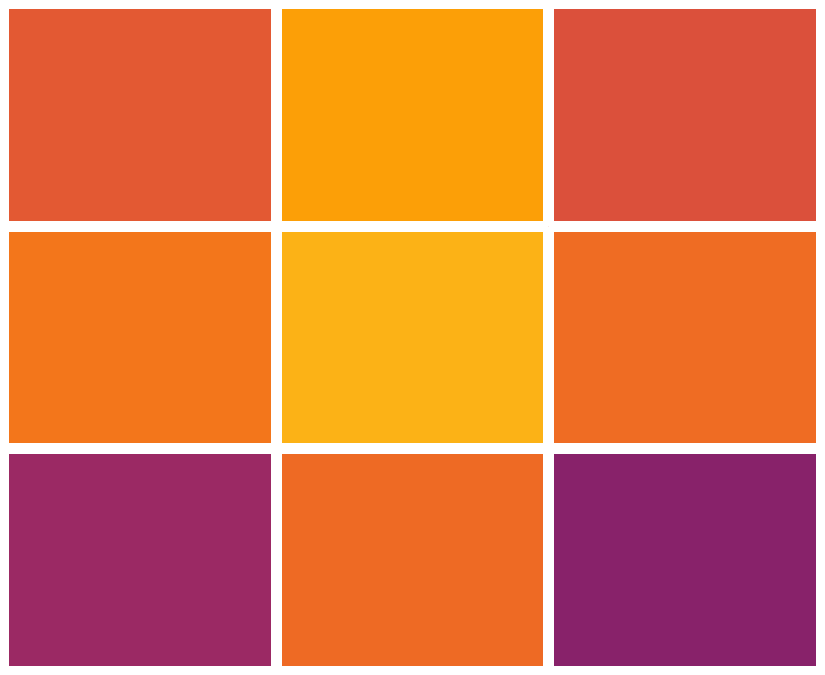}}} \hfil\\

    \subfloat[SPCFR+]{{\includegraphics[width=0.15\textwidth]{figures/heatmaps/SPCFR+.pdf}}
                  {\includegraphics[width=0.15\textwidth]{figures/heatmaps/SPCFR+-decorrelated.pdf}}} \hfil
    \subfloat[NPCFR]{{\includegraphics[width=0.15\textwidth]{figures/heatmaps/NPCFR.pdf}}
                  {\includegraphics[width=0.15\textwidth]{figures/heatmaps/NPCFR-decorrelated.pdf}}} \hfil
    \subfloat[NPCFR+]{{\includegraphics[width=0.15\textwidth]{figures/heatmaps/NPCFR+.pdf}}
                  {\includegraphics[width=0.15\textwidth]{figures/heatmaps/NPCFR+-decorrelated.pdf}}} \hfil

    \includegraphics[width=0.7\textwidth]{figures/heatmaps/colorbar.pdf}

    \caption{The empirical average joint strategy profiles found by regret minimizers $\overline{\jointstrategy}^T$ (left) and its marginalized version (right) found on {\tt biased\_shapley}$(0,1/2)$ after $T=2^{14}$ steps; see Eq.~\eqref{eq: shapley cce}. 
    Darker colors indicate higher probability under $\overline{\jointstrategy}^T$, and minimal differences between left and right figures imply the joint strategy is marginalizable.}
    \label{fig: app: heatmaps}
\end{figure*}

\newcolumntype{P}[1]{>{\centering\arraybackslash}p{#1}}

\begin{table*}[t]
\centering
\begin{tabular}{|c||P{5ex}|P{5ex}|P{5ex}|P{5ex}|P{6.5ex}|P{6.5ex}|P{5ex}|P{5ex}|P{5ex}|P{5ex}||P{5ex}|P{5ex}|}
\hline
NashGap & \multicolumn{2}{c|}{CFR$^{(+)}$} & \multicolumn{2}{c|}{PCFR$^{(+)}$} & DCFR & LCFR & \multicolumn{2}{c|}{SPCFR$^{(+)}$} & \multicolumn{2}{c||}{Hedge$^{(+)}$} & \multicolumn{2}{c|}{NPCFR$^{(+)}$} \\ \hline\hline
$10^{-1}$& {\bf 1} & 0.96 & {\bf 1} & 0.97 &  0.97 & 0.97 & {\bf 1} & {\bf 1} & {\bf 1} & {\bf 1} & {\bf 1}  &  {\bf 1} \\ \hline
$10^{-2}$& 0.78 & 0.09 & {\bf 1} & 0.09 & 0.09 & 0.42 & {\bf 1} & 0.09 & {\bf 1} & 0.36 & {\bf 1} & {\bf 1} \\ \hline
$10^{-3}$& 0.09 & 0.02 & 0.91 & 0.02 & 0.02 & 0.02 & {\bf 1} & 0.02 & {\bf 1} & 0.06 & {\bf 1} &  {\bf 1} \\ \hline
$10^{-4}$& 0.02 & 0 & 0.52 & 0 & 0 & 0   & 0.54 & 0 & 0.53 & 0 & {\bf 1} &  {\bf 1} \\ \hline
$10^{-5}$& 0 & 0 & 0.02  & 0 & 0 & 0 & 0.11 & 0 & 0.25 & 0 & 0.14 & {\bf 1}  \\ \hline
\end{tabular}
\caption{The fraction of games from {\tt biased\_shapley} each algorithm can solve to a given NashGap within $2^{14}=16,384$ steps. 
For the algorithms marked $^{(+)}$, the left column show the standard version, while the right shows the `plus'.
}
\label{tab: shapley threasholds extended}
\end{table*}

\begin{table*}[]
\centering
\begin{tabular}{l|cc|cc|cc|}
\cline{2-7}
                                 & \multicolumn{2}{c|}{{\tt biased\_shapley}}   & \multicolumn{2}{c|}{{\tt leduc\_poker}}      & \multicolumn{2}{c|}{{\tt three\_player\_leduc}} \\ \cline{2-7} 
                                 & \multicolumn{1}{c|}{NPCFR$^{(+)}$} & Other & \multicolumn{1}{c|}{NPCFR$^{(+)}$} & Other & \multicolumn{1}{c|}{NPCFR$^{(+)}$}     & Other     \\ \hline
\multicolumn{1}{|l|}{Training}   & \multicolumn{1}{c|}{0.2}         & -     & \multicolumn{1}{c|}{5}         & -     & \multicolumn{1}{c|}{42}             & -         \\ \hline
\multicolumn{1}{|l|}{Evaluation} & \multicolumn{1}{c|}{0.1}         &   0.01    & \multicolumn{1}{c|}{2}         &  2     & \multicolumn{1}{c|}{2}             &  2         \\ \hline
\end{tabular}
\caption{The number of GB of memory required to train and deploy the meta-learned regret minimizer, compared to the prior regret minimizers. 
We used $T=32$ meta-training steps to obtain the training values.
For evaluation, the memory requirements don't depend on the number of steps.}
\label{app: tab: memory requirements}
\end{table*}

Convergence to a Nash equilibrium using our proposed method requires minimizing both the distance of the joint strategy to a CCE and our meta-loss. A marginalizable joint strategy that is not a CCE is also not a Nash equilibrium, so verifying that the non-meta-learned algorithms converge to a CCE validates our meta-learning approach.
Figure~\ref{fig: app: biased shapley} shows that all evaluated non-meta-learned regret minimizers converge to a CCE in {\tt biased\_shapley}---
implying their failure to converge to a Nash equilibrium is a result of correlation between the strategies of the individual players.

Figure~\ref{fig: app: heatmaps} shows the joint strategy profiles found by the remaining regret minimizers in {\tt biased\_shapley}.
The evaluated non-meta-learned algorithms all converge to an approximation of
$\jointstrategy^*$~(\ref{app: biased shapley}) that is not marginalizable.
These heatmaps offer a visual interpretation of the correlation in the joint strategy that is minimized through our meta-loss.
NPCFR and NPCFR$^+$ find marginalizable joint strategies that are necessarily closer approximations to a Nash equilibrium.

Finally, we study the chance that, on a random sample from {\tt biased\_shapley}$(0,1/2)$, a particular regret minimizer finds a strategy with a given NashGap.
To do this, we sample 64 games and run each algorithm for $2^{14}$ regret minimization steps.
We evaluate the strategy every $2^{k}$ steps, $k\in{0, \dots, 14}$, and record the lower NashGap encountered along the trajectory.
We summarize our findings in Table~\ref{tab: shapley threasholds extended}.
The non-meta-learned algorithms exhibit poor performance, typically being unable to obtain even ${\rm NashGap} \approx 10^{-3}$.
This is particularly true for the `plus' versions.
In contrast, the meta-learned algorithms can reliably obtain even very precise approximations of a Nash equilibrium.

\subsection{Two-Player Leduc Poker}
\label{app: extra res: 2leduc}

\begin{figure*}[t]
    \centering
    \includegraphics[width=0.48\textwidth]{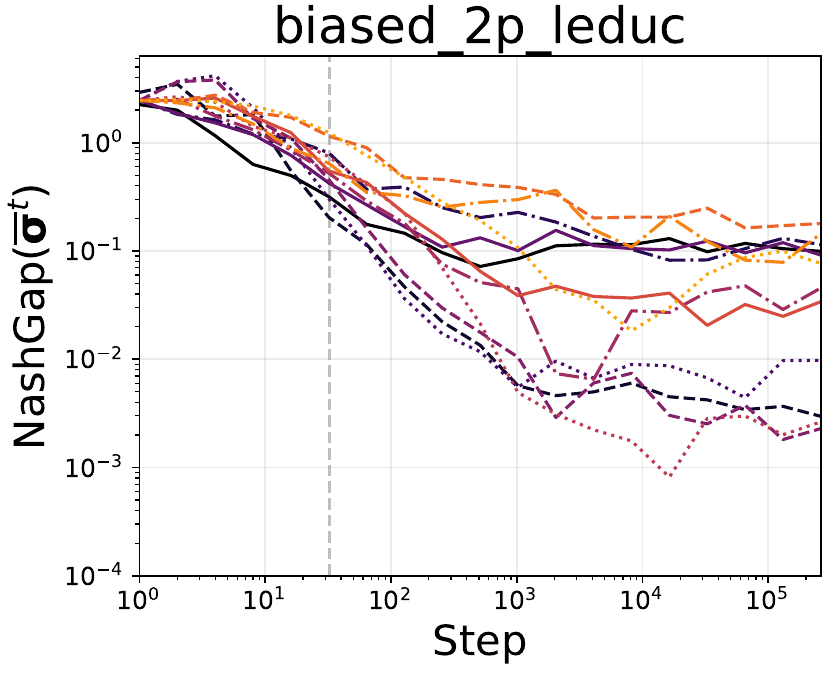}
    \includegraphics[width=0.48\textwidth]{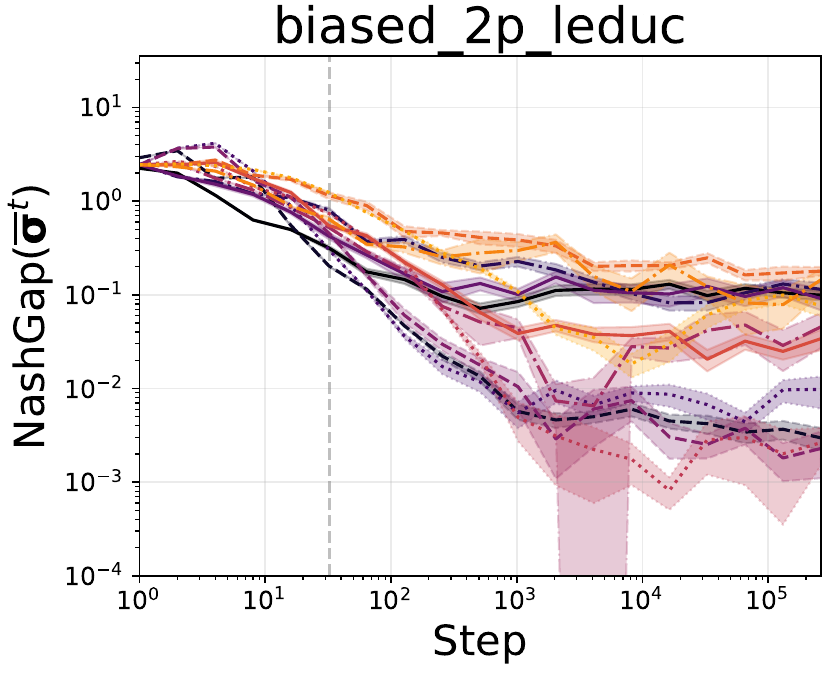}\\
    
    \includegraphics[width=0.8\textwidth]{figures/legend.pdf}
    \caption{
    Comparison of non-meta-learned algorithms (\CFR$^{(+)}$, \PCFR$^{(+)}$, \SPCFR$^{(+)}$, Hedge$^{(+)}$, DCFR, and LCFR) with meta-learned algorithms NPCFR$^{(+)}$. NPCFR$^{(+)}$ was trained on \twoPlLeduc{}.
    The figures show NashGap of the average strategy profile $\overline{\strategy}^t$. 
    Vertical dashed lines separate the training (up to $T=32$ steps) and the generalization (from $T$ to $2^{18}=262,144$) regimes.
    The colored areas show standard errors.
    }
    \label{fig: 2 pl leduc}
\end{figure*}

\begin{table*}[t]
\centering
\begin{tabular}{|c||P{5ex}|P{5ex}|P{5ex}|P{5ex}|P{6.5ex}|P{6.5ex}|P{5ex}|P{5ex}|P{5ex}|P{5ex}||P{5ex}|P{5ex}|}
\hline
NashGap & \multicolumn{2}{c|}{CFR$^{(+)}$} & \multicolumn{2}{c|}{PCFR$^{(+)}$} & DCFR & LCFR & \multicolumn{2}{c|}{SPCFR$^{(+)}$} & \multicolumn{2}{c||}{Hedge$^{(+)}$} & \multicolumn{2}{c|}{NPCFR$^{(+)}$} \\ \hline\hline
$10^{-1}$& {\bf 1} & {\bf 1} & {\bf 1} & {\bf 1} & {\bf 1} & {\bf 1} & {\bf 1} & {\bf 1} & 0 & 0.29 & {\bf 1} & {\bf 1} \\ \hline
$10^{-2}$& 0 & {\bf 1} & 0.03 & {\bf 1} &  0.13 & 0 & 0.54 & {\bf 1} & 0 & 0.29 & 0.84 & {\bf 1} \\ \hline
$10^{-3}$& 0 & 0 & 0 & 0.87 &  0 & 0 & 0 & 0.72 & 0 & 0 & 0.73 & {\bf 0.98} \\ \hline
$10^{-4}$& 0 & 0 & 0 & 0.48 &  0 & 0 & 0 & 0.21 & 0 & 0 & 0.73 & {\bf 0.96} \\ \hline
$10^{-5}$& 0 & 0 & 0  & 0.16 & 0 & 0 & 0 & 0.11 & 0 & 0 & 0.73 & {\bf 0.96}  \\ \hline
\end{tabular}
\caption{The fraction of games from \twoPlLeduc{} each algorithm can solve to a given NashGap within $2^{18}=262,144$ steps.  
For the algorithms marked $^{(+)}$, the left column show the standard version, while the right shows the `plus'.
}
\label{tab: 2leduc threasholds extended}
\end{table*}

Following the experiments in normal-form, we investigate NashGap of strategies produced by each regret minimizer in Figure~\ref{fig: 2 pl leduc}.
We find that the NashGap average across the games is similar between the non-meta-learned and meta-learned algorithms.
However, as is shown in Table~\ref{tab: 2leduc threasholds extended}, the meta-learned algorithms reliably encounter very good approximations of Nash equilibria along the regret minimization trajectory.
We speculate that this is a result of suboptimal generalization.
Our predictors $\pi(\cdot|\theta)$ were meta-learned to minimize the correlations only for $T=32$ steps, but evaluated on a much longer scale.
The strategies of NPCFR$^{(+)}$ thus remain uncorrelated only to a point, which is insufficient for the $\mathcal{O}(1/\sqrt{T})$ to decay sufficiently close to zero, see Eq.~\eqref{eq: marginal thm bound}.

For the non-meta-learned algorithms, the CCE gap~\eqref{eq: cfr bound} shows that at the end of evaluation, the strategy is close to a CCE. However, the NashGap remains high, implying that the CCE must be strictly correlated.
Note this doesn't guarantee the non-meta-learn algorithms will not eventually converge closer to a Nash equilibrium.

\subsection{Three-Player Leduc Poker}
\label{app: extra res: 3leduc}

    

\begin{figure*}[t]
    \centering
    \includegraphics[width=0.48\textwidth]{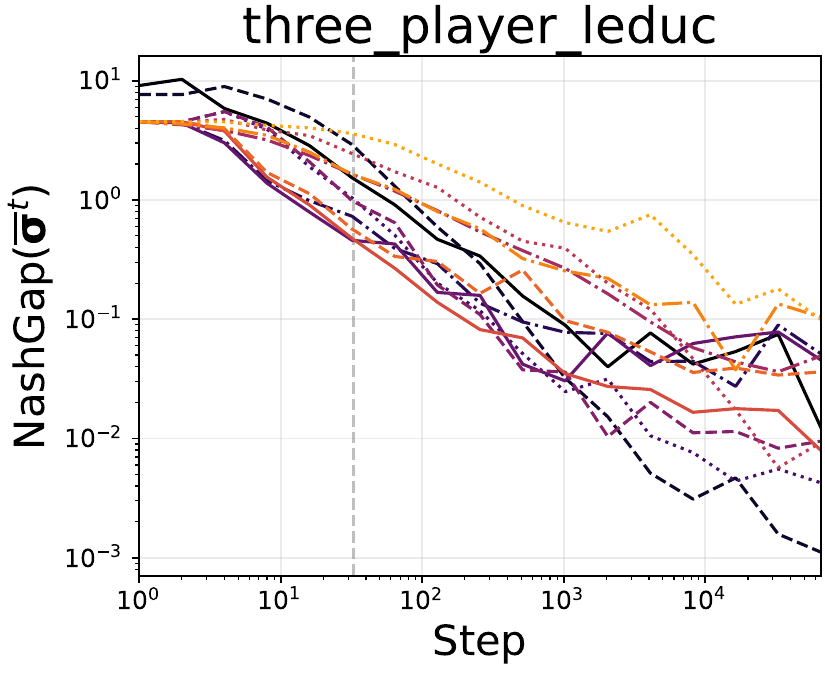}\\
    
    \includegraphics[width=0.8\textwidth]{figures/legend.pdf}
    \caption{
    Comparison of non-meta-learned algorithms (\CFR$^{(+)}$, \PCFR$^{(+)}$, \SPCFR$^{(+)}$, Hedge$^{(+)}$, DCFR, and LCFR) with meta-learned algorithms NPCFR$^{(+)}$. NPCFR$^{(+)}$ was trained on {\tt three\_player\_leduc}.
    The figures show NashGap of the average strategy profile $\overline{\strategy}^t$. 
    Vertical dashed lines separate the training (up to $T=32$ steps) and the generalization (from $T$ to $2^{16}=65,536$) regimes.
    }
    \label{fig: 3 pl leduc}
\end{figure*}
\begin{table*}[t]
\centering
\begin{tabular}{|P{6ex}|P{6ex}|P{6ex}|P{6ex}|P{6.5ex}|P{6.5ex}|P{6ex}|P{6ex}|P{6ex}|P{6ex}||P{6ex}|P{6ex}|}
\hline
 \multicolumn{2}{|c|}{CFR$^{(+)}$} & \multicolumn{2}{c|}{PCFR$^{(+)}$} & DCFR & LCFR & \multicolumn{2}{c|}{SPCFR$^{(+)}$} & \multicolumn{2}{c||}{Hedge$^{(+)}$} & \multicolumn{2}{c|}{NPCFR$^{(+)}$} \\ \hline\hline
 0.027 & 0.004 & 0.030 & 0.008 & 0.008 & 0.034 & 0.036 & 0.006 & 0.038 & 0.099 & 0.012 & {\bf 0.001} \\ \hline
\end{tabular}
\caption{The lowest value of NashGap encountered on {\tt three\_player\_leduc} for each algorithm.
For the algorithms marked $^{(+)}$, the left column show the standard version, and right shows the `plus'.
}
\label{tab: 3leduc poker best nashconv}
\end{table*}

Finally, we investigate a fixed game, in particular the three-player version of Leduc poker; see Appendix~\ref{app: games: leduc poker}. 
The NashGap of strategies produced by each regret minimizer in Figure~\ref{fig: 3 pl leduc}.
Like in the case of \twoPlLeduc{}, we evaluate the strategy of each algorithm only at $2^k$ steps.
We evaluate algorithms on {\tt three\_player\_leduc} for $2^{16} = 65,536$ steps in total.
We report the lowest value of NashGap each algorithm was able to find in Table~\ref{tab: 3leduc poker best nashconv}.
Among the algorithms we evaluated, those using alternating updates (for us that mean algorithms marked `plus', and DCFR) consistently outperform their standard counterparts.
One notable exception is the Hedge algorithm, for which the `plus' modification performed considerably worse.
The best approximation of a Nash equilibrium among the non-meta-learned algorithms was obtained by CFR$^+$, yielding ${\rm NashGap}=0.004$.

The meta-learned versions of NPCFR, resp. NPCFR$^{(+)}$ outperform all algorithms without / with the `plus' modification.
The lowest value of NashGap we found was obtained by NPCFR$^+$ with ${\rm NashGap}=0.001$.
For this particular experiment, we were inspired by the strong empirical performance of CFR$^+$, and we modified the form of the prediction to
\begin{equation*}
    \prediction^{t+1} \gets \alpha \pi(\regret^t, \cumregret^t, \vv{e}_s|\theta), 
\end{equation*}
see also Algorithm~\ref{algo:nprm}, line 9.

\subsection{Out-of-Distribution Convergence}
\label{app: extra res: out of dist}

The meta-learned algorithms are implicitly tailored to the domain they were trained on.
As such, when evaluated out-of-distribution, they can lose their empirical performance.
Figure~\ref{fig: out of dictibution} illustrates this by showing evaluation of NPCFR$^{(+)}$ on {\tt biased\_shapley}$(-1,0)$.
The algorithms were meta-trained on {\tt biased\_shapley}$(0,1/2)$, see Section~\ref{ssec: matrix games} and Figure~\ref{fig: app: biased shapley}.

Since NPCFR$^{(+)}$ are regret minimizers, they are guaranteed to converge to a CCE.
However, they both fail to converge to a Nash equilibrium out-of-distribution.

\begin{figure*}[t]
    \centering
    \includegraphics[width=0.48\textwidth]{figures/shapley/nash_conv_biased_shapley_error.pdf}
    \includegraphics[width=0.48\textwidth]{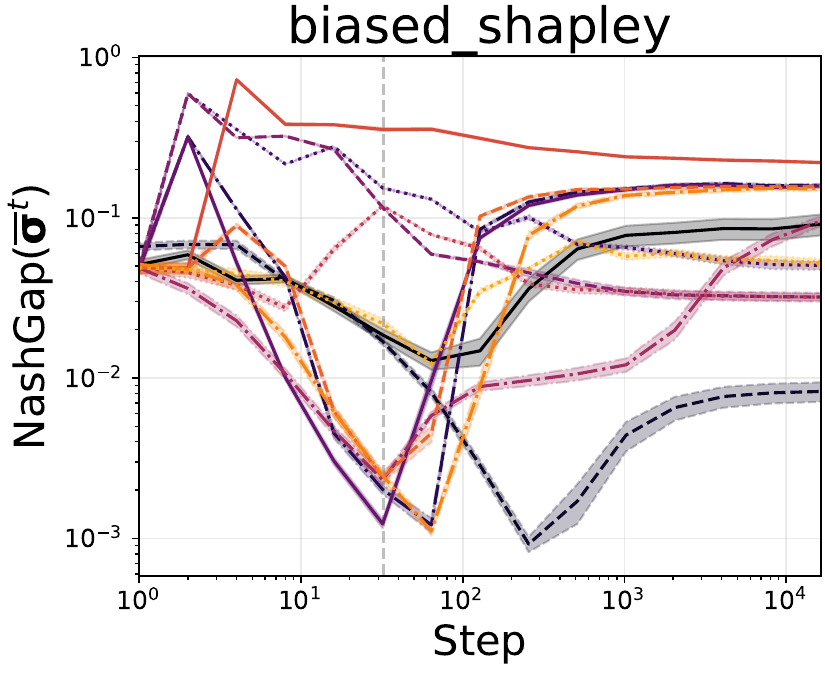}\\
    
    \includegraphics[width=0.8\textwidth]{figures/legend.pdf}
    \caption{
    Comparison of non-meta-learned algorithms (\CFR$^{(+)}$, \PCFR$^{(+)}$, \SPCFR$^{(+)}$, Hedge$^{(+)}$, DCFR, and LCFR) with meta-learned algorithms NPCFR$^{(+)}$. NPCFR$^{(+)}$ was trained on {\tt biased\_shapley}$(0,1/2)$, and evaluated in-distribution (left) or out-of-distribution on {\tt biased\_shapley}$(-1,0)$ (right).
    The figures show NashGap of the average strategy profile $\overline{\strategy}^t$, the right figure includes standard errors. 
    Vertical dashed lines separate the training (up to $T=32$ steps) and the generalization (from $T$ to $2^{14}=16,384$ steps) regimes. 
    The colored areas show standard errors.
    }
    \label{fig: out of dictibution}
\end{figure*}



\end{document}